\documentclass[12pt]{article}

\textwidth 6.5in \oddsidemargin 0in \textheight 9in \topmargin -0.5in

\usepackage{amsthm,amsmath,amssymb}
\usepackage{hyperref}
\usepackage{enumerate}
\usepackage{comment}
\usepackage[misc,geometry]{ifsym}
\usepackage{bm}
\usepackage{mathrsfs}
\usepackage{makeidx}

\newtheorem{thm}{Theorem}[section]
\newtheorem{co}[thm]{Corollary}
\newtheorem{lem}[thm]{Lemma}

\newtheorem{assumption}[thm]{Assumption}

\newtheorem{definition}[thm]{Definition}
\newenvironment{de}{\begin{definition}\rm}{\end{definition}}
\newtheorem{example}[thm]{Example}
\newenvironment{exmp}{\begin{example}\rm}{\end{example}}
\newtheorem{remark}[thm]{Remark}
\newenvironment{rem}{\begin{remark}\rm}{\end{remark}}

\makeindex

\title{An Optimization Approach to the Langberg-M\'{e}dard Multiple Unicast Conjecture~\footnote{A preliminary version~\cite{CH18} of this work has been presented in IEEE ISIT 2018.}~\thanks{This research is partly supported by a grant from the Research Grants Council of the Hong Kong Special Administrative Region, China (Project No. 17301017).}
}

\author{\small \begin{tabular}{ccc}
Kai Cai & Guangyue Han\\
The University of Hong Kong& The University of Hong Kong\\
email: kcai@hku.hk & email: ghan@hku.hk\\
\end{tabular}}

\date{{\normalsize \today}}

\begin{document} \maketitle

\begin{abstract}
The Langberg-M\'{e}dard multiple unicast conjecture claims that for any strongly reachable $k$-pair network, there exists a multi-flow with rate $(1,1,\dots,1)$. In a previous work, through combining and concatenating the so-called elementary flows, we have constructed a multi-flow with rate at least $(\frac{8}{9}, \frac{8}{9}, \dots, \frac{8}{9})$ for any $k$. In this paper, we examine an optimization problem arising from this construction framework. We first show that our previous construction yields a sequence of asymptotically optimal solutions to the aforementioned optimization problem. And furthermore, based on this solution sequence, we propose a perturbation framework, which not only promises a better solution for any $k \mod 4 \neq 2$ but also solves the optimization problem for the cases $k=3, 4, \dots, 10$, accordingly yielding multi-flows with the largest rate to date.
\end{abstract}

\section{Introduction} \label{Introduction}

We consider a directed $k$-pair network $\mathcal N=(V, A, S, R)$, which consists of an underlying digraph $D=(V, A)$, $k$ senders $S=\{s_1, s_2, \dots, s_k\}\subseteq V$ and $k$ receivers $R=\{r_1, r_2, \dots, r_k\}\subseteq V$. Let $\overline{\mathcal{N}}$ denote the underlying undirected network of $\mathcal{N}$, where the orientation in $\mathcal{N}$ is ignored. Throughout this paper, we assume that each arc in $\mathcal{N}$ (and as a result, each edge in $\overline{\mathcal{N}}$) is of unit capacity. The network coding rate $\mathbf{R}_{c}(\mathcal{N})$ is a real vector $(d_1, d_2, \dots, d_k)$ such that $d_i$ is the transmission rate from $s_i$ to $r_i$ when using network coding along the orientation of $\mathcal{N}$, while the {\em routing rate} $\mathbf{R}_{c}(\overline{\mathcal{N}})$ is a real vector $(d_1, d_2, \dots, d_k)$ such that there exists a feasible $(s_1, s_2, \dots,s_k)$-$(r_1, r_2, \dots, r_k)$ multi-flow (see definition in Section \ref{subsection-multi-flow-basics}) over $\mathcal{N}$.

One of the most fundamental problems in the theory of network coding is {\em the multiple unicast network coding conjecture}~\cite{Li042}, or simply {\em the multiple unicast conjecture}, which states that for any $\overline{\mathcal{N}}$, the transmission rate achieved by any fractional network coding can be achieved by routing as well. Despite enhanced understanding in certain special cases, the conjecture has been doggedly resisting a series of attacks~\cite{Harv06}-\cite{Langberg09} and turned out to be one of the most hardest problems in network coding theory.

A weaker version of the conjecture, proposed by Langberg and M\'{e}dard \cite{Langberg09}, focuses on a strongly reachable $k$-pair network $\mathcal{N}$ and claims that (the ``$\leq$'' below should be interpreted in the pairwise sense)
$$
\mathbf{R}_{c}(\mathcal N)\leq \mathbf{R}_{r}(\overline{\mathcal{N}}),
$$
Here, a $k$-pair network $\mathcal{N}$ is said to be {\em strongly reachable} if there exists an $s_i$-$r_j$ directed path $P_{s_i, r_j}$ for all feasible $i, j$, and the paths $P_{s_1, r_j}, P_{s_2, r_j}, \cdots, P_{s_k, r_j}$ are {\em edge-disjoint} for each feasible $j$. Apparently, for a strongly reachable $k$-pair network $\mathcal N$, $\mathbf{R}_c(\mathcal N)\geq(1,1,\dots,1)$. So, if the multiple unicast conjecture is true, one will deduce that $\mathbf{R}_r(\overline{\mathcal{N}}) \geq (1,1,\dots,1)$. Langberg and M\'{e}dard~\cite{Langberg09} showed that $\mathbf{R}_r(\overline{\mathcal{N}}) \geq (\frac{1}{3}, \frac{1}{3}, \dots, \frac{1}{3})$, which was further improved to $\mathbf{R}_r(\overline{\mathcal{N}}) \geq (\frac{8}{9}, \frac{8}{9}, \dots, \frac{8}{9})$ in~\cite{CH15} by way of combining and concatenating some so-called elementary flows.

In this paper, we will examine a sequence of optimization problems $\{\mathcal{P}_{\mathcal{S}_k}\}$, whose optimal solutions will naturally give lower bounds on $\mathbf{R}_r(\overline{N})$. We first prove that our construction in~\cite{CH15} yields $\{\mathcal{C}_k^*\}$, a sequence of asymptotically optimal solutions to $\{\mathcal{P}_{\mathcal{S}_k}\}$. And furthermore, based on $\{\mathcal{C}_k^*\}$, we propose a perturbation framework to obtain $\{\mathcal{C}_k^{**}\}$, which promises a better solution than $\mathcal{C}_k^*$ for any $k \mod 4 \neq 2$ and further solves $\mathcal{P}_{\mathcal{S}_k}$ for $k=3, 4, \dots, 10$, and thereby yielding multi-flows with the largest rate to date (see Section~\ref{routing-rate}). Here we note that a prototypical version of the optimization problem $\mathcal{P}_{\mathcal{S}_k}$ was first proposed in~\cite{CH17}. The progress made in this work is due to a (rather) delicate study of structural and analytic aspects of the optimization problem, which include symmetries, asymptotics, behaviors upon perturbation and so on.

The rest of paper is organized as follows. In Section~\ref{section-the-basic}, we give some basic notions and facts in the theory of multi-flows, and we introduce the optimization problem $\mathcal{P}_{\mathcal{S}_k}$ and elaborate its connections with the theory of multi-flows. In Section~\ref{section-sym-asym}, we investigate the symmetries of the optimization problem $\mathcal{P}_{\mathcal{S}_k}$ and give the limit of its optimal value as $k$ tends to infinity. We introduce the so-called strong homogeneous flow $\mathcal C^*_k$ in Section~\ref{section-shf}, where we first show that $\{\mathcal C^*_k\}$ is a sequence of asymptotically optimal solution for $\{\mathcal P_{\mathcal S_k}\}$ and then prove that it gives the exact optimal solution when $k=1, 2, 6, 10$. In Section~\ref{section-perturbation}, we propose a unified framework to perturb $\mathcal C^*_k$ to obtain a better solution $\mathcal C^{**}_k$ for any $k \mod 4 \neq 2$. In Sections~\ref{section-perturb}, we give $\mathcal C^{**}_k$ for $k=3, 4, 5, 7, 8, 9$ explicitly, and we further establish the optimality of these $\mathcal C^{**}_k$ and their uniqueness for achieving optimality. Finally, the paper is concluded in Section~\ref{section-conclusion}.

\section{Mathematical Preliminaries}\label{section-the-basic}

\subsection{Multi-Flow Basics}\label{subsection-multi-flow-basics}

Let $D=(V, A)$ be a directed graph with vertex set $V$ and arc set $A$. For an arc $a=(u,v)\in A$, let $tail(a)$, $head(a)$ denote its $tail$ $u$, $head$ $v$, respectively. For any $s, r \in V$, an $s$-$r$ {\em flow} is a function $f: A \rightarrow \mathbb{R}$ satisfying the following {\em flow conservation law:} for any $v \notin \{s, r\}$,
\begin{equation}\label{flow conservation law}
excess_f(v)=0,
\end{equation}
where
\begin{equation}
excess_{f}(v):=\sum_{a\in A: \; head(a)=v} f(a)-\sum_{a\in A: \; tail(a)=v} f(a).
\end{equation}
It is easy to see that $|excess_{f}(s)|=|excess_{f}(r)|$, which is called the {\em value (or rate)} of $f$. Note that the above definitions naturally give rise to a {\em fractional flow} on the underlying undirected graph of $D$, and it is not needed to differentiate an $s$-$r$ flow from an $r$-$s$ flow. This is different from Schrijver~\cite{Schrijver03}, where a flow must be a non-negative function.

There are two kinds of operations on the flows defined as above. Firstly, the set of all $s$-$r$ flows naturally forms a linear space over $\mathbb R$; particularly, for any two $s$-$r$ flows $f_1, f_2$ and scalars $u, v\in \mathbb R$, and the function $f=u f_1+ v f_2$ is again an $s$-$r$ flow. Secondly, let $f$ be an $s$-$t$ flow and $g$ be a $t$-$r$ flow such that
$$
excess_f(t)=-excess_g(t).
$$
Then by definition, $f+g$ is an $s$-$r$ flow, which is called the {\em concatenation} of $f$ and $g$. Adopting the notational convention in defining the concatenation of paths in~\cite{Schrijver03}, the concatenation of $f$ and $g$ will be denoted by $f g$.

An $(s_1, s_2, \dots,s_k)$-$(r_1, r_2, \dots, r_k)$ {\em multi-flow} refers to a set of $k$ flows  $\mathcal{F}=\{f_{i}: i=1, 2, \dots, k\}$, where each $f_{i}$ is an $s_i$-$r_i$ flow. We say $\mathcal{F}$ has {\em rate} $(d_1, d_2, \dots, d_k)$, where $d_i:=|excess_{f_i}(s_i)|$; and, for any given $a \in A$, we define $|\mathcal{F}|(a)$ as
\begin{equation}\label{total value in an arc}
|\mathcal{F}|(a):=\sum_{1\leq i\leq k}|f_{i}(a)|.
\end{equation}
The multi-flow $\mathcal{F}=\{f_{i}: i=1, 2, \dots, k\}$ is said to be {\em feasible} with respect to capacity function $c$ if $|\mathcal{F}|(a) \leq c(a)$ for all $a\in A$. Note that when $k=1$, the multi-flow is just a flow $f$, and $f$ is feasible if $|f(a)|\leq c(a)$ for all $a\in A$ (Here recall that we have assumed $c(a)\equiv 1$ in Section~\ref{Introduction}).

\subsection{Elementary Flows}\label{subsection-flow construction}

For a {\em strongly reachable} $k$-pair network $\mathcal{N}=(V, A, S, R)$, let $\mathbf{P}=\{P_{s_i, r_j}\}_{i,j=1}^k$ be a set of $s_i$-$r_j$ directed paths, where the paths $P_{s_1, r_j}, P_{s_2, r_j}, \cdots, P_{s_k, r_j}$ are {\em edge-disjoint} for each feasible $j$. For each $P_{s_i, r_j}\in \mathbf{P}$, define an $s_i$-$r_j$ flow as follows:
$$
f_{i,j}(a)=\left\{
              \begin{array}{ll}
                1, & \hbox{$a\in P_{s_i, r_j}$,}\\
                0, & \hbox{otherwise.}
              \end{array} \right.
$$
Let $\mathbf{F}=\{f_{i,j}| 1\leq i,j\leq k\}$, a set of {\em elementary flows} with respect to $\mathbf{P}$, which will be the ``building blocks'' for the multi-flow construction in this paper.

More specifically, let
$$
\mathcal C=\left((c^{(1)}_{i,j}), (c^{(2)}_{i,j}), \dots, (c^{(k)}_{i,j})\right)
$$
be a $k$-tuple of $k\times k$ real matrices. And for $\ell=1, 2, \dots, k$, consider $\mathcal F=\{f_1, f_2, \cdots, f_k\}$,
where
\begin{equation} \label{coefficient-matrix}
f_{\ell}=\sum_{i,j=1}^k c^{(\ell)}_{i,j} f_{i,j}.
\end{equation}
The following theorem says that if $\mathcal{C}$ satisfies certain conditions, then the constructed $\mathcal F$ in (\ref{coefficient-matrix}) is also a multi-flow. 
\begin{thm}\label{basic observation}
$\mathcal F=\{f_1, f_2, \cdots, f_k\}$ be an $(s_1, s_2, \dots, s_k)$-$(r_1,r_2,\dots, r_k)$ multi-flow with rate $(1,1,\dots,1)$ if and only if each $(c^{(\ell)}_{i,j})$ satisfies:
\begin{equation}\label{commodity condition}
\begin{split}
    &1)\; \sum_{j=1}^kc^{(\ell)}_{i,j}=0, \text{ for all}\; i\neq \ell;\\
    &2)\; \sum_{i=1}^kc^{(\ell)}_{i,j}=0, \text{ for all}\; j\neq \ell;\\
    &3)\;\sum_{i=1}^k\sum_{j=1}^kc_{i,j}^{(\ell)}\equiv 1.
\end{split}
\end{equation}
\end{thm}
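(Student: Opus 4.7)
The plan is to prove both directions by directly computing the excess of $f_\ell$ at every vertex, using the linearity of the excess function. The key observation is that each elementary flow $f_{i,j}$ has a very simple excess pattern: $excess_{f_{i,j}}(v) = -1$ if $v = s_i$, $+1$ if $v = r_j$, and $0$ at every other vertex (including any other source/receiver that happens to lie on $P_{s_i, r_j}$ as an internal node, since along a path excess is zero except at its endpoints). Thus, by linearity,
\begin{equation*}
excess_{f_\ell}(v) \;=\; \sum_{i,j=1}^k c^{(\ell)}_{i,j}\,excess_{f_{i,j}}(v).
\end{equation*}
Under the standing assumption $S \cap R = \emptyset$ (implicit in the setup), this sum picks up contributions only from the $s_i$'s and $r_j$'s.

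Next, I would translate the definition of ``$f_\ell$ is an $s_\ell$-$r_\ell$ flow of rate $1$'' into conservation equations at the offending vertices. Evaluating at $v = s_a$ for $a \neq \ell$ gives $excess_{f_\ell}(s_a) = -\sum_j c^{(\ell)}_{a,j}$, so flow conservation at $s_a$ is equivalent to condition~1). Evaluating at $v = r_b$ for $b \neq \ell$ gives $excess_{f_\ell}(r_b) = \sum_i c^{(\ell)}_{i,b}$, equivalent to condition~2). Finally, once 1) and 2) hold, the identity $\sum_{i,j} c^{(\ell)}_{i,j} = \sum_j c^{(\ell)}_{\ell,j} = \sum_i c^{(\ell)}_{i,\ell}$ falls out, and this common value equals $-excess_{f_\ell}(s_\ell) = excess_{f_\ell}(r_\ell)$, which is the (signed) rate of $f_\ell$; requiring rate exactly $1$ yields condition~3).

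Putting the two directions together gives the if-and-only-if. The ``if'' direction assembles the computations above to verify that $f_\ell$ satisfies flow conservation at every vertex outside $\{s_\ell, r_\ell\}$ and carries unit rate; the ``only if'' direction reads the same equations backward. I do not expect a serious obstacle — the argument is essentially bookkeeping on excess contributions — but the main place to be careful is to verify that internal passages of a path $P_{s_{i'}, r_{j'}}$ through some other $s_a$ or $r_b$ do not create spurious excess terms, and to track the sign convention so that condition~3) reads ``$\equiv 1$'' rather than ``$\equiv -1$''. Under the convention that the rate of an $s$-$r$ flow is $|excess_f(s)|$ with the outgoing sign at $s$ being negative, everything aligns.
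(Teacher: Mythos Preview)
Your proposal is correct and follows essentially the same approach as the paper: compute $excess_{f_\ell}$ at each $s_i$ and $r_j$ by linearity over the elementary flows, then match the resulting row-sum, column-sum, and total-sum expressions to conditions 1), 2), and 3). Your write-up is in fact more careful than the paper's own (terse) proof---you track the sign convention explicitly, note the implicit assumption $S\cap R=\emptyset$, and address the ``only if'' direction, all of which the paper glosses over.
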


\begin{proof}
We only need to prove that $f_{\ell}$ is an $s_{\ell}$-$r_{\ell}$ flow with rate $1$ for any $\ell$. To see this,
Note that $excess_{f_{\ell}}(s_i)=\sum_{j=1}^k c^{(\ell)}_{i,j}$ and $excess_{f_{\ell}}(r_j)=\sum_{i=1}^k c^{(\ell)}_{i,j}$. Condition $1)$ implies that the conservation law is satisfied by all the senders except $s_{i_0}$; Condition $2)$ implies that it is satisfied by all the receivers except $r_{j_0}$; Condition $3)$ implies that the value of $f_{\ell}$ is $1$.
\end{proof}

\subsection{The Optimization Problem $\mathcal{P}_{\mathcal{S}_k}$} \label{subsection-the-opt-problem}

The optimization problem $\mathcal{P}_{\mathcal{S}_k}$ to be introduced in this section is intimately connected with our multi-flow construction and will be the main subject of study in this paper.

Let
$$
\mathfrak S_k:=\left\{\mathcal C=\left((c^{(1)}_{i,j}), (c^{(2)}_{i,j}), \dots, (c^{(k)}_{i,j})\right)\| \;\mathcal C \;\text{satisfies (\ref{commodity condition})}\right\}.
$$
Clearly, $\mathfrak S_k$ is defined by a total of $2k-1$ linearly independent constraints and is an affine subspace of $\mathbb R^{k^3}$ with dimension $k(k-1)^2$.

Throughout this paper, we will refer to a non-empty subset of $[k] \times [k]$ as a {\em $k$-sample}, where $[k]:=\{1,2,\dots,k\}$.
For a given $k$-sample $s$ and $\ell \in [k]$, we define a function $g^{(\ell)}_{s}: \mathfrak S_k \rightarrow \mathbb R$ as
$$
g^{(\ell)}_{s}(\mathcal C):=\underset{(i,j)\in s}{\sum}c^{(\ell)}_{i,j},
$$
based on which, we define $g_{s}: \mathfrak S_k \rightarrow \mathbb R$ as
$$
g_{s}(\mathcal C):=\sum_{\ell=1}^k|g^{(\ell)}_{s}(\mathcal C)|.
$$
Furthermore, for a non-empty set $\mathcal{S}$ of $k$-samples, we define
$$
g_{\mathcal{S}}(\mathcal C):=\underset{s\in \mathcal{S}}{\text{max}}\{g_{s}(\mathcal C)\}.
$$

Now, we are ready to introduce the optimization problem $\mathcal{P}_{\mathcal{S}}$ as follows:
\begin{equation}  \tag{$\mathcal{P}_{\mathcal{S}}$}
\begin{aligned}
& \text{minimize} & & g_{\mathcal{S}}(\mathcal C)\\
& \text{subject to} & & \mathcal C\in \mathfrak S_k.
\end{aligned}
\end{equation}
Note that $g_{\mathcal{S}}$ is continuous and lower bounded, and thereby its optimal value is achievable, i.e., there exists an {\em optimal solution (optimal point)} $\bar{\mathcal{C}} \in \mathfrak S_k$ such that
$$
g_{\mathcal{S}}(\bar{\mathcal{C}})=\underset{\mathcal C\in \mathfrak S_k}{\min}g_{\mathcal{S}}(\mathcal C).
$$

The following theorem says that $\mathcal{P}_{\mathcal{S}}$ is a convex optimization problem. Though the theorem follows from a standard argument, we give its proof for the sake of completeness.
\begin{lem} \label{convex}
$g_{\mathcal{S}}$ is a convex function over $\mathfrak S_k$.
\end{lem}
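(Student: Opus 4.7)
The plan is to unpack $g_{\mathcal{S}}$ into a pointwise maximum of sums of absolute values of affine functions, and then invoke three elementary convexity-preserving operations in sequence. The argument is essentially a textbook composition, so the main task is bookkeeping rather than discovering any new idea.

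First, I would observe that for each fixed $\ell \in [k]$ and $k$-sample $s$, the map $\mathcal{C} \mapsto g^{(\ell)}_s(\mathcal{C}) = \sum_{(i,j) \in s} c^{(\ell)}_{i,j}$ is a linear functional on the ambient space $\mathbb{R}^{k^3}$, and in particular is affine on the affine subspace $\mathfrak{S}_k$. Composing with the absolute value $|\cdot|$, which is convex and nondecreasing on $\mathbb{R}_{\geq 0}$ (more precisely, convex as a function on $\mathbb{R}$), preserves convexity when applied to an affine function; so $\mathcal{C} \mapsto |g^{(\ell)}_s(\mathcal{C})|$ is convex over $\mathfrak{S}_k$.

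Second, since the sum of finitely many convex functions is convex, the function
\[
g_s(\mathcal{C}) = \sum_{\ell=1}^{k} |g^{(\ell)}_s(\mathcal{C})|
\]
is convex on $\mathfrak{S}_k$ for each fixed $k$-sample $s$. Finally, because the pointwise maximum of a (finite, and here $\mathcal{S}$ is finite) family of convex functions is convex, the function
\[
g_{\mathcal{S}}(\mathcal{C}) = \max_{s \in \mathcal{S}} g_s(\mathcal{C})
\]
is convex on $\mathfrak{S}_k$, which is what we needed.

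Since each of the three steps (affine composition with $|\cdot|$, summation, pointwise maximum) is standard, there is no genuine obstacle; for completeness one may verify each convexity inequality directly from the definition using the triangle inequality $|\alpha x + (1-\alpha)y| \leq \alpha|x| + (1-\alpha)|y|$ and the inequality $\max_s (\alpha a_s + (1-\alpha) b_s) \leq \alpha \max_s a_s + (1-\alpha)\max_s b_s$ for $\alpha \in [0,1]$, rather than citing these facts. The only thing worth being slightly careful about is that $\mathfrak{S}_k$ is an affine (not linear) subspace, so convex combinations of points in $\mathfrak{S}_k$ remain in $\mathfrak{S}_k$; this is immediate from the linearity of the defining constraints \eqref{commodity condition}.
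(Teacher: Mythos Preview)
Your proof is correct and takes essentially the same approach as the paper: the paper writes out the direct inequality chain $g_{\mathcal{S}}(p\mathcal{C}_1+q\mathcal{C}_2)\leq p\,g_{\mathcal{S}}(\mathcal{C}_1)+q\,g_{\mathcal{S}}(\mathcal{C}_2)$ using the triangle inequality and the max-inequality that you mention at the end, while you package the same ingredients as ``absolute value of affine is convex, sums preserve convexity, pointwise max preserves convexity.'' There is no substantive difference.
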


\begin{proof}
Let $\mathcal C_1, \mathcal C_2\in \mathfrak S_k$ and $\mathcal C=p\, \mathcal C_1+q\, \mathcal C_2$, such that $p+q=1$, $p,q\leq 0$. Clearly, $\mathcal C\in \mathfrak S_k$, and
\begin{equation*}
\begin{split}
g_{\mathcal{S}}(\mathcal C)&=\max_{s\in S}\{g_s(\mathcal C)\}\\
&=\max_{s\in \mathcal{S}}\{|g^{(1)}_{s}(p\, \mathcal C_1+q\, \mathcal C_2)|+\dots+|g^{(k)}_{s}(p\, \mathcal C_1+q\, \mathcal C_2)|\}\\
&\leq \max_{s\in \mathcal{S}}\{p|g^{(1)}_{s}(\mathcal C_1)|+q|g^{(1)}_{s}(\mathcal C_2)|+\dots+p|g^{(k)}_{s}(\mathcal C_1)|+q|g^{(k)}_{s}(\mathcal C_2)|\}\\
&=\max_{s\in \mathcal{S}}\{ p\, g_s(\mathcal C_1)+q\, g_s(\mathcal C_2)\}\\
&\leq p\,\max_{s\in \mathcal{S}}\{g_s(\mathcal C_1)\}+q\,\max_{s\in S}\{g_s(\mathcal C_2)\}\\
&=p\,g_{\mathcal{S}}(\mathcal C_1)+q\,g_{\mathcal{S}}(\mathcal C_2),
\end{split}
\end{equation*}
which completes the proof.
\end{proof}

\begin{de}[{\bf Strongly Reachable Sample Set}]
The following set of $k$-samples, denoted by $\mathcal{S}_k$, is of particular interest for the consideration of strongly reachable $k$-pair networks:
\begin{equation} \label{S-k}
\mathcal{S}_k := \{\{(i_1,j_1),\dots (i_r,j_r)\}\subseteq [k]\times[k] \| \, j_1 < j_2 < \dots < j_r, 1 \leq r \leq k\}.
\end{equation}
Put it differently, $\mathcal{S}_k$ is composed of all the $k$-samples, each of which consists of elements whose two coordinates are distinct. Clearly, there are $(k+1)^k-1$ samples in $\mathcal{S}_k$.
\end{de}

\begin{exmp}\label{ex-s-sample}
It is easy to see that $\mathcal{S}_1=\{\{(1, 1)\}\}$. And $\mathcal{S}_2$ is composed of $8$ samples, $\{(1,1)\}$, $\{(2,1)\}$, $\{(1,2)\}$, $\{(2,2)\}$, $\{(1,1),(1,2)\}$, $\{(2,1),(1,2)\}$, $\{(1,1),(2,2)\}$, and $\{(2,1),(2,2)\}$. And one can verify that $\mathcal{S}_3$ contains $63$ samples and $\mathcal{S}_4$ contains $624$ samples.
\end{exmp}

Let $\mathcal{O}_{\mathcal S_k}$ denote the {\em optimal value} of $\mathcal P_{\mathcal S_k}$. The following theorem provide a key link connecting $\mathcal{P}_{\mathcal{S}_k}$ and $\mathcal R_r(\overline{\mathcal N})$, where $\mathcal{N}$ is an arbitrary strongly reachable $k$-pair network.
\begin{thm}\label{basic lemma}
For any strongly reachable $k$-pair network $\mathcal N$,
$$
\mathcal R_r(\overline{\mathcal N}) \geq \left(\frac{1}{\mathcal{O}_{\mathcal S_k}}, \frac{1}{\mathcal{O}_{\mathcal S_k}}, \dots, \frac{1}{\mathcal{O}_{\mathcal S_k}}\right).
$$
\end{thm}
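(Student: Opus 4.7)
The plan is to turn an optimal point of $\mathcal P_{\mathcal S_k}$ into a multi-flow that certifies the desired lower bound on $\mathbf R_r(\overline{\mathcal N})$. First, using strong reachability of $\mathcal N$, I would fix a family of directed paths $\mathbf P=\{P_{s_i,r_j}\}_{i,j=1}^k$ with the property that $P_{s_1,r_j},\dots,P_{s_k,r_j}$ are edge-disjoint for every $j$, and form the corresponding elementary flows $f_{i,j}$ as in Section~\ref{subsection-flow construction}. Picking any optimal $\bar{\mathcal C}\in\mathfrak S_k$ of $\mathcal P_{\mathcal S_k}$ and feeding it into Theorem~\ref{basic observation} then produces a multi-flow $\mathcal F=\{f_\ell\}_{\ell=1}^k$ of rate $(1,\dots,1)$. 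Since $\mathcal F$ is not a priori feasible under the unit capacities, the task reduces to showing that $|\mathcal F|(a)\leq\mathcal O_{\mathcal S_k}$ uniformly over $a\in A$; rescaling by $1/\mathcal O_{\mathcal S_k}$ then finishes the proof.

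The heart of the argument is a single bookkeeping step at each arc. For $a\in A$, I would introduce $s(a):=\{(i,j)\in[k]\times[k]\|\, a\in P_{s_i,r_j}\}$ and observe that $s(a)\in\mathcal S_k$ whenever it is non-empty: if $(i,j)$ and $(i',j)$ were both in $s(a)$ with $i\neq i'$, then $a$ would lie on two distinct paths terminating at $r_j$, contradicting the edge-disjointness condition built into strong reachability. Hence the second coordinates of elements of $s(a)$ are pairwise distinct, which is exactly the defining property of samples in $\mathcal S_k$. Since $f_{i,j}(a)$ equals $1$ precisely when $(i,j)\in s(a)$ and $0$ otherwise, the definition of $g^{(\ell)}_s$ gives $f_\ell(a)=g^{(\ell)}_{s(a)}(\bar{\mathcal C})$, and therefore
$$
|\mathcal F|(a)=\sum_{\ell=1}^k|f_\ell(a)|=\sum_{\ell=1}^k|g^{(\ell)}_{s(a)}(\bar{\mathcal C})|=g_{s(a)}(\bar{\mathcal C})\leq g_{\mathcal S_k}(\bar{\mathcal C})=\mathcal O_{\mathcal S_k},
$$
with the bound trivial when $s(a)=\emptyset$.

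Dividing each $f_\ell$ by $\mathcal O_{\mathcal S_k}$ then delivers an $(s_1,\dots,s_k)$-$(r_1,\dots,r_k)$ multi-flow that is feasible on $\overline{\mathcal N}$ and achieves rate $(1/\mathcal O_{\mathcal S_k},\dots,1/\mathcal O_{\mathcal S_k})$, which witnesses the claimed inequality. I expect the only real obstacle to be the identification $s(a)\in\mathcal S_k$; everything else is pure definition-chasing through $g^{(\ell)}_s$, $g_s$, and $g_{\mathcal S_k}$. This step also explains, \emph{post hoc}, why the sample set $\mathcal S_k$ in (\ref{S-k}) is defined by the distinct-$j$-coordinate constraint rather than by some other combinatorial condition: it is precisely tailored so that arc-wise capacity violations of $\mathcal F$ are controlled by $g_{\mathcal S_k}(\bar{\mathcal C})$.
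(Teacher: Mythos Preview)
Your proposal is correct and follows essentially the same approach as the paper: both pick an optimal $\bar{\mathcal C}$, form the multi-flow via Theorem~\ref{basic observation}, identify the arc-sample $s(a)\in\mathcal S_k$ using the edge-disjointness in strong reachability, and bound $|\mathcal F|(a)$ by $g_{s(a)}(\bar{\mathcal C})\leq \mathcal O_{\mathcal S_k}$. The only cosmetic difference is that the paper rescales by $1/\mathcal O_{\mathcal S_k}$ before checking feasibility rather than after.
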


\begin{proof}
Let $\bar{\mathcal{C}}=\left((\bar{c}^{(1)}_{i,j}),(\bar{c}^{(2)}_{i,j}), \dots, (\bar{c}^{(k)}_{i,j})\right)$ be an optimal point for $\mathcal P_{\mathcal S_k}$, that is to say, $\mathcal{O}_{\mathcal S_k}=g_{\mathcal{S}_k}(\bar{\mathcal{C}})$. And let $\mathcal F=\{f_1, f_2, \dots, f_k\}$ be the $(s_1,s_2,\dots,s_k)$-$(r_1,r_2,\dots,r_k)$ multi-flow constructed from $\mathbf F$ with coefficient matrices $\frac{1}{\mathcal{O}_{\mathcal S_k}}\mathcal C^*=\left(\frac{1}{\mathcal{O}_{\mathcal S_k}}(\bar{c}^{(1)}_{i,j}),\frac{1}{\mathcal{O}_{\mathcal S_k}}(\bar{c}^{(2)}_{i,j}),\dots, \frac{1}{\mathcal{O}_{\mathcal S_k}}(\bar{c}^{(k)}_{i,j})\right)$. Clearly, $\mathcal F$ achieves rate $\left(\frac{1}{\mathcal{O}_{\mathcal S_k}}, \frac{1}{\mathcal{O}_{\mathcal S_k}}, \dots, \frac{1}{\mathcal{O}_{\mathcal S_k}}\right)$.

To complete the proof, we only need to prove that $\mathcal F$ is feasible. Towards this goal, for each arc $a$, let $\mathcal P(a)=\{P_{s_{i_1},r_{j_1}}, P_{s_{i_1},r_{j_1}}, \dots, P_{s_{i_\alpha(a)},r_{j_\alpha(a)}}\}\subseteq \mathbf P$ be the set of all the paths passing through $a$. By the definition of a strongly reachable $k$-pair network, we have
$$
s(a):=\{(i_1,j_1),(i_2,j_2),\dots, (i_\alpha(a),j_\alpha(a))\}\in \mathcal{S}_k.
$$
Hence, we have
\begin{equation*}
\begin{split}
|\mathcal F|(a)&=|f_1(a)|+\dots+|f_k(a)|\\
&=\left|f^{(1)}_{s(a)}\left(\frac{1}{\mathcal{O}_{\mathcal S_k}}(\bar{c}^{(k)}_{i,j})\right)\right|
+\dots+\left|f^{(k)}_{s(a)}\left(\frac{1}{\mathcal{O}_{\mathcal S_k}}(\bar{c}^{(k)}_{i,j})\right)\right|\\
&=f_{s(a)}\left(\frac{1}{\mathcal{O}_{\mathcal S_k}} \bar{\mathcal{C}}\right)\\
&\leq 1,
\end{split}
\end{equation*}
which implies that $\mathcal F$ is feasible and thus completes the proof.
\end{proof}

\section{Symmetries and Asymptotics of $\mathcal P_{\mathcal S_k}$} \label{section-sym-asym}

Starting from this section, we will focus on solving the problem $\mathcal P_{\mathcal S_k}$. Apparently, the problem $\mathcal P_{\mathcal S_1}$ is trivial. In~\cite{CH17}, we have shown that $\mathcal P_{\mathcal S_2}$ has optimal value $1$, which is achieved by the unique optimal point $$
\left(\left(
   \begin{array}{cc}
     \frac{3}{4} & \frac{1}{4} \\
     \frac{1}{4} & \frac{-1}{4} \\
   \end{array}
 \right),
\left(
   \begin{array}{cc}
     \frac{-1}{4} & \frac{1}{4} \\
     \frac{1}{4} & \frac{3}{4} \\
   \end{array}
 \right)
\right).
$$
However, the problem $\mathcal P_{\mathcal S_k}$, $k \geq 3$, becomes prohibitively complex and cannot be dealt with a case analysis as in~\cite{CH17}. Rather than a fixed $\mathcal P_{\mathcal S_k}$, this section is devoted to the asymptotics of $\{\mathcal P_{\mathcal S_k}\}$; more precisely we will establish $\lim_{k\rightarrow\infty}\mathcal{O}_{\mathcal S_k}=9/8$, which, as will be shown in Section~\ref{section-shf}, can be achieved by a sequence of explicitly constructed solutions. As elaborated below, the key observation in deriving this result is some symmetric properties possessed by the optimal solutions of $\mathcal P_{\mathcal S_k}$.

\subsection{Symmetries of $\mathcal P_{\mathcal S_k}$}\label{subsection-symmetric}

In this section, we use $Sym(k)$ to denote the symmetric group on $[k]$. Note that a permutation in $Sym(k)$ can be written by a product of disjoint cyclic permutations (cycles), e.g., $\sigma=(15)(342)\in Sym(5)$. For any $\sigma\in Sym(k)$ and $\mathcal C=\left((c^{(1)}_{i,j}), (c^{(2)}_{i,j}), \dots, (c^{(k)}_{i,j})\right) \in \mathfrak S_k$, we define
$$
\sigma(\mathcal{C}):=\widetilde{\mathcal{C}},
$$
where $\widetilde{\mathcal C}=\left((\widetilde{c}^{(1)}_{i,j}), (\widetilde{c}^{(2)}_{i,j}), \dots, (\widetilde{c}^{(k)}_{i,j})\right)$ with $\widetilde{c}^{(\ell)}_{i,j}=c^{(\sigma^{-1}(\ell))}_{\sigma^{-1}(i),\sigma^{-1}(j)}$
for all feasible $i,j,\ell$. Apparently, $\sigma$ defines a one-to-one mapping from $\mathfrak S_k$ to $\mathfrak S_k$.

\begin{exmp}
Let $$\mathcal C=\left(\left(
              \begin{array}{ccc}
                \frac{5}{9} & \frac{2}{9} & \frac{2}{9} \\
                \frac{2}{9} & \frac{-1}{9} & \frac{-1}{9} \\
                \frac{2}{9} & \frac{-1}{9} & \frac{-1}{9} \\
              \end{array}
            \right),
            \left(
              \begin{array}{ccc}
                0 & 0 & 0 \\
                0 & 1 & 0 \\
                0 & 0 & 0 \\
              \end{array}
            \right),
            \left(
              \begin{array}{ccc}
                \frac{-1}{9} & \frac{-1}{9} & \frac{2}{9} \\
                \frac{-1}{9} & \frac{-1}{9} & \frac{2}{9} \\
                \frac{2}{9} & \frac{2}{9} & \frac{5}{9} \\
              \end{array}
            \right)\right)\in \mathfrak S_3,
$$
and let $\sigma_1=(12)$ and $\sigma_2=(123)\in Sym(3)$. Then, we have
$$
\sigma_1(\mathcal C)=\left(\left(
              \begin{array}{ccc}
                1 & 0 & 0 \\
                0 & 0 & 0 \\
                0 & 0 & 0 \\
              \end{array}
            \right), \left(
              \begin{array}{ccc}
                \frac{-1}{9} & \frac{2}{9} & \frac{-1}{9} \\
                \frac{2}{9} & \frac{5}{9} & \frac{2}{9} \\
                \frac{-1}{9} & \frac{2}{9} & \frac{-1}{9} \\
              \end{array}
            \right),
            \left(
              \begin{array}{ccc}
                \frac{-1}{9} & \frac{-1}{9} & \frac{2}{9} \\
                \frac{-1}{9} & \frac{-1}{9} & \frac{2}{9} \\
                \frac{2}{9} & \frac{2}{9} & \frac{5}{9} \\
              \end{array}
            \right)\right),
$$
$$
\sigma_2(\mathcal C)=\left(\left(
              \begin{array}{ccc}
                \frac{5}{9} & \frac{2}{9} & \frac{2}{9} \\
                \frac{2}{9} & \frac{-1}{9} & \frac{-1}{9} \\
                \frac{2}{9} & \frac{-1}{9} & \frac{-1}{9} \\
              \end{array}
            \right),
            \left(
              \begin{array}{ccc}
                \frac{-1}{9} & \frac{2}{9} & \frac{-1}{9} \\
                \frac{2}{9} & \frac{5}{9} & \frac{2}{9} \\
                \frac{-1}{9} & \frac{2}{9} & \frac{-1}{9} \\
              \end{array}
            \right),
            \left(
              \begin{array}{ccc}
                0 & 0 & 0 \\
                0 & 0 & 0 \\
                0 & 0 & 1 \\
              \end{array}
            \right)\right).
$$
\end{exmp}

\begin{de}[{\bf Fixed Point and Invariant Space}] \label{fixed-point-definition}
Let $\mathcal C\in \mathfrak S_k$. $\mathcal C$ is called a {\em fixed point} if for all $\sigma\in Sym(k)$, $\sigma(\mathcal C)=\mathcal C$. The set of all the fixed points is called the {\em invariant space} of $\mathfrak S_k$, and will be denoted by $\mathfrak S_k^{fix}$.
\end{de}

The following theorem shows that $\mathfrak S_k^{fix}$ is in fact a 2-dimensional affine subspace of $\mathfrak S_k$.
\begin{thm}\label{fixedpoint}
Let $\mathcal C=\left((c^{(1)}_{i,j}), (c^{(2)}_{i,j}), \dots, (c^{(k)}_{i,j})\right)\in \mathfrak S_k$.
Then, $\mathcal C\in \mathfrak S_k^{fix}$ if and only if $\mathcal{C}$ takes the following form:
\begin{equation}\label{fixed-form}
\mathcal C=\left(\left(
              \begin{array}{ccccc}
                x & a & a & \dots & a \\
                a & y & b & \dots & b \\
                a & b & y & \dots & b \\
                \vdots & \vdots &  & \ddots & \vdots \\
                a & b & b &\dots & y \\
              \end{array}
            \right), \left(
              \begin{array}{ccccc}
                y & a & b & \dots & b \\
                a & x & a & \dots & a \\
                b & a & y & \dots & b \\
                \vdots & \vdots &  & \ddots & \vdots \\
                b & a & b &\dots & y \\
              \end{array}
            \right),\dots, \left(
              \begin{array}{ccccc}
                y & b & b & \dots & a \\
                b & y & b & \dots & a \\
                b & b & y & \dots & a \\
                \vdots & \vdots &  & \ddots & \vdots \\
                a & a & a &\dots & x \\
              \end{array}
            \right)\right),
\end{equation}
where $x+(k-1)a=1$ and $y+a+(k-2)b=0$.
\end{thm}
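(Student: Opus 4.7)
The plan is to reinterpret the fixed-point condition as an orbit-invariance statement under the diagonal $Sym(k)$-action on triples, and then read off the shape from the constraints (\ref{commodity condition}). Unfolding Definition~\ref{fixed-point-definition}, $\mathcal{C}\in\mathfrak S_k^{fix}$ precisely when $c^{(\sigma(\ell))}_{\sigma(i),\sigma(j)}=c^{(\ell)}_{i,j}$ for every $\sigma\in Sym(k)$ and every $(\ell,i,j)\in[k]^3$; equivalently, $c^{(\ell)}_{i,j}$ depends only on the $Sym(k)$-orbit of $(\ell,i,j)$.

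These orbits are classified by the equality pattern of the three coordinates, so for $k\geq 3$ there are exactly five: $\ell=i=j$; $\ell=i\neq j$; $\ell=j\neq i$; $i=j\neq\ell$; and the three indices pairwise distinct. Calling the corresponding common values $x,\alpha_1,\alpha_2,y,b$, each slice $(c^{(\ell)}_{i,j})$ of a fixed point $\mathcal{C}$ already has the shape displayed in (\ref{fixed-form}), except that the off-diagonal entries in row $\ell$ (value $\alpha_1$) and in column $\ell$ (value $\alpha_2$) are a priori unequal. Imposing (\ref{commodity condition}): summing along a row $i\neq\ell$ yields $\alpha_2+y+(k-2)b=0$, while summing along a column $j\neq\ell$ yields $\alpha_1+y+(k-2)b=0$; subtracting forces $\alpha_1=\alpha_2=:a$ and gives the relation $y+a+(k-2)b=0$. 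Since the rows indexed by $i\neq\ell$ now contribute zero, the total-sum condition 3) collapses to $x+(k-1)a=1$.

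For the converse direction, any $\mathcal{C}$ of the form (\ref{fixed-form}) satisfying the two stated equations manifestly lies in $\mathfrak S_k$ by the same row/column sum calculation, and is invariant under $Sym(k)$ because its entries were defined purely in terms of orbit types. The only genuine subtlety is the identification $\alpha_1=\alpha_2$: symmetry alone leaves the row-$\ell$ and column-$\ell$ parameters independent, since the triples $(\ell,\ell,j)$ and $(\ell,j,\ell)$ lie in different $Sym(k)$-orbits, and their coincidence is an arithmetic consequence of conditions 1) and 2), not of the group action. Small-$k$ boundary cases cause no trouble: for $k=2$ the last orbit type is empty, so the parameter $b$ is vacuous and the displayed form remains valid under the convention that $(k-2)b$ vanishes.
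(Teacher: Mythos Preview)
Your proof is correct and follows essentially the same approach as the paper's: both identify the five $Sym(k)$-orbits on index triples $(\ell,i,j)$, assign a constant value to each, and then use the linear constraints in (\ref{commodity condition}) to force the row-$\ell$ and column-$\ell$ parameters to coincide. The only cosmetic difference is that the paper equates the $\ell$-th row sum with the $\ell$-th column sum to obtain $(k-1)a_1=(k-1)a_2$, whereas you equate two of the zero row/column sums; your version has the minor advantage of simultaneously delivering the relation $y+a+(k-2)b=0$, and your explicit remarks on the converse and the $k=2$ degeneracy are welcome additions.
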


\begin{proof}
Clearly, if $\mathcal C$ takes the form in (\ref{fixed-form}), then it is a fixed point. So we only need to prove the reverse direction.

Let $\mathcal{J}$ denote the set of all the entries of $\mathcal C$, i.e., $\mathcal{J}:=\{c^{(\ell)}_{i,j}: 1\leq i,j,\ell\leq k\}$. Consider the group action of $Sym(k)$ on $\mathcal{J}$ with $\sigma(c^{(\ell)}_{i,j})=c^{(\sigma(\ell))}_{\sigma(i),\sigma(j)}$ for any $\sigma\in Sym(k)$ and any $c^{(\ell)}_{i,j}\in \mathcal{J}$. Clearly, under this group action, $\mathcal{J}$ is partitioned into the following orbits: 1) $\mathcal{J}_1:=\{c^{(\ell)}_{i,j}: i=j=\ell\}$; 2) $\mathcal{J}_2:=\{c^{(\ell)}_{i,j}: i=j\neq \ell\}$; 3) $\mathcal{J}_3:=\{c^{(\ell)}_{i,j}: i=\ell\neq j\}$; 4) $\mathcal{J}_4:=\{c^{(\ell)}_{i,j}: j=\ell\neq j\}$; 5) $\mathcal{J}_5:=\{c^{(\ell)}_{i,j}: i\neq j, i\neq\ell, j\neq \ell\}$. It follows from the assumption that $\mathcal C$ is a fixed point that $c^{(\ell)}_{i,j}=c^{(\sigma(\ell))}_{\sigma(i),\sigma(j)}$ for any feasible $i,j,\ell$ and any $\sigma\in Sym(k)$. In other words, the elements in a same orbit must have a same value, and therefore we can assume the existence of $x, y, a_1, a_2, b$ such that
$$
c^{(\ell)}_{i,j}=\left\{
                    \begin{array}{ll}
                      x, & \hbox{if $c^{(\ell)}_{i,j}\in \mathcal{J}_1$,} \\
                      y, & \hbox{if $c^{(\ell)}_{i,j}\in \mathcal{J}_2$,} \\
                      a_1, & \hbox{if $c^{(\ell)}_{i,j}\in \mathcal{J}_3$,} \\
                      a_2, & \hbox{if $c^{(\ell)}_{i,j}\in \mathcal{J}_4$,} \\
                      b, & \hbox{if $c^{(\ell)}_{i,j}\in \mathcal{J}_5$.}
                    \end{array}
                  \right.
$$
Note that from (\ref{commodity condition}), we can deduce that for any $\ell$, $\sum_{j=1}^{k} c^{(\ell)}_{\ell,j}=\sum_{i=1}^{k} c^{(\ell)}_{i,\ell}$, which implies $\sum_{j:j\neq\ell} c^{(\ell)}_{\ell,j}=\sum_{i:i\neq\ell} c^{(\ell)}_{i,\ell}$, or equivalently, $(k-1)a_1=(k-1)a_2$. The proof of the theorem is then complete after writing $a_1, a_2$ as $a$.
\end{proof}

For any $\sigma\in Sym(k)$ and any $k$-sample $s=\{(i_1,j_1),\dots,(i_r,j_r)\}$, we define
$$
\sigma(s):=\{(\sigma(i_1),\sigma(j_1)), \dots, (\sigma(i_r),\sigma(j_r))\}.
$$
It is easy to see that $\sigma$ defines a one-to-one mapping from $2^{[k]\times[k]}$ to $2^{[k]\times[k]}$. For a quick example, let $s=\{(2,1), (1,2), (3,3)\}\subseteq [3]\times[3]$ and let $\sigma_1=(1,3)$, $\sigma_2=(2,3)$. Then, $\sigma_1(s)=\{(2,3), (3,2), (1,1)\}$, $\sigma_2(s)=\{(3,1), (1,3), (2,2)\}$.

Together with Theorem~\ref{fixedpoint}, the following theorem drastically reduces the dimension of the parameter space for the purpose of solving $\mathcal{P}_{\mathcal{S}_k}$.
\begin{thm}\label{2-dim}
$\mathcal{P}_{\mathcal{S}_k}$ has an optimal point within $\mathfrak S_k^{fix}$.
\end{thm}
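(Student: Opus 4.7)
The plan is a standard symmetrization argument. Given any optimal point $\bar{\mathcal{C}}$ for $\mathcal{P}_{\mathcal{S}_k}$, I would form its group average
$$
\widehat{\mathcal{C}} := \frac{1}{k!} \sum_{\sigma \in Sym(k)} \sigma(\bar{\mathcal{C}}),
$$
and show (i) $\widehat{\mathcal{C}} \in \mathfrak S_k^{fix}$ and (ii) $\widehat{\mathcal{C}}$ attains the optimal value $\mathcal{O}_{\mathcal{S}_k}$. Part (i) is largely bookkeeping: the constraint set $\mathfrak S_k$ is affine and $Sym(k)$-stable (applying $\sigma$ merely relabels the indices in the conditions of (\ref{commodity condition})), so $\widehat{\mathcal{C}} \in \mathfrak S_k$; moreover, the map $\sigma \mapsto \sigma(\cdot)$ is a genuine left action, i.e.\ $(\tau\sigma)(\mathcal{C})=\tau(\sigma(\mathcal{C}))$, which after the reindexing $\sigma \mapsto \tau^{-1}\sigma$ gives $\tau(\widehat{\mathcal{C}})=\widehat{\mathcal{C}}$ for every $\tau \in Sym(k)$.

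The bulk of the work is part (ii), which rests on the $Sym(k)$-invariance of the objective $g_{\mathcal{S}_k}$. I would first establish the sample-level identity
$$
g^{(\ell)}_{s}(\sigma(\mathcal{C})) = g^{(\sigma^{-1}(\ell))}_{\sigma^{-1}(s)}(\mathcal{C})
$$
by a direct index substitution in the definitions of $\sigma(\mathcal C)$ and $g^{(\ell)}_s$, and then sum absolute values over $\ell$ to obtain $g_s(\sigma(\mathcal{C})) = g_{\sigma^{-1}(s)}(\mathcal{C})$. The crucial structural observation is that the family $\mathcal{S}_k$ itself is $Sym(k)$-invariant: the defining property ``the second coordinates of the elements of $s$ are pairwise distinct'' is manifestly preserved by $(i,j) \mapsto (\sigma(i),\sigma(j))$, so $\sigma^{-1}$ permutes $\mathcal{S}_k$ and hence
$$
g_{\mathcal{S}_k}(\sigma(\mathcal{C})) = \max_{s \in \mathcal{S}_k} g_{\sigma^{-1}(s)}(\mathcal{C}) = g_{\mathcal{S}_k}(\mathcal{C}).
$$

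With this invariance in hand, each $\sigma(\bar{\mathcal{C}})$ is again optimal, and the convexity of $g_{\mathcal{S}_k}$ proved in Lemma~\ref{convex} (extended to a $k!$-fold convex combination by the usual induction on the number of summands) yields
$$
g_{\mathcal{S}_k}(\widehat{\mathcal{C}}) \leq \frac{1}{k!}\sum_{\sigma \in Sym(k)} g_{\mathcal{S}_k}(\sigma(\bar{\mathcal{C}})) = \mathcal{O}_{\mathcal{S}_k},
$$
while the reverse inequality is automatic. Hence $\widehat{\mathcal{C}} \in \mathfrak S_k^{fix}$ is an optimal point, which is exactly the claim. I do not foresee a substantive obstacle: the only nontrivial ingredient is the $Sym(k)$-invariance of $\mathcal{S}_k$, and once that together with the group-action compatibility is laid out, the remainder reduces to chasing indices. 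The most error-prone step is keeping the $\sigma$ versus $\sigma^{-1}$ conventions consistent when unfolding the definition of $\sigma(\mathcal{C})$.
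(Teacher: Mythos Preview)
Your proposal is correct and follows essentially the same symmetrization-plus-convexity argument as the paper: average an optimal point over $Sym(k)$, observe the average is fixed, and use Lemma~\ref{convex} to conclude it is still optimal. If anything, you are more careful than the paper's own proof, which asserts that each $\sigma(\bar{\mathcal{C}})$ is optimal without spelling out the $Sym(k)$-invariance of $g_{\mathcal{S}_k}$; your explicit verification via $g^{(\ell)}_{s}(\sigma(\mathcal{C})) = g^{(\sigma^{-1}(\ell))}_{\sigma^{-1}(s)}(\mathcal{C})$ together with the observation that $\mathcal{S}_k$ is $Sym(k)$-stable is exactly the missing justification.
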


\begin{proof}
Suppose that $\bar{\mathcal{C}} \in \mathfrak S_k$ achieves the optimal value of $\mathcal{P}_{\mathcal{S}_k}$. By Definition~\ref{fixed-point-definition}, for any $\sigma \in Sym(k)$, $\sigma(\bar{\mathcal{C}})$ is an optimal point of $\mathcal{P}_{\mathcal{S}_k}$. Let
$$
\hat{\mathcal{C}}=\frac{\sum_{\sigma\in Sym(k)}\sigma(\bar{\mathcal{C}})}{n!}
$$
It is easy to see that for any $\sigma \in Sym(k)$, $\sigma(\hat{\mathcal{C}})=\hat{\mathcal{C}}$. Hence, $\hat{\mathcal{C}} \in \mathfrak S_k^{fix}$. On the other hand, it follows from Lemma~\ref{convex} that
$$
g_{\mathcal{S}_k}(\hat{\mathcal{C}})\leq \frac{\sum_{\sigma \in Sym(k)} g_{\mathcal{S}_k}(\sigma(\bar{\mathcal{C}}))}{n!}
$$
and hence $\hat{\mathcal{C}}$ is an optimal point, which completes the proof.
\end{proof}

\subsection{Asymptotics of $\mathcal P_{\mathcal S_k}$}\label{subsection-limit}

For a $k$-sample $s=\{(i_1, j_1), (i_2, j_2), \dots, (i_{\alpha(s)}, j_{\alpha(s)})\}$,
we define the following multi-set:
$$
Ind_s:=\{i_1, j_1, i_2, j_2, \dots, i_{\alpha(s)}, j_{\alpha(s)}\},
$$
where $\alpha(s)$ denotes the size of $s$. And for any $\ell=1, 2, \dots, k$, denote by $m_{Ind_s}(\ell)$ the multiplicity of $\ell$ in $Ind_s$ (if $\ell\notin Ind_s$, then $m_{Ind_s}(\ell)=0$), and define
$$
\beta(s):=|\{\ell:  m_{Ind_s}(\ell)\neq 0\}|.
$$
For a quick example, consider $s=\{(1,1),(2,2),(1,3),(3,4),(1,6)\}\subseteq[6]\times[6]$. Then, $Ind_s=\{1,1,2,2,1,3,3,4,1,6\}$, $m_{Ind_s}(1)=4$, $m_{Ind_s}(2)=m_{Ind_s}(3)=2$, $m_{Ind_s}(4)=m_{Ind_s}(6)=1$, $m_{Ind_s}(5)=0$ and $\alpha(s)=\beta(s)=5$.

In this section, we characterize the asymptotics of $\{\mathcal{O}_{k}\}$ and thereby approximately ``solve'' $\mathcal{P}_{\mathcal{S}_k}$ for large $k$. We first recall the following theorem from~\cite{CH17}.
\begin{thm}\label{bounded-1}
$\mathcal{O}_{\mathcal{S}_k}\leq \frac{9}{8}$ for $k\geq 3$.
\end{thm}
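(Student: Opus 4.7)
The plan is to exhibit an explicit $\mathcal C_k\in \mathfrak S_k$ and verify $g_{\mathcal S_k}(\mathcal C_k)\leq 9/8$ directly. By Theorem~\ref{2-dim}, one may restrict the search to the invariant subspace $\mathfrak S_k^{fix}$, and by Theorem~\ref{fixedpoint} this is a two-parameter family: the four scalars $x,y,a,b$ in (\ref{fixed-form}) are tied by $x=1-(k-1)a$ and $y=-a-(k-2)b$, leaving $(a,b)$ free.

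For any $\mathcal C\in \mathfrak S_k^{fix}$ and any sample $s$, partitioning the pairs $(i,j)\in s$ according to which of the five $Sym(k)$-orbits contains $(i,j,\ell)$ yields
$$
g_s^{(\ell)}(\mathcal C) \;=\; x\,\delta_\ell(s) \;+\; y\,\epsilon_\ell(s) \;+\; a\bigl(\mu_\ell(s)+\nu_\ell(s)\bigr) \;+\; b\,\theta_\ell(s),
$$
where $\delta_\ell(s):=[(\ell,\ell)\in s]$, $\epsilon_\ell(s):=|\{(i,i)\in s:i\neq \ell\}|$, $\mu_\ell(s)$ and $\nu_\ell(s)$ are the off-$\ell$-diagonal counts in row $\ell$ and column $\ell$, respectively, and $\theta_\ell(s):=|s|-\delta_\ell-\epsilon_\ell-\mu_\ell-\nu_\ell$ counts the remaining cells. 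The defining property of $\mathcal S_k$---distinct second coordinates---forces $\delta_\ell(s),\nu_\ell(s)\in\{0,1\}$ and $\sum_\ell(\delta_\ell+\nu_\ell)=|s|$, which tightly couples the per-$\ell$ contributions.

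With this expansion in hand, I would bound $g_s(\mathcal C)=\sum_\ell|g_s^{(\ell)}(\mathcal C)|$ by splitting the indices $\ell$ according to the sign of $g_s^{(\ell)}(\mathcal C)$ and reducing the resulting expression to a function of a small number of global shape parameters of $s$ (such as $|s|$ and the number of diagonal hits in $s$). The parameters $(a,b)$ are then tuned as functions of $k$---plausibly of order $1/k$, so that $x$ and $y$ stay bounded as $k$ grows---to minimize the worst-case bound uniformly in $s\in\mathcal S_k$.

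The main obstacle is the absolute value inside $\sum_\ell|g_s^{(\ell)}(\mathcal C)|$: while $\sum_\ell g_s^{(\ell)}(\mathcal C)$ is a linear quantity easily summable via the identities above, $\sum_\ell|g_s^{(\ell)}(\mathcal C)|$ depends on identifying the worst-case sign pattern across $\ell$ and, correspondingly, the extremal ``shape'' of $s\in\mathcal S_k$. I expect the extremal samples to be the ``row-concentrated'' or ``nearly balanced'' ones, where the combinatorial constraints above are saturated. Once that extremum is isolated, the verification $g_{\mathcal S_k}(\mathcal C_k)\leq 9/8$ reduces to a single algebraic inequality in $(a,b,k)$, whose tightness in the limit $k\to\infty$ foreshadows the asymptotic optimality to be established in the next section.
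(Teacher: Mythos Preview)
Your outline is pointed in the right direction---work inside $\mathfrak S_k^{fix}$ and exhibit a concrete witness---but it stops short of the one idea that makes the bound go through. The paper (citing~\cite{CH17,CH15}) does not ``tune'' the two free parameters $(a,b)$; it commits to the specific point
\[
a=\frac{k-1}{k^2},\qquad b=-\frac{1}{k^2},\qquad\text{whence } x=\frac{2k-1}{k^2},\ y=-\frac{1}{k^2},
\]
i.e.\ the \emph{strong homogeneous flow} $\mathcal C^*_k$ of Definition~\ref{ex-s-flow}. The crucial feature of this choice is $y=b$: the distinction between the orbits $\mathcal J_2$ and $\mathcal J_5$ disappears, and your five-term decomposition collapses to
\[
g_s^{(\ell)}(\mathcal C^*_k)=\frac{m_{Ind_s}(\ell)}{k}-\frac{\alpha(s)}{k^2},
\]
which depends on $s$ only through $\alpha(s)$ and on $\ell$ only through $m_{Ind_s}(\ell)$. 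This instantly resolves what you flag as the ``main obstacle'': the sign of $g_s^{(\ell)}$ is simply $\mathrm{sgn}(m_{Ind_s}(\ell)k-\alpha(s))$, positive iff $\ell\in Ind_s$ (since $\alpha(s)\le k$). Summing the absolute values gives the closed form of Lemma~\ref{flow-value},
\[
g_s(\mathcal C^*_k)=\frac{\alpha(s)\bigl(3k-2\beta(s)\bigr)}{k^2},
\]
and because $s\in\mathcal S_k$ forces $\beta(s)\ge\alpha(s)$, this is at most $\alpha(3k-2\alpha)/k^2$, a quadratic maximised at $\alpha=3k/4$ with value $9/8$.

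Without the commitment $y=b$, your sign analysis is a genuine difficulty, not a formality: for a generic point of $\mathfrak S_k^{fix}$ the sign of $g_s^{(\ell)}$ depends on the fine orbit counts $\delta_\ell,\epsilon_\ell,\mu_\ell,\nu_\ell,\theta_\ell$ separately, and the worst-case $s$ is not determined by $\alpha(s)$ and $\beta(s)$ alone (the later Sections~\ref{section-perturbation}--\ref{section-perturb} show exactly how delicate this becomes away from $\mathcal C^*_k$). So the gap is not in the strategy but in the missing choice of witness: once you write down $\mathcal C^*_k$, the proof is a two-line computation; until you do, the ``single algebraic inequality in $(a,b,k)$'' you anticipate does not materialise.
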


By Theorem \ref{2-dim}, there exists an optimal point $\mathcal C_k\in \mathfrak S^{fix}_k$ for $\mathcal{P}_{\mathcal{S}_k}$. Moreover, by Theorem \ref{fixedpoint}, we can assume $\mathcal C_k=((c^{(1)}_{i,j}), (c^{(2)}_{i,j}), \dots, (c^{(k)}_{i,j}))$ takes the form as in (\ref{fixed-form}) with $a, b, x, y$ replaced by $a_k, b_k, x_k, y_k$, respectively, to emphasize its dependence on $k$, that is,
\begin{equation}\label{opt-point}
c^{(\ell)}_{i,j}=\left\{
              \begin{array}{lll}
               x_k, & \hbox{if $i=j=\ell$,}\\
               y_k, & \hbox{if $i=j\neq \ell$}\\
               a_k, & \hbox{if $i=\ell;j\neq \ell$ or $j=\ell; i\neq \ell$,}\\
               b_k, & \hbox{if otherwise,}
              \end{array} \right.
\end{equation}
where $x_k+(k-1)y_k=1$ and $y_k+(k-2)b_k+a_k=0$.

\begin{lem}\label{eval-entry}
For $x_k, y_k, a_k, b_k$ defined in (\ref{opt-point}), we have
\begin{itemize}
  \item [1)] $y_k=O(\frac{1}{k^2})$;
  \item [2)] $x_k=O(\frac{1}{k})$;
  \item [3)] $a_k=\frac{1}{k}+O(\frac{1}{k^2})$;
  \item [4)] $b_k=\frac{-1}{k^2}+O(\frac{1}{k^3})$.
\end{itemize}
\end{lem}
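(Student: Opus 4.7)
The plan is to leverage $g_{\mathcal{S}_k}(\mathcal{C}_k)=\mathcal{O}_{\mathcal{S}_k}\le 9/8$ (Theorem~\ref{bounded-1}) by evaluating $g_s(\mathcal{C}_k)$ at two carefully chosen samples in $\mathcal{S}_k$. Since the linear constraints $x_k+(k-1)a_k=1$ and $a_k+y_k+(k-2)b_k=0$ make 2)--4) routine algebraic consequences of 1), the main work is to prove $y_k=O(1/k^2)$.

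First, I would take the full diagonal $s_0=\{(j,j):j=1,\ldots,k\}\in\mathcal{S}_k$. From the fixed-point form (\ref{opt-point}), $g_{s_0}^{(\ell)}(\mathcal{C}_k)=x_k+(k-1)y_k$ for every $\ell$, so $g_{s_0}(\mathcal{C}_k)=k\,|x_k+(k-1)y_k|\le 9/8$ yields
\[
|x_k+(k-1)y_k|\le \frac{9}{8k}.
\]
Next, to isolate $y_k$ itself, I would use the partial diagonal $s_m=\{(j,j):j=1,\ldots,m\}\in\mathcal{S}_k$ with $m=\lfloor k/2\rfloor$. A parallel computation gives $g_{s_m}^{(\ell)}=x_k+(m-1)y_k$ for $\ell\le m$ and $g_{s_m}^{(\ell)}=m\,y_k$ for $\ell>m$, hence
\[
g_{s_m}(\mathcal{C}_k)=m\bigl(|x_k+(m-1)y_k|+(k-m)|y_k|\bigr).
\]
The key identity is $x_k+(m-1)y_k=\bigl(x_k+(k-1)y_k\bigr)-(k-m)y_k$; combining $|A|+|B|\ge|A-B|$ with the reverse triangle inequality then gives
\[
g_{s_m}(\mathcal{C}_k)\ge m\bigl(2(k-m)|y_k|-|x_k+(k-1)y_k|\bigr).
\]
The ceiling $g_{s_m}(\mathcal{C}_k)\le 9/8$ together with the full-diagonal bound produces $2m(k-m)|y_k|\le 9/8+m\cdot 9/(8k)\le 27/16$; since $m(k-m)=\Theta(k^2)$ for $m=\lfloor k/2\rfloor$, this yields $|y_k|=O(1/k^2)$, which is 1).

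With 1) in hand, 2) is immediate from the full-diagonal bound: $|x_k|\le 9/(8k)+(k-1)|y_k|=O(1/k)$. Then 3) follows from $x_k+(k-1)a_k=1$, namely $a_k=(1-x_k)/(k-1)=1/(k-1)+O(1/k^2)=1/k+O(1/k^2)$; and 4) from the second linear constraint, $b_k=-(a_k+y_k)/(k-2)=-1/(k(k-2))+O(1/k^3)=-1/k^2+O(1/k^3)$. The main delicacy is the choice $m\approx k/2$: among samples of the form $\{(j,j):j\in T\}$, this size maximizes the coefficient $m(k-m)$ of $y_k$ in $g_{s_m}$, and it is precisely this $\Theta(k^2)$ amplification that converts the constant-order ceiling on $g_{\mathcal{S}_k}$ into the $O(1/k^2)$ decay of $y_k$.
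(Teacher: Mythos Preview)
Your proof is correct and follows essentially the same approach as the paper: both evaluate $g_s(\mathcal{C}_k)$ on diagonal samples $\{(j,j): j\le \ell\}$ with $\ell=k$ and $\ell\approx k/2$, then read off 3) and 4) from the two linear constraints. The paper's derivation of 1) is slightly more direct---since both summands in $g_{s_m}(\mathcal{C}_k)=m\,|x_k+(m-1)y_k|+m(k-m)|y_k|$ are nonnegative, one may simply drop the first to obtain $m(k-m)|y_k|\le 9/8$ without the detour through the full-diagonal bound and the triangle inequality.
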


\begin{proof}
By definition, for $s=\{(1,1), (2,2), \dots, (\ell, \ell)\}\in \mathcal{S}_k$ and the optimal point $\mathcal C_k$ defined in (\ref{opt-point}), we have
\begin{equation}\label{value-diag-sample}
\begin{split}
g_s(\mathcal C_k)=\ell|x_k+(\ell-1)y_k|+(k-\ell)|\ell y_k|.
\end{split}
\end{equation}

Taking $\ell=k/2$ in (\ref{value-diag-sample}) and applying Theorem~\ref{bounded-1}, we have
$$
\frac{k^2}{4}|y_k|\leq g_s(\mathcal C_k)\leq \mathcal{O}_{\mathcal{S}_k}=O(1),
$$
which implies $y_k=O(\frac{1}{k^2})$. Hence $1)$ holds.

Taking $\ell=k$ in (\ref{value-diag-sample}) and applying Theorem~\ref{bounded-1}, we have
$$
k|x_k+(k-1)y_k|\leq \mathcal{O}_{\mathcal{S}_k}=O(1).
$$
Then, from $1)$ we deduce that $y_k=O(\frac{1}{k^2})$, which further implies $x_k=O(\frac{1}{k})$ by the above equation. Hence $2)$ holds.

Noticing that $(k-1)a_k=1-x_k$ and by $2)$, we have $a_k=\frac{1}{k}+O(\frac{1}{k^2})$. Hence $3)$ holds.

Noticing that $(k-2)b_k=-a_k-y_k$ and by $3)$ and $1)$, we have $b_k=\frac{-1}{k^2}+O(\frac{1}{k^3})$. Hence $4)$ holds.
\end{proof}

Now, we are ready to give the main result of this section.
\begin{thm}\label{lim-1}
$$
\lim_{k\rightarrow\infty}\mathcal{O}_{\mathcal{S}_k}=\frac{9}{8}.
$$
\end{thm}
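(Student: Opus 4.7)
The upper bound $\mathcal{O}_{\mathcal{S}_k}\le 9/8$ for $k\ge 3$ is already furnished by Theorem~\ref{bounded-1}, so the remaining task is to establish the matching lower bound $\liminf_{k\to\infty}\mathcal{O}_{\mathcal{S}_k}\ge 9/8$. By Theorem~\ref{2-dim}, I may pick an optimal $\mathcal{C}_k\in\mathfrak{S}_k^{fix}$ of the explicit form (\ref{opt-point}), whose four parameters $x_k,y_k,a_k,b_k$ obey the asymptotics of Lemma~\ref{eval-entry}. Since $\mathcal{O}_{\mathcal{S}_k}=g_{\mathcal{S}_k}(\mathcal{C}_k)\ge g_s(\mathcal{C}_k)$ for every $s\in\mathcal{S}_k$, the plan is to exhibit, for each sufficiently large $k$, a single sample $s_k\in\mathcal{S}_k$ for which $g_{s_k}(\mathcal{C}_k)\to 9/8$.

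Guided by the dominance $a_k\sim 1/k\gg |b_k|\sim 1/k^2$, I would take $r:=\lfloor 3k/4\rfloor$ and the cyclic sample
$$s_k=\{(1,2),(2,3),\ldots,(r-1,r),(r,1)\},$$
which belongs to $\mathcal{S}_k$ because its column coordinates are all distinct. Its design virtue is that each $\ell\in\{1,\ldots,r\}$ appears exactly once as a row coordinate and exactly once as a column coordinate among the pairs of $s_k$, while every $\ell>r$ appears in neither. Reading the structure (\ref{opt-point}) off this pattern gives
$$g^{(\ell)}_{s_k}(\mathcal{C}_k)=2a_k+(r-2)b_k\ \ (\ell\le r),\qquad g^{(\ell)}_{s_k}(\mathcal{C}_k)=rb_k\ \ (\ell>r),$$
so that $g_{s_k}(\mathcal{C}_k)=r\,|2a_k+(r-2)b_k|+(k-r)\,|rb_k|$. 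Plugging in the expansions of Lemma~\ref{eval-entry} and writing $\theta:=r/k\to 3/4$, the two summands collapse to $\theta(2-\theta)+O(1/k)$ and $\theta(1-\theta)+O(1/k)$ respectively, giving $g_{s_k}(\mathcal{C}_k)=3\theta-2\theta^2+O(1/k)\to 9/4-9/8=9/8$, which closes the gap with Theorem~\ref{bounded-1}.

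The main obstacle is discovering $s_k$. The correct design must simultaneously maximize the leading $a_k$-contribution per active index $\ell$ (which forces every active $\ell$ to play a dual row/column role, hence the cyclic structure), and balance the gain from the $r$ active indices against the absolute-value cost $(k-r)|rb_k|$ incurred by the inactive ones (which forces the proportion $\theta=3/4$ via the quadratic maximization $\max_{\theta\in[0,1]}(3\theta-2\theta^2)=9/8$, attained uniquely at $\theta=3/4$). Once these two design choices are in place, the verification is a short asymptotic calculation straight from Lemma~\ref{eval-entry}, with no further combinatorial input required.
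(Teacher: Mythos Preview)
Your proof is correct and follows essentially the same approach as the paper: both use the upper bound from Theorem~\ref{bounded-1} and, for the lower bound, evaluate the optimal fixed point $\mathcal{C}_k$ at a derangement-type sample on $\{1,\dots,r\}$ with $r\approx 3k/4$, yielding $g_s(\mathcal{C}_k)=r|2a_k+(r-2)b_k|+(k-r)|rb_k|=3\theta-2\theta^2+O(1/k)\to 9/8$. The only (cosmetic) difference is that you pick the explicit cyclic shift $\{(1,2),(2,3),\dots,(r,1)\}$, whereas the paper allows any derangement $\{(i_1,1),\dots,(i_\ell,\ell)\}$ with $\{i_1,\dots,i_\ell\}=\{1,\dots,\ell\}$ and $i_j\neq j$; the resulting computation is identical.
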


\begin{proof}
Let $s=\{(i_1,1),(i_2,2),\dots,(i_\ell, \ell)\}\in \mathcal{S}_k$ be such that $\{i_1, i_2,\dots,i_\ell\}=\{1,2,\dots,\ell\}$ and $i_j\neq j$ for $j=1,2,\dots,\ell$. It can be easily verified that $Ind_s=\{1,1,2,2,\dots, \ell,\ell\}$. Let $\mathcal C_k$ be an optimal point of $\mathcal{P}_{\mathcal{S}_k}$ taking the form in (\ref{opt-point}). Then, by definition, we have
\begin{equation*}
\begin{split}
g_s(\mathcal C_k)=&\sum_{i=1}^k|m_{Ind_s}(i)a_k+(\ell-m_{Ind_s}(i))b_k|\\
=&\sum_{i\in Ind_s}|m_{Ind_s}(i)a_k+(\ell-m_{Ind_s}(i))b_k|+\sum_{i\notin Ind_s}|\ell b_k|\\
=&\sum_{i\in Ind_s}|2a_k+(\ell-2)b_k|+\sum_{i\notin Ind_s}|\ell b_k|\\
=&\ell|2a_k+(\ell-2)b_k|+(k-\ell)|\ell b_k|.
\end{split}
\end{equation*}
It then follows from Lemma \ref{eval-entry} that $2a_k+(\ell-2)b_k>0$ and $b_k<0$, and furthermore,
\begin{equation}\label{sample-9-8}
\begin{split}
g_s(\mathcal C_k)=&\ell|2a_k+(\ell-2)b_k|+(k-\ell)|\ell b_k|\\
=&\ell(2a_k+(2\ell-k-2)b_k)\\
=&\ell\left(\frac{2}{k}+O\left(\frac{1}{k^2}\right)-\frac{2\ell}{k^2}+\frac{1}{k}+O\left(\frac{1}{k^2}\right)\right)\\
=&\frac{\ell}{k^2}(3k-2\ell+O(1)).
\end{split}
\end{equation}
Now, setting $\ell=\frac{3k}{4}+O(1)$ in Equation (\ref{sample-9-8}), we have
$$
g_s(\mathcal C_k)=\frac{1}{k^2}\left(\frac{3k}{4}+O(1)\right)\left(\frac{3k}{2}+O(1)\right)=\frac{9}{8}+O\left(\frac{1}{k}\right).
$$
Hence, $\mathcal{O}_{\mathcal{S}_k}\geq g_s(\mathcal C_k)=\frac{9}{8}+O(\frac{1}{k})$. On the other hand, by Lemma \ref{bounded-1}, we have that $\mathcal{O}_{\mathcal{S}_k}\leq\frac{9}{8}$, which immediately implies that
$$
\lim_{k\rightarrow\infty}\mathcal{O}_{\mathcal{S}_k}=\frac{9}{8}.
$$
\end{proof}

\section{The Strong Homogeneous Flow $\mathcal C^*_k$ }\label{section-shf}

We introduce in this section a sequence of the so-called strong homogeneous flows $\{\mathcal C^*_k\}$. We will show that it is asymptotically optimal for $\{\mathcal P_{\mathcal S_k}\}$, yet it only yield the exact optimal solution if and if only $k=1, 2, 6, 10$.  We note that $\{\mathcal C^*_k\}$ will also play important roles in terms of obtaining the exact optimal solutions; more specifically, as will be shown in Section~\ref{section-perturb}, the optimal solution $\mathcal C^{**}_k$, $k=3, 4, 5, 7, 8, 9$, are obtained using a perturbation from the corresponding $\mathcal C^*_k$.

\subsection{Asymptotic Optimality}

\begin{de}[{\bf Strong Homogeneous Flow}]\label{ex-s-flow}
Let
\begin{equation} \label{C-star-k}
\mathcal C^*_k :=((c^{*(1)}_{i,j}), (c^{*(2)}_{i,j}), \dots, (c^{*(k)}_{i,j}))
\end{equation}
where
\begin{equation}
c^{*(\ell)}_{i,j}=\left\{
              \begin{array}{lll}
               \frac{2}{k}-\frac{1}{k^2}, & \hbox{if $i=j=\ell$,}\\
               \frac{1}{k}-\frac{1}{k^2}, & \hbox{if $i=\ell;j\neq \ell$ or $j=\ell; i\neq \ell$,}\\
               -\frac{1}{k^2}, & \hbox{$i\neq \ell$ and $j\neq \ell$.}
              \end{array} \right.
\end{equation}
In the remainder of this paper, $\mathcal C^*_k$ will be referred to as the {\em strong homogeneous flow}. Here we note that $\mathcal{C}^*$ can be alternatively obtained by combining and concatenating elementary flows as in (IV.1) of~\cite{CH15}.
\end{de}

\begin{exmp}
By definition, we have $\mathcal C^*_1=((1))$ and
$$
\mathcal C^*_2=\left(\left(
\begin{array}{cc}
\frac{3}{4} & \frac{1}{4} \\
\frac{1}{4} &\frac{-1}{4} \\
\end{array}
\right),
\left(\begin{array}{cc}
\frac{-1}{4} & \frac{1}{4} \\
\frac{1}{4} & \frac{3}{4} \\
\end{array}
\right)\right),
$$
$$
\mathcal C^*_3=\left( \left(
    \begin{array}{ccc}
      \frac{5}{9} & \frac{2}{9} & \frac{2}{9} \\
      \frac{2}{9} &\frac{-1}{9} &\frac{-1}{9} \\
      \frac{2}{9} &\frac{-1}{9} &\frac{-1}{9} \\
    \end{array}
  \right),\left(
    \begin{array}{ccc}
\frac{-1}{9} & \frac{2}{9} &\frac{-1}{9} \\
      \frac{2}{9} & \frac{5}{9} & \frac{2}{9} \\
\frac{-1}{9} & \frac{2}{9} &\frac{-1}{9} \\
    \end{array}
  \right),\left(
    \begin{array}{ccc}
\frac{-1}{9} &\frac{-1}{9} & \frac{2}{9} \\
\frac{-1}{9} &\frac{-1}{9} & \frac{2}{9} \\
      \frac{2}{9} & \frac{2}{9} & \frac{5}{9} \\
    \end{array}
  \right)\right).
$$
Note that $\mathcal C^*_2$ is the unique optimal point for $\mathcal P_{\mathcal S_2}$.
\end{exmp}

The following observation in \cite{CH15} will serve as a key lemma in this paper.

\begin{lem}[\cite{CH15}]\label{flow-value}
Let $\mathcal C^*_k$ be the strong homogeneous multi-flow and $\mathcal{S}_k$ be the strongly reachable sample set. Then, for all $s\in \mathcal{S}_k$,
\begin{equation}\label{Eq-flow-value}
\begin{split}
g_s(\mathcal C^*_k)=\frac{3k\alpha(s)-2\beta(s)\alpha(s)}{k^2}.
\end{split}
\end{equation}
\end{lem}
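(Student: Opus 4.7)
My plan is to reduce the sum defining $g_s(\mathcal{C}^*_k)$ to an expression in the bookkeeping quantities $\alpha(s)$, $\beta(s)$, and $m_{Ind_s}(\ell)$ by first computing each $g_s^{(\ell)}(\mathcal{C}^*_k)$ in closed form, then determining its sign, and finally summing $|g_s^{(\ell)}(\mathcal{C}^*_k)|$ over $\ell$.

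The first step is to observe that all three cases in the definition of $c^{*(\ell)}_{i,j}$ can be uniformly written as
$$
c^{*(\ell)}_{i,j} = \tfrac{1}{k}\bigl(\mathbf{1}[i=\ell] + \mathbf{1}[j=\ell]\bigr) - \tfrac{1}{k^2},
$$
since the two indicators together take the value $2$, $1$, or $0$ in precisely the three cases listed in Definition~\ref{ex-s-flow}. Summing this identity over $(i,j)\in s$ and recognizing that $\sum_{(i,j)\in s}(\mathbf{1}[i=\ell]+\mathbf{1}[j=\ell])$ is exactly $m_{Ind_s}(\ell)$, I obtain
$$
g_s^{(\ell)}(\mathcal{C}^*_k) = \frac{m_{Ind_s}(\ell)}{k} - \frac{\alpha(s)}{k^2}.
$$

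The second step is a sign analysis. Since $s\in \mathcal{S}_k$ has pairwise distinct second coordinates, $\alpha(s)\le k$, so $\tfrac{1}{k}-\tfrac{\alpha(s)}{k^2}=\tfrac{k-\alpha(s)}{k^2}\ge 0$. Consequently, whenever $m_{Ind_s}(\ell)\ge 1$, i.e.\ $\ell\in Ind_s$, we have $g_s^{(\ell)}(\mathcal{C}^*_k)\ge 0$; and whenever $\ell\notin Ind_s$, we have $g_s^{(\ell)}(\mathcal{C}^*_k) = -\tfrac{\alpha(s)}{k^2}\le 0$. Thus the absolute values can be removed with a known sign in each case.

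The final step is to sum. Splitting according to whether $\ell\in Ind_s$ and using that $|\{\ell: m_{Ind_s}(\ell)\ne 0\}| = \beta(s)$ while $\sum_{\ell\in Ind_s}m_{Ind_s}(\ell) = 2\alpha(s)$ (each pair $(i,j)$ contributes two coordinates to the multiset $Ind_s$), I get
$$
g_s(\mathcal{C}^*_k)=\sum_{\ell\in Ind_s}\!\!\left(\tfrac{m_{Ind_s}(\ell)}{k}-\tfrac{\alpha(s)}{k^2}\right)+\sum_{\ell\notin Ind_s}\tfrac{\alpha(s)}{k^2}=\tfrac{2\alpha(s)}{k}-\tfrac{\beta(s)\alpha(s)}{k^2}+\tfrac{(k-\beta(s))\alpha(s)}{k^2},
$$
which simplifies directly to $\tfrac{3k\alpha(s)-2\beta(s)\alpha(s)}{k^2}$, as claimed. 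The only place where any care is required is the sign argument, and that is where the hypothesis $s\in\mathcal{S}_k$ (not an arbitrary $k$-sample) enters, via $\alpha(s)\le k$; the remaining calculations are routine.
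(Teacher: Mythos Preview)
Your proof is correct and proceeds along exactly the lines the paper implicitly uses: the paper does not reprove this lemma here (it is cited from~\cite{CH15}), but it does record the identical intermediate identity $g_s^{(\ell)}(\mathcal C^*_k)=\frac{m_{Ind_s}(\ell)}{k}-\frac{\alpha(s)}{k^2}$ later in the proof of Lemma~\ref{val-dif-sam-en}, together with the same sign trichotomy, so your argument matches the paper's approach.
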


We now define
$$
\mathcal{S}_k(a, b):=\{s\in \mathcal{S}_k\|\alpha(s)=a, \beta(s)=b\}.
$$
The following two lemmas follow from Lemma~\ref{flow-value} via straightforward computations.

\begin{lem}\label{sub-opt-value}
For $\ell=1,2,\dots$, we have
\begin{description}
  \item[1)] If $k=4\ell$, then $g_{s}(\mathcal C^*_k)$ reaches the maximum $\frac{9}{8}$ when $s\in \mathcal{S}_k(3\ell, 3\ell)$;
  \item[2)] If $k=4\ell+1$, then $g_s(\mathcal C^*_k)$ reaches the maximum $\frac{18\ell^2+9\ell+1}{16\ell^2+8\ell+1}$ when $s\in \mathcal{S}_k(3\ell+1, 3\ell+1)$;
  \item[3)] If $k=4\ell+2$, then $g_s(\mathcal C^*_k)$ reaches the maximum $\frac{9\ell^2+9\ell+2}{8\ell^2+8\ell+2}$ when $s\in \mathcal{S}_k(3\ell+1, 3\ell+1)\cup \mathcal{S}_k(3\ell+2, 3\ell+2)$;
  \item[4)] If $k=4\ell+3$, then $g_s(\mathcal C^*_k)$ reaches the maximum $\frac{18\ell^2+27\ell+10}{16\ell^2+24\ell+9}$ when $s\in \mathcal{S}_k(3\ell+2, 3\ell+2)$.
\end{description}
\end{lem}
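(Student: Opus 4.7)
The plan is to leverage Lemma~\ref{flow-value} to express $g_s(\mathcal{C}^*_k)$ purely in terms of the two integer statistics $\alpha(s)$ and $\beta(s)$, namely
$$
g_s(\mathcal{C}^*_k)=\frac{\alpha(s)\bigl(3k-2\beta(s)\bigr)}{k^2},
$$
and then to maximize this expression over all feasible pairs $(\alpha,\beta)$ realizable by some $s \in \mathcal{S}_k$. This reduces a combinatorial optimization over samples to a two-variable integer optimization on a small region of the plane.

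Next I would characterize the feasible region for $(\alpha,\beta)$ and collapse it to a single dimension. Since the $j$-coordinates of any $s \in \mathcal{S}_k$ are strictly increasing (hence distinct), the multiset $Ind_s$ already contains at least $\alpha(s)$ distinct values coming from the $j$'s alone, so $\beta(s) \geq \alpha(s)$; trivially $\beta(s) \leq k$. For fixed $\alpha$, the expression $\alpha(3k-2\beta)/k^2$ is strictly decreasing in $\beta$ (note $3k - 2\beta \geq k > 0$ whenever $\beta \leq k$), so the maximum is attained at the minimum feasible $\beta$, namely $\beta=\alpha$. This minimum is always achievable: taking $s=\{(j,j):j\in[\alpha]\}$ gives $Ind_s=\{1,1,2,2,\dots,\alpha,\alpha\}$ and hence $\beta(s)=\alpha(s)=\alpha$. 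Therefore the problem reduces to maximizing the single-variable function $f(n):=n(3k-2n)/k^2$ over integers $n\in\{1,2,\dots,k\}$.

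Finally, I would carry out the integer optimization by cases on $k \bmod 4$. The function $f$ is a downward parabola with continuous maximum $9/8$ at $n^*=3k/4$, so by concavity the integer maximum is attained at the integer(s) closest to $n^*$. Writing $k=4\ell+r$ with $r\in\{0,1,2,3\}$, we have $n^*=3\ell+3r/4$. For $r=0$ the max is exactly $f(3\ell)=9/8$; for $r=1$ the nearest integer is $3\ell+1$ (distance $1/4$), and substitution yields $f(3\ell+1)=(18\ell^2+9\ell+1)/(16\ell^2+8\ell+1)$; for $r=3$ the nearest integer is $3\ell+2$ (distance $1/4$), giving $f(3\ell+2)=(18\ell^2+27\ell+10)/(16\ell^2+24\ell+9)$; for $r=2$ the two integers $3\ell+1$ and $3\ell+2$ are equidistant from $n^*=3\ell+3/2$, producing the tied value
$$
f(3\ell+1)=f(3\ell+2)=\frac{2(3\ell+1)(3\ell+2)}{(4\ell+2)^2}=\frac{9\ell^2+9\ell+2}{8\ell^2+8\ell+2}.
$$
There is no serious obstacle in this proof; the only point requiring a moment's care is this last tie, which is exactly what forces item~3 to list the union $\mathcal{S}_k(3\ell+1,3\ell+1)\cup\mathcal{S}_k(3\ell+2,3\ell+2)$. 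Everything else is routine bookkeeping and algebraic verification that the four closed-form expressions coincide with $f$ evaluated at the appropriate integer.
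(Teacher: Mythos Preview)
Your proposal is correct and is exactly the ``straightforward computation'' from Lemma~\ref{flow-value} that the paper invokes without spelling out. The reduction to the one-variable quadratic $f(n)=n(3k-2n)/k^2$ via the observation $\beta(s)\geq\alpha(s)$ (with equality achievable), followed by the case split on $k\bmod 4$ to locate the integer(s) nearest $3k/4$, is precisely the intended argument.
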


\begin{lem}\label{sec-max-value}
For $\ell=1,2\dots$, we have
\begin{description}
  \item[1)] If $k=4\ell$, then $g_{s}(\mathcal C^*_k)$ reaches the second largest value $\frac{9}{8}-\frac{2}{k^2}$ when $s\in \mathcal{S}_k(3\ell-1, 3\ell-1)\cup \mathcal{S}_k(3\ell+1, 3\ell+1)$;
  \item[2)] If $k=4\ell+1$, then $g_s(\mathcal C^*_k)$ reaches the second largest value  $\frac{18\ell^2+9\ell+1}{16\ell^2+8\ell+1}-\frac{1}{k^2}$ when $s\in \mathcal{S}_k(3\ell, 3\ell)$;
  \item[3)] If $k=4\ell+2$, then $g_s(\mathcal C^*_k)$ reaches the second largest value $\frac{9\ell^2+9\ell+2}{8\ell^2+8\ell+2}-\frac{4}{k^2}$ when $s\in \mathcal{S}_k(3\ell, 3\ell)\cup \mathcal{S}_k(3\ell+3, 3\ell+3)$;
  \item[4)] If $k=4\ell+3$, then $g_s(\mathcal C^*_k)$ reaches the second largest value $\frac{18\ell^2+27\ell+10}{16\ell^2+24\ell+9}-\frac{1}{k^2}$ when $s \in \mathcal{S}_k(3\ell+3, 3\ell+3)$.
\end{description}
\end{lem}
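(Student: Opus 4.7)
The plan is to reduce the problem to a two-variable discrete extremal question via Lemma~\ref{flow-value}. Since $g_s(\mathcal C^*_k)$ depends only on the pair $(\alpha(s),\beta(s))$ through
\begin{equation*}
F(a,b) := \frac{a(3k - 2b)}{k^2},
\end{equation*}
ranking $\{g_s(\mathcal C^*_k) : s \in \mathcal S_k\}$ is equivalent to ranking $F$ on the lattice set $\Omega_k := \{(a,b)\in \mathbb Z^2 : 1\leq a \leq b \leq \min(2a,k)\}$. An elementary construction (fix $a$ distinct column indices $j_1 < \cdots < j_a$ and place the $i$-coordinates so that exactly $b-a$ of them fall outside $\{j_1,\dots,j_a\}$) shows that every $(a,b) \in \Omega_k$ is realized by some $s \in \mathcal S_k$. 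So I only need to find the second-largest value of $F$ on $\Omega_k$ for each residue class $k \bmod 4$.

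Two structural observations drive the analysis. First, for $b \leq k < 3k/2$ the map $a \mapsto F(a,b)$ is strictly increasing, so $F(a,b) \leq F(b,b)$ on $\Omega_k$ with equality iff $a = b$; writing $b = a+d$ with $d \geq 0$ this becomes
\begin{equation*}
F(a, a+d) = \phi(a) - \frac{2ad}{k^2}, \qquad \phi(b) := F(b,b) = \frac{b(3k-2b)}{k^2}.
\end{equation*}
Second, $\phi$ is a concave parabola symmetric about $b = 3k/4$, whose integer maximizer $\alpha^*$ and maximal value are already supplied by Lemma~\ref{sub-opt-value}. Any off-diagonal competitor $(a, a+d)$ with $d \geq 1$ pays a penalty $2ad/k^2$, so to come within $O(1/k^2)$ of $\phi(\alpha^*)$ one would need both $\phi(\alpha^*) - \phi(a) = O(1/k^2)$ (forcing $a$ within a bounded distance of $\alpha^* = \Theta(k)$) and $2ad/k^2 = O(1/k^2)$ (forcing $ad = O(1)$), which are incompatible. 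Hence the second-maximum of $F$ is attained on the diagonal, at an integer neighbor of $\alpha^*$.

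The remaining step is a direct computation using the identity
\begin{equation*}
\phi(\alpha^*) - \phi(a) = \frac{(\alpha^* - a)\bigl(3k - 2(\alpha^* + a)\bigr)}{k^2}.
\end{equation*}
For $k = 4\ell$, both $a = 3\ell \pm 1$ give decrement $2/k^2$, matching item 1); for $k = 4\ell+1$, $a = 3\ell$ gives $1/k^2$ (beating $a = 3\ell+2$ at $3/k^2$), matching item 2); and for $k = 4\ell+3$, $a = 3\ell+3$ gives $1/k^2$ (beating $a = 3\ell+1$ at $3/k^2$), matching item 4). The case $k = 4\ell+2$ is the delicate one: since $3\ell+1$ and $3\ell+2$ tie for the maximum of $\phi$, the next-best diagonal candidates are $a \in \{3\ell, 3\ell+3\}$, each with decrement $4/k^2$, and one must additionally verify that the off-diagonal pair $(3\ell+1, 3\ell+2)$ connecting the two maximizers does not beat them. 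Its decrement $2(3\ell+1)/k^2$ exceeds $4/k^2$ iff $\ell \geq 1$, which is exactly the hypothesis of the lemma. This inequality is the main obstacle; the analogous but strictly weaker off-diagonal checks dispose of the other residue classes.
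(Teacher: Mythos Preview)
Your proposal is correct and is exactly the approach the paper has in mind: the paper states that both Lemma~\ref{sub-opt-value} and Lemma~\ref{sec-max-value} ``follow from Lemma~\ref{flow-value} via straightforward computations'' and gives no further details, so your reduction to ranking $F(a,b)=a(3k-2b)/k^2$ over $\Omega_k$ and the subsequent case analysis constitute a fleshed-out version of that omitted computation.

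One small remark on exposition: the middle paragraph with the $O(1/k^2)$ heuristic is not by itself rigorous for small $\ell$, but your final paragraph makes the argument watertight. The clean way to phrase the key step is: since $F(a,b)\le \phi(a)$ with equality iff $b=a$, any off-diagonal candidate with $a$ \emph{not} a $\phi$-maximizer already satisfies $F(a,b)<\phi(a)\le$ (second diagonal value); so the only off-diagonal competitors to worry about are those with $a$ equal to a maximizer, and for these the penalty $2a/k^2$ exceeds the decrement to the second diagonal value in every residue class once $\ell\ge 1$ (the tightest case being $k=4\ell+2$, which you single out). Stating it this way removes any appearance of an asymptotic gap.
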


\begin{de}[{\bf Asymptotically Optimal Solution}]
A sequence $\{\mathcal C_k \| C_k \in \mathfrak S_k\}$ is said to be {\em asymptotically optimal} for $\{\mathcal{P}_{\mathcal{S}_k}\}$ if
$$
\lim_{k\rightarrow\infty}g_{\mathcal{S}_k}(\mathcal C_k)=\lim_{k\rightarrow\infty} \mathcal{O}_{\mathcal{S}_k}.
$$
\end{de}

The following theorem then immediately follows from Lemma~\ref{sub-opt-value}:
\begin{thm}
$\{\mathcal C^*_k\}$ is asymptotically optimal for $\{\mathcal{P}_{\mathcal{S}_k}\}$.
\end{thm}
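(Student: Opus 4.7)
The plan is to combine Theorem \ref{lim-1} with Lemma \ref{sub-opt-value} and verify that the four residue-class formulas for $g_{\mathcal{S}_k}(\mathcal C^*_k)$ all have the same limit $9/8$ as $k\to\infty$. By Theorem \ref{lim-1} we already know $\lim_{k\to\infty}\mathcal{O}_{\mathcal{S}_k}=9/8$, so the asymptotic optimality of $\{\mathcal C^*_k\}$ reduces to showing that $\lim_{k\to\infty}g_{\mathcal{S}_k}(\mathcal C^*_k)=9/8$.

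First I would observe that, by definition, $g_{\mathcal{S}_k}(\mathcal C^*_k)=\max_{s\in\mathcal{S}_k}g_s(\mathcal C^*_k)$ is precisely the quantity whose explicit value is given by Lemma \ref{sub-opt-value}, broken into the four cases $k\equiv 0,1,2,3 \pmod 4$. Then I would take $k\to\infty$ in each case: for $k=4\ell$ the maximum equals $9/8$ outright; for $k=4\ell+1$ we have $(18\ell^2+9\ell+1)/(16\ell^2+8\ell+1)\to 18/16=9/8$; for $k=4\ell+2$ we have $(9\ell^2+9\ell+2)/(8\ell^2+8\ell+2)\to 9/8$; and for $k=4\ell+3$ we have $(18\ell^2+27\ell+10)/(16\ell^2+24\ell+9)\to 9/8$. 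All four subsequences tend to $9/8$, so the full sequence does as well.

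Combining this with Theorem \ref{lim-1} gives $\lim_{k\to\infty}g_{\mathcal{S}_k}(\mathcal C^*_k)=9/8=\lim_{k\to\infty}\mathcal{O}_{\mathcal{S}_k}$, which is exactly the definition of asymptotic optimality for $\{\mathcal C^*_k\}$. There is essentially no obstacle here: Lemma \ref{sub-opt-value} has already done the substantive work of identifying which samples $s$ achieve the maximum of $g_s(\mathcal C^*_k)$ on $\mathcal{S}_k$ and evaluating that maximum, so the remaining argument is just a limit computation in each of the four residue classes. The only mild subtlety is to note that the maximum of a sequence indexed by $k$ is taken over all $k$ simultaneously, i.e.\ that the limit exists along the full sequence and not merely along each residue class; but since all four residue-class limits coincide, this is immediate.
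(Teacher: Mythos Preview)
Your proposal is correct and takes essentially the same approach as the paper, which simply states that the theorem ``immediately follows from Lemma~\ref{sub-opt-value}.'' You have just made explicit the limit computation in each residue class and the invocation of Theorem~\ref{lim-1}, which the paper leaves implicit.
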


\subsection{Optimality of $\mathcal C^*_6$ and $\mathcal C^*_{10}$}

In this section, we prove that $\mathcal C^*_k$ is an optimal solution to $\mathcal P_{\mathcal S_k}$ if and only if $k=1, 2, 6, 10$. We first state some needed notations and lemmas.

For any $\mathcal{C} \in \mathfrak S_k$, let $\mathcal{S}^{\dagger}_k(\mathcal{C})$ denote the set of all $k$-sample $s$ such that
\begin{equation}\label{canonical-signal-property}
\left\{
  \begin{array}{ll}
    g^{(\ell)}_s(\mathcal C)>0, & \hbox{if $\ell\in Ind_s$,} \\
    g^{(\ell)}_s(\mathcal C)<0, & \hbox{if $\ell\notin Ind_s$.}
  \end{array}
\right.
\end{equation}
We then have the following lemma.
\begin{lem}\label{signal-local}
For any $d$, we have
$$
\mathcal{S}_k(d, d) \subset \mathcal{S}^{\dagger}_k(\mathcal{C}_k^*).
$$
\end{lem}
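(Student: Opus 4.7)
The plan is to carry out a short, explicit computation of $g^{(\ell)}_s(\mathcal C^*_k)$ for an arbitrary $s \in \mathcal S_k(d,d)$, and then read off the sign of the result from whether $\ell \in Ind_s$ or not. The whole argument hinges on repackaging Definition~\ref{ex-s-flow} into a single closed-form expression rather than three cases.

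First I would verify that for all $(i,j,\ell) \in [k]^3$,
$$
c^{*(\ell)}_{i,j} \;=\; \frac{\mathbf 1_{\{i=\ell\}} + \mathbf 1_{\{j=\ell\}}}{k} \;-\; \frac{1}{k^2},
$$
which collapses the three cases $i=j=\ell$, ``exactly one of $i,j$ equals $\ell$'', and ``neither equals $\ell$'' into a linear expression in indicators. Summing over the $\alpha(s)=d$ pairs of $s$ turns the indicator piece into a count of how many times $\ell$ appears in the multiset $Ind_s$, and so
$$
g^{(\ell)}_s(\mathcal C^*_k) \;=\; \frac{1}{k}\sum_{(i,j)\in s}\bigl(\mathbf 1_{\{i=\ell\}}+\mathbf 1_{\{j=\ell\}}\bigr) \;-\; \frac{d}{k^2} \;=\; \frac{k\,m_{Ind_s}(\ell) - d}{k^2}.
$$

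With this one-line formula in hand, the sign analysis is immediate. The hypothesis $\beta(s)=d$ says that exactly $d$ of the $k$ indices occur in $Ind_s$, leaving $k-d$ indices outside it. For $\ell \notin Ind_s$ one gets $m_{Ind_s}(\ell)=0$, hence $g^{(\ell)}_s(\mathcal C^*_k) = -d/k^2 < 0$, matching the second line of (\ref{canonical-signal-property}). For $\ell \in Ind_s$ one gets $m_{Ind_s}(\ell) \geq 1$, hence $g^{(\ell)}_s(\mathcal C^*_k) \geq (k-d)/k^2 > 0$ so long as $d < k$, matching the first line.

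There is essentially no obstacle; the main step is simply noticing the compact rewriting of $\mathcal C^*_k$, after which the argument reduces to one sum and a case split. The only delicate point is the boundary $d=k$, where $Ind_s=[k]$ makes the ``$\ell \notin Ind_s$'' case vacuous but degrades the lower bound $(k-d)/k^2$ to $0$; this boundary never matters for the applications in Lemma~\ref{sub-opt-value}, whose extremal samples always have $d = 3k/4 + O(1) < k$, so I would either restrict to $d<k$ or dispose of $d=k$ by a separate (equally direct) check using $\sum_\ell m_{Ind_s}(\ell) = 2d$.
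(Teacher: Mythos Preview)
Your argument is correct and essentially the same as the paper's: both read off the sign of $g^{(\ell)}_s(\mathcal C^*_k)$ directly from the definition of $\mathcal C^*_k$, with your closed form $g^{(\ell)}_s(\mathcal C^*_k)=(k\,m_{Ind_s}(\ell)-d)/k^2$ being exactly the identity the paper records later as~(\ref{g^l_s-all}), whereas the paper's proof of this lemma bounds the sum term-by-term to the same effect. Your caveat about $d=k$ is well taken---the paper's proof tacitly asserts $d<k$ as well, and the strict inequality in~(\ref{canonical-signal-property}) can genuinely fail when $d=k$ (any $\ell$ with $m_{Ind_s}(\ell)=1$ gives $g^{(\ell)}_s(\mathcal C^*_k)=0$), so restricting to $d<k$, which is all the subsequent applications need, is the correct resolution rather than a separate check.
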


\begin{proof}
Notice that for each $s\in \mathcal{S}_k(d, d)$, $\alpha(s)=d<k$ and hence $g^{(\ell)}_s(\mathcal C^*_k)$ is the sum of at most $k-1$ entries of $(c_{i, j}^{*(\ell)})$. By the definition of $\mathcal C^*_k$, we infer that if $\ell\in Ind_s$, then there exists at least one entry with value $(k-1)/k^2$ or $(2k-1)/k^2$  and the sum of the other entries are greater than or equal to $-(k-2)/k^2$ and hence $g^{(\ell)}_s(\mathcal C^*_k)>0$; and if $\ell \notin Ind_s$, then obviously $g^{(\ell)}_s(\mathcal C^*_k)=-d/k^2 < 0$.
\end{proof}

An element in a $k$-sample $s$ is said to be {\em diagonal} if its two coordinates are the same, otherwise {\em non-diagonal}. Let $\gamma(s)$ denote the number of diagonal elements in $s$. For example, let $s=\{(1,1),(3,3),(1,2)(1,4),(2,5)\}$ be a $5$-sample. Then, $(1,1),(3,3)$ are diagonal $5$-samples, whereas $(1,2),(1,4),(2,5)$ are non-diagonal $5$-samples, and furthermore $\gamma(s)=2$.

\begin{lem}\label{val-dif-sam}
For any $s \in \mathcal{S}_k(d, d)$ with $d<k$, there exists a neighborhood $N(\mathcal C^*_k, \varepsilon)\subset\mathfrak S_k^{fix}$ of $\mathcal C^*_k$ such that for all $\mathcal C\in N(\mathcal C^*_k, \varepsilon)$,
\begin{equation} \label{from-Cstar}
g_s(\mathcal C)-g_s(\mathcal C^*_k)=\gamma(s)(\overline{x}+(2d-k-1)\overline{y})+(d-\gamma(s))((2\overline{a}+(2d-k-2)\overline{b})),
\end{equation}
where $\overline{a},\overline{b},\overline{x},\overline{y}$ are defined by
$$
\mathcal C-\mathcal C^*_k=\left(\left(
              \begin{array}{ccccc}
                \overline{x} & \overline{a} & \overline{a} & \dots & \overline{a} \\
                \overline{a} & \overline{y} & \overline{b} & \dots & \overline{b} \\
                \overline{a} & \overline{b} & \overline{y} & \dots & \overline{b} \\
                \vdots & \vdots &  & \ddots & \vdots \\
                \overline{a} & \overline{b} & \overline{b} &\dots & \overline{y} \\
              \end{array}
            \right), \left(
              \begin{array}{ccccc}
                \overline{y} & \overline{a} & \overline{b} & \dots & \overline{b} \\
                \overline{a} & \overline{x} & \overline{a} & \dots & \overline{a} \\
                \overline{b} & \overline{a} & \overline{y} & \dots & \overline{b} \\
                \vdots & \vdots &  & \ddots & \vdots \\
                \overline{b} & \overline{a} & \overline{b} &\dots & \overline{y} \\
              \end{array}
            \right),\dots, \left(
              \begin{array}{ccccc}
                \overline{y} & \overline{b} & \overline{b} & \dots & \overline{a} \\
                \overline{b} & \overline{y} & \overline{b} & \dots & \overline{a} \\
                \overline{b} & \overline{b} & \overline{y} & \dots & \overline{a} \\
                \vdots & \vdots &  & \ddots & \vdots \\
                \overline{a} & \overline{a} & \overline{a} &\dots & \overline{x} \\
              \end{array}
            \right)\right).
$$
\end{lem}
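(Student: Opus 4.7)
\medskip
\noindent\textbf{Proof proposal.}
The plan is to exploit the fact that $g_s$, although globally only piecewise linear, becomes an honest linear function of $\mathcal{C}$ in a small enough neighborhood of $\mathcal{C}^*_k$. Since $s \in \mathcal{S}_k(d,d)$ with $d<k$, Lemma~\ref{signal-local} gives $s \in \mathcal{S}^{\dagger}_k(\mathcal{C}^*_k)$, i.e.\ the scalars $g^{(\ell)}_s(\mathcal{C}^*_k)$ are strictly positive for $\ell \in Ind_s$ and strictly negative for $\ell \notin Ind_s$. Each $g^{(\ell)}_s$ is a linear functional on $\mathfrak S_k$ (it is just a finite sum of entries of $\mathcal{C}$), so by continuity there is some $\varepsilon>0$ such that the signs of all $k$ functionals $g^{(\ell)}_s(\mathcal{C})$ agree with those of $g^{(\ell)}_s(\mathcal{C}^*_k)$ on $N(\mathcal{C}^*_k,\varepsilon) \subset \mathfrak S_k^{fix}$. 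Throughout this neighborhood we therefore have the linear expression
\[
g_s(\mathcal{C}) \;=\; \sum_{\ell \in Ind_s} g^{(\ell)}_s(\mathcal{C}) \;-\; \sum_{\ell \notin Ind_s} g^{(\ell)}_s(\mathcal{C}),
\]
and consequently $g_s(\mathcal{C})-g_s(\mathcal{C}^*_k)$ equals the same expression applied to $\overline{\mathcal{C}} := \mathcal{C}-\mathcal{C}^*_k$, which itself lies in the linear subspace with the structure displayed in the lemma.

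The next step is to evaluate $g^{(\ell)}_s(\overline{\mathcal{C}})$ by bookkeeping, splitting $s$ into its $\gamma(s)$ diagonal elements and its $d-\gamma(s)$ non-diagonal elements. Writing $n^{d}_\ell$ for the number of diagonal members $(i,i)\in s$ with $i=\ell$ and $n^{nd}_\ell$ for the number of non-diagonal members $(i,j)\in s$ with $\ell\in\{i,j\}$, the structure of $\overline{\mathcal{C}}$ immediately yields
\[
g^{(\ell)}_s(\overline{\mathcal{C}}) \;=\; n^{d}_\ell \overline{x} + (\gamma(s)-n^{d}_\ell)\overline{y} + n^{nd}_\ell \overline{a} + (d-\gamma(s)-n^{nd}_\ell)\overline{b}.
\]
For $\ell \notin Ind_s$ we simply have $n^{d}_\ell=n^{nd}_\ell=0$, giving $g^{(\ell)}_s(\overline{\mathcal{C}}) = \gamma(s)\overline{y} + (d-\gamma(s))\overline{b}$, and there are exactly $k-d$ such indices. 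Summing the displayed expression over all $\ell \in [k]$ uses the identities $\sum_\ell n^{d}_\ell = \gamma(s)$ and $\sum_\ell n^{nd}_\ell = 2(d-\gamma(s))$, which telescope to
\[
\sum_{\ell=1}^{k} g^{(\ell)}_s(\overline{\mathcal{C}}) \;=\; \gamma(s)\bigl[\overline{x}+(k-1)\overline{y}\bigr] + (d-\gamma(s))\bigl[2\overline{a}+(k-2)\overline{b}\bigr].
\]

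Finally, rewrite the target difference as
\[
g_s(\mathcal{C})-g_s(\mathcal{C}^*_k) \;=\; \sum_{\ell=1}^{k} g^{(\ell)}_s(\overline{\mathcal{C}}) \;-\; 2\sum_{\ell \notin Ind_s} g^{(\ell)}_s(\overline{\mathcal{C}}),
\]
plug in the two formulas above, and collect like terms. The $\overline{y}$-coefficient becomes $\gamma(s)\bigl[(k-1)-2(k-d)\bigr]=\gamma(s)(2d-k-1)$ and the $\overline{b}$-coefficient becomes $(d-\gamma(s))\bigl[(k-2)-2(k-d)\bigr]=(d-\gamma(s))(2d-k-2)$, producing exactly (\ref{from-Cstar}). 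The only real obstacle I anticipate is getting the counting identities and the signs right in the last combination; the rest is a direct consequence of the strict-sign hypothesis supplied by Lemma~\ref{signal-local}. Note that the affine constraints $\overline{x}+(k-1)\overline{a}=0$ and $\overline{y}+\overline{a}+(k-2)\overline{b}=0$ characterizing the tangent space to $\mathfrak S_k^{fix}$ are not needed to derive the formula itself, so I would not invoke them here.
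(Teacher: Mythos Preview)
Your proposal is correct and follows essentially the same route as the paper: invoke Lemma~\ref{signal-local} to fix the signs of the $g^{(\ell)}_s(\mathcal{C}^*_k)$, use continuity to linearize $g_s$ on a small neighborhood, and then do the bookkeeping on the entries of $\overline{\mathcal{C}}$. The only cosmetic difference is that you compute the total $\sum_{\ell=1}^k g^{(\ell)}_s(\overline{\mathcal{C}})$ and then subtract $2\sum_{\ell\notin Ind_s}$, while the paper evaluates $\sum_{\ell\in Ind_s}$ and $\sum_{\ell\notin Ind_s}$ separately before combining; your counting identities $\sum_\ell n^{d}_\ell=\gamma(s)$ and $\sum_\ell n^{nd}_\ell=2(d-\gamma(s))$ are exactly what makes this rearrangement work, and the final collection of coefficients matches (\ref{from-Cstar}).
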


\begin{proof}
Recall from Lemma \ref{signal-local} that $g^{(\ell)}_s(\mathcal C^*_k)>0$ if $\ell\in Ind_s$ and $g^{(\ell)}_s(\mathcal C^*_k)<0$ if $\ell\notin Ind_s$. Since each function $g^{(\ell)}_s(\cdot)$ is continuous, there exists a sufficiently small $\varepsilon$ such that for all $\mathcal C\in N(\mathcal C^*_k, \varepsilon)$ and all $s\in \mathcal{S}_k(d, d)$, $g^{(\ell)}_s(\mathcal C)>0$ if $\ell\in Ind_s$ and $g^{(\ell)}_s(\mathcal C)<0$ if $\ell\notin Ind_s$. For any $\mathcal C\in N(\mathcal C^*_k, \varepsilon)$, we have
\begin{equation}\label{Eq-dif}
\begin{split}
g_s(\mathcal C)-g_s(\mathcal C^*_k)=&\sum_{\ell=1}^kg^{(\ell)}_s(\mathcal C)-\sum_{\ell=1}^kg^{(\ell)}_s(\mathcal C^*_k)\\
                                   =&\sum_{\ell\in Ind_s}(g^{(\ell)}_s(\mathcal C)-g^{(\ell)}_s(\mathcal C^*_k))+\sum_{\ell\notin Ind_s}(g^{(\ell)}_s(\mathcal C^*_k)-g^{(\ell)}_s(\mathcal C)).
\end{split}
\end{equation}
Noticing that $\alpha(s)=\beta(s)=d$, it is easy to check that
\begin{equation} \label{Eq-in-index}
\sum_{\ell\in Ind_s}(g^{(\ell)}_s(\mathcal C)-g^{(\ell)}_s(\mathcal C^*_k))=\gamma(s)\overline{x}+(d-1)\gamma(s)\overline{y}
+2(d-\gamma(s))\overline{a}+(d-2)(d-\gamma(s))\overline{b}
\end{equation}
and
\begin{equation}\label{Eq-out-index}
\begin{split}
\sum_{\ell\notin Ind_s}(g^{(\ell)}_s(\mathcal C^*_k)-g^{(\ell)}_s(\mathcal C))=(d-k)(\gamma(s)\overline{y}+(d-\gamma(s))\overline{b}).
\end{split}
\end{equation}
Combining Equations (\ref{Eq-in-index}) and (\ref{Eq-out-index}) and plugging the results into (\ref{Eq-dif}), we have
$$g_s(\mathcal C)-g_s(\mathcal C^*_k)=\gamma(s)(\overline{x}+(2d-k-1)\overline{y})+(d-\gamma(s))(2\overline{a}+(2d-k-2)\overline{b}),
$$
which completes the proof.
\end{proof}

\begin{lem}\label{solu-all-s}
For any fixed $d<k$, there exists $\mathcal C\in \mathfrak S^{fix}_k$ such that $g_s(\mathcal C)-g_s(\mathcal C^*_k)<0$ for all $s\in \mathcal{S}_k(d, d)$.
\end{lem}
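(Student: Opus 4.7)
The plan is to realize $\mathcal C$ as a small perturbation of $\mathcal C^*_k$ inside the two-dimensional affine space $\mathfrak S^{fix}_k$, so that Lemma~\ref{val-dif-sam} applies directly. A tangent direction to $\mathfrak S^{fix}_k$ at $\mathcal C^*_k$ is described by $(\overline x,\overline y,\overline a,\overline b)$ satisfying the two homogeneous linear constraints $\overline x+(k-1)\overline a=0$ and $\overline y+\overline a+(k-2)\overline b=0$, so I would parametrize by $(\overline a,\overline b)\in\mathbb R^2$ after eliminating $\overline x$ and $\overline y$. Substituting this parametrization into the formula of Lemma~\ref{val-dif-sam} rewrites the target difference as
$$
g_s(\mathcal C)-g_s(\mathcal C^*_k)=\gamma(s)\,A(\overline a,\overline b)+(d-\gamma(s))\,B(\overline a,\overline b),
$$
where $A(\overline a,\overline b)=-2(d-1)\overline a-(2d-k-1)(k-2)\overline b$ and $B(\overline a,\overline b)=2\overline a+(2d-k-2)\overline b$ are two homogeneous linear forms in $(\overline a,\overline b)$.

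The second step is to reduce the infinitely many sample-wise inequalities to the two conditions $A<0$ and $B<0$. Over $\mathcal S_k(d,d)$, the parameter $\gamma(s)$ realizes every integer in $\{0,1,\ldots,d\}$: the value $\gamma(s)=d$ is attained by the all-diagonal sample $\{(j_r,j_r)\}_{r=1}^d$, while $\gamma(s)=0$ (for $d\ge 2$) is attained by any derangement-like sample whose first coordinates permute the second coordinates without fixed points. Since $\gamma(s)$ and $d-\gamma(s)$ are both nonnegative, arranging $A<0$ and $B<0$ forces the strict inequality $g_s(\mathcal C)-g_s(\mathcal C^*_k)<0$ for every $s\in\mathcal S_k(d,d)$. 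The corner case $d=1$ is handled separately and trivially, since only $\gamma(s)=1$ arises.

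The final step is a short algebraic check: because $A$ and $B$ are linear and homogeneous in $(\overline a,\overline b)$, the region $\{A<0\}\cap\{B<0\}\subset\mathbb R^2$ is a non-empty open cone whenever the coefficient matrix of $(A,B)$ is non-singular. A direct expansion gives its determinant as $2(k-d)(2d-k)$, which vanishes in the allowed range $1\le d<k$ exactly when $d=k/2$. Thus, for $d\neq k/2$ one simply picks any $(\overline a,\overline b)$ in the cone and scales it down so that the resulting $\mathcal C$ lies within the neighborhood $N(\mathcal C^*_k,\varepsilon)$ supplied by Lemma~\ref{val-dif-sam}. The main obstacle I anticipate is precisely the borderline case $d=k/2$, where the linearization collapses to $A=-\tfrac{k-2}{2}B$ so the cone is empty; one would then have to either construct $\mathcal C$ by a non-infinitesimal perturbation, or appeal to the observation (true throughout Section~\ref{section-perturb}) that the values of $d$ at which $g_s(\mathcal C^*_k)$ attains its maximum, as listed in Lemma~\ref{sub-opt-value}, never coincide with $k/2$.
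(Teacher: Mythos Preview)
Your approach coincides with the paper's: reduce via Lemma~\ref{val-dif-sam} to the two endpoint inequalities $A<0$, $B<0$ in the parameters $(\overline a,\overline b)$, eliminate $\overline x,\overline y$ using the fixed-point constraints, and observe that the two linear forms are independent so the open cone is non-empty. The paper carries out exactly this computation, arriving at the system~(\ref{Eq-system-simp}).

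You are right to flag $d=k/2$ as problematic, and in fact your analysis here is sharper than the paper's. The paper asserts that $d<k$ alone implies $\tfrac{2d-k-2}{2d-k-1}\neq\tfrac{k-2}{d-1}$; but cross-multiplying shows the two sides differ by $(2d-k)(d-k)/(\,(2d-k-1)(d-1)\,)$, which vanishes at $d=k/2$. So the paper's proof has the same gap you identified. Moreover, the lemma as stated is actually \emph{false} at $d=k/2$: for $k=4$, $d=2$, taking $s_1=\{(1,1),(2,2)\}$ and $s_2=\{(2,1),(1,2)\}$ one checks that for every $\mathcal C\in\mathfrak S^{fix}_4$,
\[
g_{s_1}(\mathcal C)+g_{s_2}(\mathcal C)=2|1-4a-2b|+4|a+2b|+4|a|+4|b|\geq |2-8a-4b+4a+8b+4a-4b|=2,
\]
so $\max(g_{s_1},g_{s_2})\geq 1=g_{s_i}(\mathcal C^*_4)$, and no $\mathcal C$ with the required strict inequalities exists. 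Your second suggested escape route is the correct one: every downstream use of Lemma~\ref{solu-all-s} (in Theorem~\ref{optimal-condition-theorem} and in the proof of Lemma~\ref{perturbation}) is at a value of $d$ coming from Lemma~\ref{sub-opt-value}, and those values never equal $k/2$. So the statement should really carry the hypothesis $d\neq k/2$, and with that hypothesis your argument (and the paper's) is complete.
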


\begin{proof}
Let $N(\mathcal C^*_k, \varepsilon)$ be the neighborhood of $\mathcal C^*_k$ as in Lemma \ref{val-dif-sam}. We will prove that there exists $\mathcal C\in N(\mathcal C^*_k, \varepsilon)$ such that $g_s(\mathcal C)-g_s(\mathcal C^*_k)<0$ for all $s\in \mathcal{S}_k(d, d)$. Note that, by Lemma \ref{val-dif-sam}, we only need to prove that there exist sufficiently small $\overline{a},\overline{b},\overline{x},\overline{y}$ satisfying the following system:
$$
\left\{
  \begin{array}{ll}
    i (\overline{x}+(2d-k-1)\overline{y}) + (d-i) (2\overline{a}+(2d-k-2)\overline{b}), \quad i=0, 1, \dots, d,\\
    \overline{x}+(k-1)\overline{a}=0,\\
    \overline{y}+(k-2)\overline{b}+\overline{a}=0.
  \end{array}
\right.
$$
Since the first and the $(d+1)$-th inequalities imply the second to the $d$-th inequalities, we only need to prove there exist sufficiently small $\overline{a},\overline{b},\overline{x},\overline{y}$ satisfying the following system:
\begin{equation}\label{Eq-system}
\left\{
  \begin{array}{ll}
    2\overline{a}+(2d-k-2)\overline{b}<0, \\
    \overline{x}+(2d-k-1)\overline{y}<0,\\
    \overline{x}+(k-1)\overline{a}=0,\\
    \overline{y}+(k-2)\overline{b}+\overline{a}=0,
  \end{array}
\right.
\end{equation}
or equivalently,
\begin{equation}\label{Eq-system-simp}
\left\{
  \begin{array}{ll}
    2\overline{a}+(2d-k-2)\overline{b}<0, \\
    2\overline{a}+\frac{(2d-k-1)(k-2)}{d-1}\overline{b}>0.\\
    \end{array}
\right.
\end{equation}
Since $d<k$, we have $\frac{2d-k-2}{2d-k-1}\neq\frac{k-2}{d-1}$ and hence $2d-k-2\neq \frac{(2d-k-1)(k-2)}{d-1}$, which implies that there exist sufficiently small $\overline{a}$ and $\overline{b}$ such that (\ref{Eq-system-simp}) holds. By the last two equations of (\ref{Eq-system}), $\overline{x},\overline{y}$ can also be chosen sufficiently small, which implies there exists $\mathcal C\in N(\mathcal C^*_k, \varepsilon)$ such that $g_s(\mathcal C)-g_s(\mathcal C^*_k)<0$ for all $s\in \mathcal{S}_k(d, d)$, which completes the proof.
\end{proof}

In what follows, a $k$-sample $s\in \mathcal S_k$ is said to be {\em maximizing} at $\mathcal C$ if $g_s(\mathcal C)=g_{\mathcal S_k}(\mathcal C)$, and we will use $\mathcal{S}_k^{max}(\mathcal{C})$ denote the set of all maximizing $k$-samples at $\mathcal{C}$. For a quick example, by 1) of Lemma~\ref{sub-opt-value}, when $k=4\ell$, any $s \in \mathcal S_k(3\ell, 3\ell)$ is a maximizing sample at $\mathcal{C}_k^*$; and moreover, Lemma~\ref{sub-opt-value} implies that $\mathcal S_k(3\ell, 3\ell)$ is the set of all maximizing $k$-samples, i.e., $\mathcal{S}_k^{max}(\mathcal{C}_k^*)=\mathcal S_k(3\ell, 3\ell)$. Now, we are ready to give the main result of this section.

\begin{thm}\label{optimal-condition-theorem}
$\mathcal C^*_k$ is an optimal solution to $\mathcal{P}_{\mathcal{S}_k}$ if and only if $k=1,2,6,10$.
\end{thm}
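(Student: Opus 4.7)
Since $g_{\mathcal{S}_k}$ is convex (Lemma~\ref{convex}) on the affine space $\mathfrak S_k$, and Theorem~\ref{2-dim} guarantees an optimum inside the two-dimensional affine subspace $\mathfrak S_k^{fix}$, optimality of $\mathcal{C}_k^*$ reduces to ruling out a descent direction at $\mathcal{C}_k^*$ within $\mathfrak S_k^{fix}$. The plan is therefore to exhibit a descent direction for every $k \notin \{1, 2, 6, 10\}$ and to rule one out for $k \in \{6, 10\}$; the cases $k = 1, 2$ are immediate from $\mathfrak S_1^{fix}$ being a point and from the uniqueness statement for $k=2$ in~\cite{CH17}.

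For $k \geq 3$ with $k \not\equiv 2 \pmod{4}$, Lemma~\ref{sub-opt-value} shows that $\mathcal{S}_k^{max}(\mathcal{C}_k^*)$ coincides with a single class $\mathcal{S}_k(d,d)$ for some $d < k$. Lemma~\ref{solu-all-s} then produces a $\mathcal{C} \in \mathfrak S_k^{fix}$ arbitrarily close to $\mathcal{C}_k^*$ that strictly decreases $g_s$ on every $s \in \mathcal{S}_k(d,d)$. Since Lemma~\ref{sec-max-value} leaves a strict gap between the maximum and the second-largest value among $\{g_s(\mathcal{C}_k^*)\}_s$, continuity keeps the non-maximizing samples safely below the maximum under a small enough perturbation, so $g_{\mathcal{S}_k}(\mathcal{C}) < g_{\mathcal{S}_k}(\mathcal{C}_k^*)$ and $\mathcal{C}_k^*$ is suboptimal.

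The substantive case is $k = 4\ell + 2$ with $\ell \geq 1$, where by Lemma~\ref{sub-opt-value} the maximizing set is $\mathcal{S}_k(3\ell+1, 3\ell+1) \cup \mathcal{S}_k(3\ell+2, 3\ell+2)$. By Lemma~\ref{val-dif-sam}, a direction described by $(\overline{a}, \overline{b}, \overline{x}, \overline{y})$ (subject to the fixed-point constraints $\overline{x} + (k-1)\overline{a} = 0$ and $\overline{y} + (k-2)\overline{b} + \overline{a} = 0$) is descending iff $\gamma A_d + (d-\gamma) B_d < 0$ for every realizable pair $(d, \gamma)$, where $A_d := \overline{x} + (2d-k-1)\overline{y}$ and $B_d := 2\overline{a} + (2d-k-2)\overline{b}$. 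For each relevant $d \geq 4$, both endpoints $\gamma = 0$ (a cyclic-derangement sample) and $\gamma = d$ (the all-diagonal sample) are realized inside $\mathcal{S}_k(d,d)$, so the descent condition collapses to the four strict inequalities $A_{d_1}, B_{d_1}, A_{d_2}, B_{d_2} < 0$ with $d_1 = 3\ell+1$, $d_2 = 3\ell+2$.

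Eliminating $\overline{x}, \overline{y}$ via the linear constraints reduces this to a linear feasibility problem in $(\overline{a}, \overline{b})$. A direct slope comparison (splitting on the sign of $\overline{b}$) shows the controlling quantity is $\tfrac{2(2\ell-1)}{3} - \ell = \tfrac{\ell-2}{3}$: the system is infeasible precisely when this is nonpositive, i.e.\ $\ell \in \{1, 2\}$, corresponding to $k \in \{6, 10\}$. Combined with the previous cases this yields exactly $k \in \{1, 2, 6, 10\}$. The main obstacle I expect is the combinatorial step verifying that $\gamma(s)$ really hits both extremes $\{0, d\}$ inside $\mathcal{S}_k(d,d)$ for each $d \in \{d_1, d_2\}$, together with the bookkeeping that shows two of the four linear inequalities are automatically implied by the binding pair, so that the scalar slope comparison above suffices.
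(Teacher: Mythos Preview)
Your proposal is correct and follows essentially the same route as the paper: reduce to $\mathfrak S_k^{fix}$ via Theorem~\ref{2-dim}, use Lemma~\ref{val-dif-sam} to linearize the change of $g_s$ on the maximizing classes, and then test feasibility of the resulting linear system in $(\overline{a},\overline{b})$. The only cosmetic difference is that the paper verifies feasibility by exhibiting an explicit solution $(\overline{a},\overline{b})=(-\tfrac{11}{10}\ell\,\delta,\delta)$ for $\ell\ge 3$ and checking contradictions by hand for $\ell=1,2$, whereas you package the same computation as a single slope comparison yielding the threshold $\tfrac{\ell-2}{3}$; the combinatorial checks you flag (realizability of $\gamma\in\{0,d\}$ in $\mathcal S_k(d,d)$ and the redundancy of two of the four inequalities) are immediate.
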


\begin{proof}
The case $k=1$ is trivial and it is known~\cite{CH17} that $\mathcal C^*_2$ is an optimal point for $\mathcal{P}_2$. So we only need to prove $\mathcal C^*_6$ and $\mathcal C^*_{10}$ are respectively optimal points for $\mathcal{P}_6$ and $\mathcal{P}_{10}$, and $\mathcal C^*_k$ is not an optimal point for $\mathcal{P}_{\mathcal{S}_k}$ when $k \neq 1,2,6,10$.

In the remainder of the proof, we consider the following cases:

{\bf Case $1$:} $k > 1$ and $k \mod 4 \neq 2$. In this case, by Lemma~\ref{sub-opt-value}, there exists a $d$ such that $\mathcal{S}_k^{max}(\mathcal{C}_k^*)=\mathcal{S}_k(d, d)$. Then, by Lemma~\ref{solu-all-s}, for some sufficiently small $\varepsilon$, we can choose $\mathcal C \in N(\mathcal C^*_k, \varepsilon)$ such that for any maximizing $s$ at $\mathcal C^*_k$, the following two conditions hold: $(1)$ $g_s(\mathcal C)<g_s(\mathcal C^*_k)$; $(2)$ $s$ is also a maximizing sample at $\mathcal C$ (Here $(2)$ is true because $|\mathcal{S}_k|$ is finite, and the function $g_s(\cdot)$ is continuous over $\mathfrak S_k$ for each $s$). It then follows that $\max_{s\in \mathcal S_k}\{g_s(\mathcal{C})\}<\max_{s\in \mathcal S_k}\{g_s(\mathcal C^*_k)\}$, which means $\mathcal C^*_k$ is not an optimal point for $\mathcal{P}_{\mathcal{S}_k}$.

{\bf Case $2$:} $k=4\ell+2$ for some integer $\ell \geq 3$. In this case, $\mathcal C^*_k$ is not an optimal point for $\mathcal{P}_{\mathcal{S}_k}$. To prove this, we first show that for some sufficiently small $\varepsilon$, there exists $\mathcal{C} \in N(\mathcal C^*_k, \varepsilon)$ such that $g_s(\mathcal{C})-g_s(\mathcal{C}^*_k)<0$ for all $s \in \mathcal{S}_k(3\ell+1, 3\ell+1) \cup \mathcal{S}_k(3\ell+2, 3\ell+2)$ with $\ell \geq 3$. By Lemma~\ref{val-dif-sam}, we only need to prove that there exist sufficiently small $\overline{a},\overline{b},\overline{x},\overline{y}$ satisfying the following system:
\begin{equation}\label{Eq-2-system}
\hspace{-1cm} \left\{
  \begin{array}{ll}
  i (\overline{x}+(2(3\ell+1)-k-1)\overline{y}+(3\ell+1-i)(2\overline{a}+(2(3\ell+1)-k-2)\overline{b}))<0, \quad i=0, 1, \dots, 3 \ell +1,\\
  i (\overline{x}+(2(3\ell+2)-k-1)\overline{y}+(3\ell+2-i)(2\overline{a}+(2(3\ell+2)-k-2)\overline{b}))<0, \quad i=0, 1, \dots, 3 \ell +2,\\
  \overline{x}+(k-1)\overline{a}=0,\\
  \overline{y}+(k-2)\overline{b}+\overline{a}=0.
  \end{array}
\right.
\end{equation}
Applying a similar argument as in the proof of Lemma \ref{solu-all-s} and noticing that $k=4\ell+2$, we simplify the above system to
\begin{equation}\label{Eq-2-system-simp}
\left\{
  \begin{array}{ll}
    2\overline{a}+2(\ell-1) \overline{b}<0, \\
    2\overline{a}+2\ell\overline{b}<0, \\
    2\overline{a}+\frac{4(2\ell-1)}{3}\overline{b}>0,\\
    2\overline{a}+\frac{4\ell(2\ell+1)}{3\ell+1}\overline{b}>0.
    \end{array}
\right.
\end{equation}
One then verifies that for all $\delta>0$,
$$
\begin{cases}\label{Eq-2-system-simp-solution}
\overline{a}&=\frac{-11}{10}\ell\delta, \\
\overline{b}&=\delta,
\end{cases}
$$
is a solution to (\ref{Eq-2-system-simp}). Choosing $\delta > 0$ small enough, we deduce that for some sufficiently small $\varepsilon$ there exists $\mathcal{C} \in N(\mathcal C^*_k, \varepsilon)$ such that $g_s(\mathcal{C})-g_s(\mathcal C^*_k)<0$ for all $s\in \mathcal{S}_k(3\ell+1, 3\ell+1)\cup \mathcal{S}_k(3\ell+2, 3\ell+2)$ with $\ell \geq 3$. By the same reasoning as in Case $1$, for some sufficiently small $\varepsilon$, we can choose $\mathcal{C} \in N(\mathcal C^*_k, \varepsilon)$ such that for any $s\in \mathcal S_k^{max}(\mathcal C^*_k)$, the following two hold: $(1)$ $g_s(\mathcal{C})<g_s(\mathcal{C}^*_k)$; $(2)$ $s$ is a maximizing sample of $\mathcal{C}$. Hence $\max_{s\in \mathcal{S}_k}\{g_s(\mathcal{C})\}<\max_{s\in \mathcal{S}_k}\{g_s(\mathcal{C}^*_k)\}$, which means $\mathcal C^*_k$ is not an optimal point for $\mathcal{P}_{\mathcal{S}_k}$.

{\bf Case $3$:} $k=4\ell+2$ with $\ell=1$, i.e., $k=6$. In this case, consider Equation (\ref{Eq-2-system-simp}), which can be rewritten as
\begin{equation}\label{Eq-2-system-simp-1}
\left\{
  \begin{array}{ll}
    \overline{a}<0, \\
    \overline{a}+\overline{b}<0, \\
    3\overline{a}+2\overline{b}>0,\\
    2\overline{a}+3\overline{b}>0.\\
    \end{array}
\right.
\end{equation}
Note that this system has no solution because by the last two inequalities, we have $\overline{a}+\overline{b}>0$, which contradicts the second inequality. Hence, within $N(\mathcal C^*_k, \varepsilon)$ (defined in Lemma \ref{val-dif-sam}), there is no point $\mathcal C$ such that $g_s(\mathcal C)-g_s(\mathcal C^*_k)<0$ for all maximizing $s$ at $\mathcal C^*_6$ (Note that by Lemma~\ref{sub-opt-value}, the set of all such $s$ is $\mathcal{S}_6^{max}(\mathcal{C}^*_6)=\mathcal{S}_6(4, 4) \cup \mathcal{S}_6(5, 5)$). Hence, $\mathcal C^*_6$ is a local optimal point for $\mathcal{P}_{\mathcal{S}_6}$, and furthermore, by Lemma \ref{convex}, $\mathcal C^*_6$ is a global optimal point for $\mathcal{P}_{\mathcal{S}_6}$.

{\bf Case $4$:} $k=4\ell+2$ with $\ell=2$, i.e., $k=10$. In this case, consider Equation (\ref{Eq-2-system-simp}), which can be rewritten as
\begin{equation}\label{Eq-2-system-simp-1}
\left\{
  \begin{array}{ll}
    \overline{a}+\overline{b}<0, \\
    \overline{a}+2\overline{b}<0, \\
    \overline{a}+2\overline{b}>0,\\
    7\overline{a}+20\overline{b}>0.\\
    \end{array}
\right.
\end{equation}
Note that this system has no solution because the third inequality contradicts the second inequality. Hence, within $N(\mathcal C^*_k, \varepsilon)$ (defined in Lemma \ref{val-dif-sam}), there is no point $\mathcal C$ such that $g_s(\mathcal C)-g_s(\mathcal C^*_k)<0$ for all maximizing $s$ at $\mathcal{C}^*_{10}$ (Note that by Lemma~\ref{sub-opt-value}, the set of all such $s$ is $\mathcal{S}_{10}^{max}(\mathcal{C}^*_10)=\mathcal{S}_{10}(7, 7) \cup \mathcal{S}_{10}(8, 8)$). Hence, $\mathcal C^*_{10}$ is a local optimal point for $\mathcal{P}_{\mathcal{S}_{10}}$, and again by Lemma~\ref{convex}, a global optimal point for $\mathcal{P}_{\mathcal{S}_{10}}$,
\end{proof}

The following corollary says that the upper bound $\frac{9}{8}$ (derived in~\cite{CH17}) on $\mathcal{O}_{\mathcal S_k}$, $k \geq 3$, cannot be achieved.
\begin{co}
$\mathcal{O}_{\mathcal S_k}<\frac{9}{8}$ for all $k\geq 3$.
\end{co}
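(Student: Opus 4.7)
The plan is to split on the residue $k \bmod 4$, since Lemma~\ref{sub-opt-value} gives substantially different descriptions of $g_{\mathcal S_k}(\mathcal C^*_k)$ in the four residue classes, and this split mirrors the case analysis already carried out in Theorem~\ref{optimal-condition-theorem}. Throughout, the key inequality to exploit is $\mathcal{O}_{\mathcal S_k} \leq g_{\mathcal S_k}(\mathcal C^*_k)$, which holds simply because $\mathcal C^*_k \in \mathfrak S_k$ is a feasible point of $\mathcal P_{\mathcal S_k}$.

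First I would handle the easy case, $k \geq 3$ with $k \not\equiv 0 \pmod 4$. Here parts 2), 3), 4) of Lemma~\ref{sub-opt-value} give $g_{\mathcal S_k}(\mathcal C^*_k)$ as an explicit rational expression in the single parameter $\ell$ (where $k = 4\ell+1, 4\ell+2, 4\ell+3$ respectively). A short cross-multiplication in each case shows this rational number is strictly less than $9/8$; indeed, after clearing denominators, $9 \cdot (\text{denominator}) - 8 \cdot (\text{numerator})$ evaluates to a small positive integer ($1$ or $2$) independent of $\ell$, so the strict inequality is uniform. Feasibility of $\mathcal C^*_k$ then yields $\mathcal O_{\mathcal S_k} \leq g_{\mathcal S_k}(\mathcal C^*_k) < 9/8$.

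For the remaining case $k \equiv 0 \pmod 4$ (i.e.\ $k = 4, 8, 12, \dots$), this direct route breaks down because part 1) of Lemma~\ref{sub-opt-value} gives $g_{\mathcal S_k}(\mathcal C^*_k) = 9/8$ with equality, so one cannot conclude strictness by evaluating $\mathcal C^*_k$ alone. Here I would instead invoke Theorem~\ref{optimal-condition-theorem}: since $k \in \{4, 8, 12, \dots\}$ never lies in $\{1, 2, 6, 10\}$, $\mathcal C^*_k$ is not an optimal point of $\mathcal P_{\mathcal S_k}$. Consequently some $\mathcal C \in \mathfrak S_k$ satisfies $g_{\mathcal S_k}(\mathcal C) < g_{\mathcal S_k}(\mathcal C^*_k) = 9/8$, forcing $\mathcal O_{\mathcal S_k} < 9/8$.

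Combining the two cases exhausts all $k \geq 3$, which proves the corollary. I do not anticipate a real obstacle, since the deep work is already packaged in the two invoked results. The only conceptual point worth flagging is that the proof cannot be carried out uniformly by evaluating $g_{\mathcal S_k}(\mathcal C^*_k)$; the $k \equiv 0 \pmod 4$ case genuinely requires a perturbation argument, and that perturbation is precisely what Theorem~\ref{optimal-condition-theorem} delivers via its Case 1 analysis.
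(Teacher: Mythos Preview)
Your proposal is correct and matches the argument implicit in the paper, which states the result as a corollary of Theorem~\ref{optimal-condition-theorem} without giving a separate proof; the case split on $k \bmod 4$ and the appeal to Lemma~\ref{sub-opt-value} for the non-zero residues, together with the non-optimality of $\mathcal C^*_k$ when $k\equiv 0\pmod 4$, is exactly the intended route.

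One small technicality: Lemma~\ref{sub-opt-value} is stated for $\ell=1,2,\dots$, so the case $k=3$ (which would require $\ell=0$ in part~4)) is not formally covered by your citation. This is easily patched either by checking $k=3$ directly (Table~\ref{table-k-3} gives $g_{\mathcal S_3}(\mathcal C^*_3)=10/9<9/8$) or by observing that the formula in part~4) remains valid at $\ell=0$.
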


\section{A Perturbation Framework} \label{section-perturbation}

In this section, we propose a perturbation framework for the case $k \mod 4 \neq 2$ that not only promises a better solution to $\mathcal{P}_{\mathcal{S}_k}$ than $\mathcal{C}_k^*$ but also yields exact optimal solutions at least for some small $k$ (see Section~\ref{section-perturb} for exact solutions for the cases $k=3, 4, 5, 7, 8, 9$).

\subsection{Valid Perturbation Direction}

First of all, we define
$$
L(\mathfrak S^{fix}_k):=\{\mathbf{\Delta}=\mathcal C-\mathcal C'\|\mathcal C,\mathcal C'\in \mathfrak S^{fix}_k\}.
$$
Note that any $\mathbf{\Delta} = (\Delta^{(1)}, \Delta^{(2)}, \dots, \Delta^{(k)}) \in L(\mathfrak S^{fix}_k)$ can be written as
$$
\mathbf{\Delta}=\left(\left(
              \begin{array}{ccccc}
                \overline{x} & \overline{a} & \overline{a} & \dots & \overline{a} \\
                \overline{a} & \overline{y} & \overline{b} & \dots & \overline{b} \\
                \overline{a} & \overline{b} & \overline{y} & \dots & \overline{b} \\
                \vdots & \vdots &  & \ddots & \vdots \\
                \overline{a} & \overline{b} & \overline{b} &\dots & \overline{y} \\
              \end{array}
            \right), \left(
              \begin{array}{ccccc}
                \overline{y} & \overline{a} & \overline{b} & \dots & \overline{b} \\
                \overline{a} & \overline{x} & \overline{a} & \dots & \overline{a} \\
                \overline{b} & \overline{a} & \overline{y} & \dots & \overline{b} \\
                \vdots & \vdots &  & \ddots & \vdots \\
                \overline{b} & \overline{a} & \overline{b} &\dots & \overline{y} \\
              \end{array}
            \right),\dots, \left(
              \begin{array}{ccccc}
                \overline{y} & \overline{b} & \overline{b} & \dots & \overline{a} \\
                \overline{b} & \overline{y} & \overline{b} & \dots & \overline{a} \\
                \overline{b} & \overline{b} & \overline{y} & \dots & \overline{a} \\
                \vdots & \vdots &  & \ddots & \vdots \\
                \overline{a} & \overline{a} & \overline{a} &\dots & \overline{x} \\
              \end{array}
            \right)\right),
$$
where $\overline{x}(\mathbf{\Delta})+(k-1)\overline{a}(\mathbf{\Delta})=0$ and $\overline{y}(\mathbf{\Delta})+\overline{a}(\mathbf{\Delta})+(k-2)\overline{b}(\mathbf{\Delta})=0$. Here,
to emphasize the dependence, we have written $\overline{x},\overline{y},\overline{a},\overline{b}$ as $\overline{x}(\mathbf{\Delta})$, $\overline{y}(\mathbf{\Delta})$, $\overline{a}(\mathbf{\Delta})$, $\overline{b}(\mathbf{\Delta})$, respectively.

\begin{lem}\label{perturbation}
For any fixed $d<k$, there exist $\mathbf{\Delta}\in L(\mathfrak S^{fix}_k)$ and $\varepsilon_0 > 0$ such that for all $s_1,s_2\in \mathcal{S}_k(d, d)$ and all $0 < \varepsilon < \varepsilon_0$,
$$
g_{s_1}(\mathcal C^*_k+\varepsilon \mathbf{\Delta})-g_{s_1}(\mathcal C^*_k)=g_{s_2}(\mathcal C^*_k+\varepsilon \mathbf{\Delta})-g_{s_2}(\mathcal C^*_k)<0,
$$
where $\varepsilon \mathbf{\Delta} = (\varepsilon\mathbf{\Delta}^{(1)}, \varepsilon \mathbf{\Delta}^{(2)}, \dots, \varepsilon \mathbf{\Delta}^{(k)})$.
\end{lem}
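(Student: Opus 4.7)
My plan is to invoke Lemma \ref{val-dif-sam} to recast the claim as a small linear feasibility problem on $L(\mathfrak S^{fix}_k)$. For any $\mathbf{\Delta} \in L(\mathfrak S^{fix}_k)$ with parameters $(\overline{x}(\mathbf{\Delta}), \overline{y}(\mathbf{\Delta}), \overline{a}(\mathbf{\Delta}), \overline{b}(\mathbf{\Delta}))$ and every sufficiently small $\varepsilon > 0$ (so that $\mathcal C^*_k + \varepsilon\mathbf{\Delta}$ lies in the neighborhood provided by Lemma \ref{val-dif-sam}), the change factors as
$$
g_s(\mathcal C^*_k + \varepsilon\mathbf{\Delta}) - g_s(\mathcal C^*_k) \;=\; \varepsilon\bigl[\gamma(s)\,A + (d-\gamma(s))\,B\bigr],
$$
where $A := \overline{x} + (2d-k-1)\overline{y}$ and $B := 2\overline{a} + (2d-k-2)\overline{b}$ depend only on $\mathbf{\Delta}$, not on $s$. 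Rewriting this as $\varepsilon[\,dB + \gamma(s)(A-B)\,]$ reveals that the only $s$-dependence lies in $\gamma(s)$, so the expression is constant across $s \in \mathcal{S}_k(d,d)$ exactly when $A = B$, in which case the common value is $\varepsilon d B$.

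It therefore suffices to exhibit $\mathbf{\Delta} \in L(\mathfrak S^{fix}_k)$ satisfying the linear equation $A(\mathbf{\Delta}) = B(\mathbf{\Delta})$ together with the strict inequality $B(\mathbf{\Delta}) < 0$. The membership $\mathbf{\Delta} \in L(\mathfrak S^{fix}_k)$ encodes two linear relations ($\overline{x} + (k-1)\overline{a} = 0$ and $\overline{y} + \overline{a} + (k-2)\overline{b} = 0$), so $L(\mathfrak S^{fix}_k)$ is two-dimensional. Appending $A = B$ as a third linear condition cuts this down to a one-dimensional subspace on which $B$ is linear in the remaining free parameter; all that remains is to verify that $B$ is not identically zero on this subspace, after which the sign of $B$ can be controlled by choosing that free parameter appropriately.

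The main step is the elimination that computes $B$ on the reduced subspace. Concretely, I would solve $\overline{x}$ and $\overline{y}$ in terms of $\overline{a}, \overline{b}$ from the two defining equations of $L(\mathfrak S^{fix}_k)$, substitute into $A - B = 0$ to get $\overline{a}$ as a scalar multiple of $\overline{b}$, and then evaluate $B$ in $\overline{b}$ alone. A direct computation (using the factorization $k^2 - 3dk + 2d^2 = (k-d)(k-2d)$) yields $B = \tfrac{(k-d)(k-2d)}{d}\,\overline{b}$. Since $d < k$ ensures $k - d > 0$, this coefficient is nonzero whenever $d \neq k/2$; choosing $\overline{b}$ of sign opposite to $(k-2d)$ then gives $B < 0$, and $\varepsilon_0$ is obtained by further shrinking so that $\mathcal C^*_k + \varepsilon\mathbf{\Delta}$ stays in the neighborhood of Lemma \ref{val-dif-sam} for all $0 < \varepsilon < \varepsilon_0$. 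The degenerate case $d = k/2$ (which can occur only for even $k$) is the one obstacle: here the uniformity condition forces $B = 0$ and thus precludes strict negativity. This case does not arise in the subsequent development, since the perturbation framework operates under $k \bmod 4 \neq 2$ and applies the lemma at the maximizing index $d \approx 3k/4$, safely away from $k/2$.
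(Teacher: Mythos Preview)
Your approach matches the paper's: impose the extra constraint $A=B$ (equivalently $\overline{a}=(\tfrac{k^2}{2d}-k+1)\overline{b}$) on top of the system from Lemma~\ref{solu-all-s} to force uniformity, then check that $B<0$ remains achievable. Your explicit elimination yielding $B=\tfrac{(k-d)(k-2d)}{d}\,\overline{b}$ is a sharper version of the paper's ``readily verified,'' and your observation that $d=k/2$ is a genuine obstruction is correct---the paper glosses over this, but as you note it never arises in the subsequent applications, which take $d\approx 3k/4$.
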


\begin{proof}
To prove this lemma, we only need to slightly modify the proof of Lemma~\ref{solu-all-s}. More precisely, we assume $\overline{a}=(\frac{k^2}{2d}-k+1)\overline{b}$, which implies $2\overline{a}+(2d-k-2)\overline{b}=\overline{x}+(2d-k-1)\overline{y}$, which is an extra constraint added to (\ref{Eq-system}) ensuring (with the help of (\ref{from-Cstar})) a uniform change from $g_{s}(\mathcal C^*_k)$ to $g_{s}(\mathcal C^*_k+\varepsilon \mathbf{\Delta})$ over all $s \in \mathcal{S}_k(d, d)$, i.e., for all $s_1,s_2\in \mathcal{S}_k(d, d)$,
$$
g_{s_1}(\mathcal C^*_k+\varepsilon \mathbf{\Delta})-g_{s_1}(\mathcal C^*_k)=g_{s_2}(\mathcal C^*_k+\varepsilon \mathbf{\Delta})-g_{s_2}(\mathcal C^*_k),
$$
if $\varepsilon$ is small enough. Moreover, it can be readily verified that the new system is still solvable with the extra constraint. Finally, choosing a solution and then $\overline{a}$ (or equivalently, $\overline{b}$) properly similarly as in the proof of Lemma~\ref{solu-all-s} yields the desired $\mathbf{\Delta}$.
\end{proof}

\begin{de}[{\bf Valid Perturbation Direction}]\label{perturbation base}
For any $k \mod 4 \neq 2$, let $d$ be such that the set of all maximizing $k$-samples at $\mathcal{C}_k^*$ is $\mathcal{S}_k(d, d)$ (see Lemma~\ref{sub-opt-value}). There is a unique {\em valid perturbation direction} $\mathbf{\Delta}_k^* \in L(\mathfrak S^{fix}_k)$ such that 1) $|\overline{a}(\mathbf{\Delta}_k^*)|=1$; 2) $\overline{a}(\mathbf{\Delta}_k^*)=(\frac{k^2}{2d}-k+1)\overline{b}(\mathbf{\Delta}_k^*)$; 3) $2\overline{a}(\mathbf{\Delta}_k^*)+(2d-k-2)\overline{b}(\mathbf{\Delta}_k^*)<0$.
\end{de}

\begin{rem}
The ideas behind the above definition can be roughly explained as follows: As in the proof of Lemma~\ref{perturbation}, Conditions 2) and 3) will guarantee that the value of $g_s(\mathcal{C})$ uniformly decreases (over all $s \in \mathcal{S}_k(d, d)$) when perturbing $\mathcal{C}$ from $\mathcal{C}^*_k$ along the direction of $\mathbf{\Delta}_k^*$, and Condition 1) serves to ``normalize'' $\mathbf{\Delta}_k^*$ to yield the uniqueness.
\end{rem}

\begin{rem} \label{Delta-Star}
For $k \mod 4 = 2$, there are two distinct $d_1, d_2$ such that the set of all maximizing $k$-samples is $\mathcal{S}_k(d_1, d_1) \cup \mathcal{S}_k(d_2, d_2)$ (see Lemma~\ref{sub-opt-value}), and a perturbation direction that is valid with respect to $\mathcal{S}_k(d_1, d_1)$ may not be valid with respect to $\mathcal{S}_k(d_2, d_2)$. This is the key reason that our perturbation framework may not work for the case $k \mod 4 = 2$, since it requires a uniform (over all $s \in \mathcal{S}_k^{max}(\mathcal{C}_k^*)$) decrease of the maximum in the course of perturbation.
\end{rem}

\begin{exmp}\label{perturbation-base-k}
Let $k=3$ and $d=2$. It then follows from Condition $(2)$ of Definition \ref{perturbation base} that  $\overline{b}(\mathbf{\Delta}_3^*)=4\overline{a}(\mathbf{\Delta}_3^*)$. And by Condition $(3)$, we infer that $\overline{a}(\mathbf{\Delta}_3^*)>0$. Moreover, by Condition $(1)$, we have $\overline{a}(\mathbf{\Delta}_3^*)=1$, $\overline{b}(\mathbf{\Delta}_3^*)=4$, $\overline{x}(\mathbf{\Delta}_3^*)=-2$ and $\overline{y}(\mathbf{\Delta}_3^*)=-5$, or equivalently,
$$
\mathbf{\Delta}_3^*=\left(\left(
              \begin{array}{ccc}
                -2 & 1 & 1 \\
                1 & -5 & 4 \\
                1 & 4 & -5\\
              \end{array}
            \right), \left(
              \begin{array}{ccc}
                -5 & 1 & 4 \\
                1 & -2 & 1\\
                4 & 1 & -5 \\
              \end{array}
            \right),\left(
              \begin{array}{ccc}
                -5 & 4 & 1 \\
                4 & -5 & 1 \\
                1 & 1  & -2 \\
              \end{array}
            \right)\right).
$$
Similarly, set $k=4$ and $d=3$. Going through similar arguments as above, we have $\overline{a}(\mathbf{\Delta}_4^*)=-1$, $\overline{b}(\mathbf{\Delta}_4^*)=3$, $\overline{x}(\mathbf{\Delta}_4^*)=3$ and $\overline{y}(\mathbf{\Delta}_4^*)=-5$.
\end{exmp}

\subsection{Valid Perturbation Size}

In this section, assuming $k \mod 4 \neq 2$, we discuss the valid perturbation size for $\mathbf{\Delta}_k^*$. For notational convenience, we will henceforth write
$$
h_s^{(\ell)}(\varepsilon \mathbf{\Delta}_k^*):=g_{s}^{(\ell)}(\mathcal C^*_k+\varepsilon \mathbf{\Delta}_k^*)-g_s^{(\ell)}(\mathcal C^*_k), \quad h_s(\varepsilon \mathbf{\Delta}_k^*):=g_{s}(\mathcal C^*_k+ \varepsilon \mathbf{\Delta}_k^*)-g_s(\mathcal C^*_k).
$$

First of all, we need the following definition.
\begin{de}[{\bf Valid Perturbation Size}]\label{valid-epsilon}
For a given $k$-sample $s$, $\varepsilon>0$ is called $g^{(\ell)}_s$-{\em valid}, $\ell=1, 2, \dots, k$, if
$$
g^{(\ell)}_{s}(\mathcal C^*_k+ \varepsilon \mathbf{\Delta}_k^*)\cdot g^{(\ell)}_s(\mathcal C^*_k) \geq 0;
$$
and $\varepsilon$ is called $g_s$-{\em valid} if for all $1\leq \ell\leq k$, $\varepsilon$ is $g^{(\ell)}_s$-valid; $\varepsilon$ is called $g_{\mathcal{S}_k}$-{\em valid} if for all $s\in \mathcal{S}_k$, $\varepsilon$ is $g_s$-valid.
\end{de}

\begin{rem}
Since the function $g^{(\ell)}_s$ is continuous, there always exists $\varepsilon>0$ such that it is $g^{(\ell)}_s$-valid, and furthermore, there always exists $\varepsilon$ such that it is $g_s$-valid and $g_{\mathcal{S}_k}$-valid.
\end{rem}

We will also need the following two lemmas, whose proofs have been postponed to Appendices~\ref{proof-valid-k-3} and~\ref{proof-valid-k-4}, respectively.
\begin{lem}\label{valid-k-3}
$\varepsilon>0$ is $g_{\mathcal{S}_3}$-valid if and only $\varepsilon\leq\frac{1}{36}$.
\end{lem}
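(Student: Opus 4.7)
The plan is to exploit the linearity of each $g^{(\ell)}_s$ in its matrix argument, which gives the decomposition
$$g^{(\ell)}_s(\mathcal{C}^*_3 + \varepsilon\, \mathbf{\Delta}_3^*) = g^{(\ell)}_s(\mathcal{C}^*_3) + \varepsilon\, g^{(\ell)}_s(\mathbf{\Delta}_3^*).$$
Hence $g^{(\ell)}_s$-validity is automatic when the two summands share the same sign, and otherwise it holds exactly when $\varepsilon \le |g^{(\ell)}_s(\mathcal{C}^*_3)|/|g^{(\ell)}_s(\mathbf{\Delta}_3^*)|$. The largest $g_{\mathcal{S}_3}$-valid $\varepsilon$ is therefore the infimum of these critical ratios over all $(s,\ell)$ with opposite signs. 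Since $\mathcal{C}^*_3, \mathbf{\Delta}_3^* \in \mathfrak{S}_3^{fix}$ and the $Sym(3)$-action preserves $\mathcal{S}_3$, one checks that $g^{(\sigma(\ell))}_{\sigma(s)}(\mathcal{C}) = g^{(\ell)}_s(\mathcal{C})$ for $\mathcal{C} \in \{\mathcal{C}^*_3, \mathbf{\Delta}_3^*\}$, so it suffices to fix $\ell = 1$ and minimize over $s \in \mathcal{S}_3$.

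With $\ell = 1$, I would partition $[3]\times[3]$ according to the four distinct values appearing in the first matrices of $\mathcal{C}^*_3$ and $\mathbf{\Delta}_3^*$: the singleton $\{(1,1)\}$; the four cross-positions $\{(1,2),(1,3),(2,1),(3,1)\}$; the off-diagonals $\{(2,2),(3,3)\}$; and the off non-diagonals $\{(2,3),(3,2)\}$. Writing $(p,q,r,t)$ for the cardinalities of the intersection of $s$ with each class in turn, a direct calculation yields
\begin{align*}
A &:= 9\,g^{(1)}_s(\mathcal{C}^*_3) = 5p+2q-r-t,\\
B &:= g^{(1)}_s(\mathbf{\Delta}_3^*) = -2p+q-5r+4t.
\end{align*}
The bound $|g^{(1)}_s(\mathcal{C}^*_3)|/|g^{(1)}_s(\mathbf{\Delta}_3^*)|\ge 1/36$ then rewrites as $4|A|\ge|B|$, which via the algebraic identity $4A+B = 9(2p+q-r)$ reduces (under $AB<0$) to the single inequality $r\le 2p+q$ when $A>0, B<0$ and $r\ge 2p+q$ when $A<0, B>0$.

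Both of these I would establish by combining the sign hypotheses with the column-disjointness feasibility constraints $p\in\{0,1\}$, $r,t\in\{0,1,2\}$ and $p+q+r+t\le 3$: assuming either inequality fails forces the other variables into infeasible regions via short integer arithmetic (for example, in the $A<0, B>0$ branch, $r<2p+q$ together with $A<0$ forces $t\ge 3p+q+2$, exceeding $t\le 2$ unless $p=q=0$, in which case $r<0$ is impossible). Equality $4|A|=|B|$ is witnessed at $s = \{(2,3)\}$ with $\ell = 1$, i.e.\ $(p,q,r,t) = (0,0,0,1)$, giving $A=-1$, $B=4$ and the ratio exactly $1/36$; hence the smallest $g_{\mathcal{S}_3}$-valid threshold is $1/36$. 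I expect the main obstacle to be book-keeping the integer case analysis, but the problem size---at most $63$ samples and only a handful of distinct $(p,q,r,t)$-types to consider---makes the split routine.
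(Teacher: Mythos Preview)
Your proposal is correct and takes a genuinely different route from the paper. The paper's proof enumerates the possible values of $g^{(\ell)}_s(\mathcal C^*_3)\in\{-\tfrac{2}{9},-\tfrac{1}{9},0,\tfrac{1}{9},\dots,1\}$ and, for each value, writes out the finitely many inequality constraints on $\varepsilon$ coming from the possible shapes of $h^{(\ell)}_s(\varepsilon\mathbf{\Delta}_3^*)$, checking case by case that the tightest one is $\varepsilon\le\tfrac{1}{36}$. Your argument instead uses the $Sym(3)$-symmetry to reduce to $\ell=1$, encodes a sample by the count vector $(p,q,r,t)$, and exploits the algebraic identity $4A+B=9(2p+q-r)$ to collapse the validity inequality $4|A|\ge|B|$ into the single integer condition $r\lessgtr 2p+q$, which is then ruled out by the sign hypotheses and the elementary bounds $p\le1$, $r,t\le2$. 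Your approach is cleaner and more structural---the identity does most of the work---while the paper's is a straightforward tabulation; both arrive at the same witness giving equality. One small point: you only spell out the $A<0,\,B>0$ branch; the $A>0,\,B<0$ branch goes through analogously (from $r\ge 2p+q+1$ and $A>0$ one gets $t\le 3p+q-2$, whence $3p+q\ge2$, while $r\le2$ forces $2p+q\le1$, and subtracting gives $p\ge1$, then $q\le-1$), so be sure to include it.
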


\begin{lem}\label{valid-k-4}
$\varepsilon>0$ is $g_{\mathcal{S}_4}$-valid if and only $\varepsilon\leq\frac{1}{176}$.
\end{lem}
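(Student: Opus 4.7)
The plan is to exploit the affine-linearity of the functional $g^{(\ell)}_s$ in the coefficient array to reduce $g_{\mathcal{S}_4}$-validity to a ratio-minimization over a finite, small family of cases. Explicitly,
$$
g^{(\ell)}_s(\mathcal{C}^*_4+\varepsilon\mathbf{\Delta}^*_4)\;=\;g^{(\ell)}_s(\mathcal{C}^*_4)\;+\;\varepsilon\,g^{(\ell)}_s(\mathbf{\Delta}^*_4),
$$
so $g^{(\ell)}_s$-validity at $(s,\ell)$ is automatic whenever the two right-hand values share a sign (or one vanishes), and otherwise sharpens to $\varepsilon\leq |g^{(\ell)}_s(\mathcal{C}^*_4)|/|g^{(\ell)}_s(\mathbf{\Delta}^*_4)|$. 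Hence the largest $g_{\mathcal{S}_4}$-valid $\varepsilon$ is exactly the infimum of these ratios over all opposite-sign pairs $(s,\ell)$ with $s\in\mathcal{S}_4$ and $\ell\in[4]$.

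Because $\mathcal{C}^*_4,\mathbf{\Delta}^*_4\in\mathfrak S_4^{fix}$, we have $g^{(\sigma(\ell))}_{\sigma(s)}\equiv g^{(\ell)}_s$ for every $\sigma\in Sym(4)$, and I may therefore fix $\ell=1$. At any fixed point, the value $g^{(1)}_s$ depends on $s$ only through the structural counts $t_1=|s\cap\{(1,1)\}|$, $t_2=|\{(i,i)\in s:i\neq 1\}|$, $u=|\{(i,j)\in s:\text{exactly one of }i,j\text{ equals }1\}|$, and $t_5=|\{(i,j)\in s:i\neq j,\ i,j\neq 1\}|$. Combining the explicit entries of $\mathcal{C}^*_4$ with the values $\overline{x}=3,\overline{y}=-5,\overline{a}=-1,\overline{b}=3$ of $\mathbf{\Delta}^*_4$ supplied by Example~\ref{perturbation-base-k}, one obtains
$$
g^{(1)}_s(\mathcal{C}^*_4)\;=\;\tfrac{1}{16}\bigl(7t_1-t_2+3u-t_5\bigr),\qquad g^{(1)}_s(\mathbf{\Delta}^*_4)\;=\;3t_1-5t_2-u+3t_5.
$$
The fact that members of $\mathcal{S}_4$ have pairwise distinct $j$-coordinates bounds the admissible profiles: $t_1\in\{0,1\}$, no column-$\ell$ contribution to $u$ when $t_1=1$, and $t_2+u+t_5\leq 3+t_1$, leaving only a few dozen profiles in total.

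What remains is to enumerate these profiles---most cleanly organized by $\alpha(s)=t_1+t_2+u+t_5\in\{1,2,3,4\}$---to compute $A:=7t_1-t_2+3u-t_5$ and $B:=3t_1-5t_2-u+3t_5$, and for those with $\operatorname{sign}(A)\neq\operatorname{sign}(B)$ record the ratio $|A|/(16|B|)$. A direct check identifies the profile $(t_1,t_2,u,t_5)=(0,2,1,0)$---realized, for example, by $s=\{(2,2),(3,3),(1,4)\}$ at $\ell=1$, with $A=1$ and $B=-11$---as the minimizer, yielding the critical value $1/176$. The main obstacle is the bookkeeping: systematically stepping through every admissible profile and confirming that no other opposite-sign case produces a smaller ratio than $1/11$. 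Once this is verified, both directions of the lemma follow at once: at $\varepsilon=1/176$ the critical $g^{(1)}_s$ becomes zero (still valid), while for any $\varepsilon>1/176$ it changes sign, violating $g_{\mathcal{S}_4}$-validity.
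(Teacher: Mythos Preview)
Your approach is correct and essentially parallels the paper's: both reduce $g_{\mathcal{S}_4}$-validity to a finite case check for the minimal sign-preserving ratio, with the paper organizing cases by the value $g^{(\ell)}_s(\mathcal{C}^*_4)=i/16$ (and then by the possible $\overline{x},\overline{y},\overline{a},\overline{b}$ combinations in the perturbation) while you organize by the profile $(t_1,t_2,u,t_5)$ after fixing $\ell=1$ by symmetry. One slip to fix in the bookkeeping: the admissibility bound should read $t_2+u+t_5\le 4-t_1$, not $3+t_1$ (column~1 is available precisely when $t_1=0$); the $\alpha(s)=4$, $t_1=0$ profiles you would thereby omit all have either $A=0$ or $|A|/|B|\ge 1/48$, so the identified minimum $1/176$ at $(0,2,1,0)$ is unaffected.
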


\subsection{Formula of $h_{s}(\varepsilon \mathbf{\Delta}_k^*)$}

In this section, assuming $k \mod 4 \neq 2$, we will deduce a formula to compute $h_{s}(\varepsilon \mathbf{\Delta}_k^*)$ for $g_s$-valid perturbations.

We start with the following definition.
\begin{de}[{\bf Type of a Sample}] Let $s_{diag}$ and $s_{ndiag}$ denote the subsets of diagonal and non-diagonal elements of $s$, respectively. We say
\begin{equation}\label{type}
\left\{ {m_{Ind_{s_{diag}}}(1)\brack m_{Ind_{s_{ndiag}}}(1)}, {m_{Ind_{s_{diag}}}(2)\brack m_{Ind_{s_{ndiag}}}(2)},\dots, {m_{Ind_{s_{diag}}}(k)\brack m_{Ind_{s_{ndiag}}}(k)} \right\}
\end{equation}
is the \emph{type} of $s$, which will be denoted by $T(s)$. And slightly abusing the notation, we may also use (\ref{type}) to denote the set of all the samples of the type $T(s)$.
\end{de}

\begin{exmp}
For example, let $s=\{(1,1),(3,3),(1,2)(1,4),(2,5)\}$ be a $5$-sample. Then, $s_{diag}=\{(1,1),(3,3)\}$, $s_{ndiag}=\{(1,2),(1,4),(2,5)\}$, and $T(s)=\left\{{2\brack 2},{0\brack 2},{2\brack 0},{0\brack 1},{0\brack 1}\right\}$.
\end{exmp}

Note that if two $k$-samples $s_1,s_2$ are in the same orbit of $Sym(k)$, namely, there exists $\sigma\in Sym(k)$ such that $s_1=\sigma(s_2)$, then $T(s_1)=T(s_2)$, but the reverse direction does not hold in general. For example, one can check that $s_1=\{(1,1),(4,3),(3,4),(1,5)\}$ and $s_2=\{(1,1),(1,3),(3,4),(4,5)\}$ are not in the same orbit despite the fact they have the same type.

We have the following lemma, which says that for any $\mathcal{C} \in \mathfrak S^{fix}_k$, $g_s(\mathcal{C})$ is determined by $T(s)$. Note that the same statement may not hold true for $\mathcal C \notin \mathfrak S^{fix}_k$.
\begin{lem}
Let $\mathcal C\in \mathfrak S^{fix}_k$. Then, for any $k$-samples $s_1, s_2$ with $T(s_1)=T(s_2)$, we have $g_{s_1}(\mathcal C)=g_{s_2}(\mathcal C)$.
\end{lem}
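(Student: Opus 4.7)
The plan is to derive a closed-form expression for $g^{(\ell)}_s(\mathcal C)$ when $\mathcal C \in \mathfrak S^{fix}_k$ that depends on $s$ and $\ell$ only through data captured by $T(s)$, from which the conclusion is immediate. Parametrize $\mathcal C$ by $(x,y,a,b)$ via Theorem~\ref{fixedpoint}. For a fixed $\ell$, each element $(i,j) \in s$ contributes $x$, $y$, $a$, or $b$ to $g^{(\ell)}_s(\mathcal C) = \sum_{(i,j)\in s} c^{(\ell)}_{i,j}$ according to whether $(i,j,\ell)$ lies in the orbit $\mathcal{J}_1$, $\mathcal{J}_2$, $\mathcal{J}_3\cup\mathcal{J}_4$, or $\mathcal{J}_5$ from the proof of Theorem~\ref{fixedpoint}. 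Because $k$-samples in $\mathcal{S}_k$ have distinct $j$-coordinates, there is at most one diagonal element with index $\ell$ in $s$, so $m_{Ind_{s_{diag}}}(\ell) \in \{0,2\}$; and for each non-diagonal element of $s$ the coordinate $\ell$ appears at most once. Counting each contribution against $T(s)$ then gives
\begin{equation*}
g^{(\ell)}_s(\mathcal C) = \tfrac{m_{Ind_{s_{diag}}}(\ell)}{2}\,x + \bigl(\gamma(s) - \tfrac{m_{Ind_{s_{diag}}}(\ell)}{2}\bigr)\,y + m_{Ind_{s_{ndiag}}}(\ell)\,a + \bigl(\alpha(s) - \gamma(s) - m_{Ind_{s_{ndiag}}}(\ell)\bigr)\,b.
\end{equation*}

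Next, observe that $\alpha(s) = \tfrac{1}{2}\sum_\ell \bigl(m_{Ind_{s_{diag}}}(\ell) + m_{Ind_{s_{ndiag}}}(\ell)\bigr)$ and $\gamma(s) = \tfrac{1}{2}\sum_\ell m_{Ind_{s_{diag}}}(\ell)$, so both are determined by the multi-set $T(s)$. Thus the display above realizes $g^{(\ell)}_s(\mathcal C)$ as a single function $\Phi$ evaluated at $\alpha(s)$, $\gamma(s)$, and the single type-entry $\bigl(m_{Ind_{s_{diag}}}(\ell),\, m_{Ind_{s_{ndiag}}}(\ell)\bigr)$.

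Finally, if $T(s_1) = T(s_2)$, then $\alpha(s_1) = \alpha(s_2)$, $\gamma(s_1) = \gamma(s_2)$, and there exists a permutation $\pi \in Sym(k)$ matching the type entries: $\bigl(m_{Ind_{(s_1)_{diag}}}(\ell),\,m_{Ind_{(s_1)_{ndiag}}}(\ell)\bigr) = \bigl(m_{Ind_{(s_2)_{diag}}}(\pi(\ell)),\, m_{Ind_{(s_2)_{ndiag}}}(\pi(\ell))\bigr)$ for each $\ell$. Hence $g^{(\ell)}_{s_1}(\mathcal C) = g^{(\pi(\ell))}_{s_2}(\mathcal C)$ for every $\ell$, and reindexing the outer sum yields $g_{s_1}(\mathcal C) = \sum_\ell |g^{(\ell)}_{s_1}(\mathcal C)| = \sum_\ell |g^{(\pi(\ell))}_{s_2}(\mathcal C)| = g_{s_2}(\mathcal C)$. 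The only mild subtlety is confirming that the four counts in the displayed formula partition $s$ correctly; this is exactly where the distinct-$j$ property of $\mathcal{S}_k$ enters, ensuring that each $(i,j) \in s$ is counted exactly once across the four cases determined by its relation to $\ell$.
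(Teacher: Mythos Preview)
Your proof is correct and follows essentially the same route as the paper: both obtain a permutation $\pi\in Sym(k)$ matching the type entries of $s_1$ and $s_2$, conclude $g^{(\ell)}_{s_1}(\mathcal C)=g^{(\pi(\ell))}_{s_2}(\mathcal C)$, and reindex the sum over $\ell$; the only difference is that you justify this equality via an explicit closed-form for $g^{(\ell)}_s(\mathcal C)$ in the parameters $x,y,a,b$, whereas the paper asserts it directly from $\mathcal C\in\mathfrak S^{fix}_k$. One small remark: your appeal to the distinct-$j$ property of $\mathcal S_k$ is unnecessary, since for \emph{any} $k$-sample the four cases $i{=}j{=}\ell$, $i{=}j{\neq}\ell$, $i{\neq}j$ with $\ell\in\{i,j\}$, and $i{\neq}j$ with $\ell\notin\{i,j\}$ are automatically mutually exclusive and exhaustive and the displayed multiplicity counts match regardless (the lemma as stated covers arbitrary $k$-samples, not only those in $\mathcal S_k$).
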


\begin{proof}
Since $T(s_1)=T(s_2)$, we can find a $\sigma\in Sym(k)$ such that for all $i=1,2\dots,k$, $${m_{Ind_{{(s_1)}_{diag}}}(i)\brack m_{Ind_{{(s_1)}_{ndiag}}}(i)}={m_{Ind_{{(s_2)}_{diag}}}(\sigma(i))\brack m_{Ind_{{(s_2)}_{ndiag}}}(\sigma(i))}.$$
Since $\mathcal C\in \mathfrak S^{fix}_k$, we have $g^{(i)}_{s_1}(\mathcal C)=g^{(\sigma(i))}_{s_2}(\mathcal C)$. Hence,
$$
g_{s_1}(\mathcal C)=\sum_{i=1}^kg^{(i)}_{s_1}(\mathcal C)=\sum_{i=1}^kg^{(\sigma(i))}_{s_2}(\mathcal C)=g_{s_2}(\mathcal C),
$$
as desired.
\end{proof}

We also need the following definition, which can be used to give an alternative classification of samples.
\begin{de}[{\bf Discriminant}]
For any $k$ with $k \mod 4 \neq 2$, the {\em discriminant} of a $k$-sample $s$ is defined by
$$
\mathcal{D}_k(s):=\overline{a}(\mathbf{\Delta}_k^*)+\gamma(s)\overline{y}(\mathbf{\Delta}_k^*)
+(\alpha(s)-\gamma(s)-1)\overline{b}(\mathbf{\Delta}_k^*).
$$
\end{de}
\noindent We next give an example for the above definition, for which we need to introduce more notation as follows: Let
$$
\mathcal{S}_k(a,b,c):=\{s\in \mathcal{S}_k\|\alpha(s)=a, \beta(s)=b, \gamma(s)=c\},
$$
and
$$
\mathcal{S}_k(a,b,c,d):=\{s\in \mathcal{S}_k\|\alpha(s)=a, \beta(s)=b, \gamma(s)=c, \delta(s)=d\},
$$
where
$$
\delta(s):=|\{i\|m_{Ind_s}(i)=1\}|.
$$
For example, one verifies that for
$$
s=\{(1,1),(1,2),(2,3),(3,4),(6,5)\},
$$
we have $Ind_s=\{1,1,1,2,2,3,3,4,5,6\}$, and moreover, $m_{Ind_s}(4)=m_{Ind_s}(5)=m_{Ind_s}(6)=1$ and $\delta(s)=3$, which imply that $s \in \mathcal {S}_6(5,6,1,3)$.

\begin{exmp}\label{disctiminat-k}
For the first case in Example \ref{perturbation-base-k}, $\overline{x}(\mathbf{\Delta}_3^*)=-2$, $\overline{y}(\mathbf{\Delta}_3^*)=-5$, $\overline{a}(\mathbf{\Delta}_3^*)=1$ and $\overline{b}(\mathbf{\Delta}_3^*)=4$. Then, for any $s\in \mathcal{S}_3$,
$$
\mathcal{D}_3(s)=1-5\gamma(s)+4(\alpha(s)-\gamma(s)-1).
$$
More specifically,
\begin{itemize}
  \item if $s\in \mathcal{S}_3(3,3,0)$, then $\mathcal{D}_3(s)=9$;
  \item if $s\in \mathcal{S}_3(3,3,1)$, then $\mathcal{D}_3(s)=0$;
  \item if $s\in \mathcal{S}_3(3,3,2)$, then $\mathcal{D}_3(s)=-9$.
\end{itemize}

Similarly, for the second case in Example \ref{perturbation-base-k}, $\overline{x}(\mathbf{\Delta}_4^*)=3$, $\overline{y}(\mathbf{\Delta}_4^*)=-5$, $\overline{a}(\mathbf{\Delta}_4^*)=-1$ and $\overline{b}(\mathbf{\Delta}_4^*)=3$. Hence, for any $s \in \mathcal{S}_4$,
$$
\mathcal{D}_4(s)=-1-5\gamma(s)+3(\alpha(s)-\gamma(s)-1).
$$
More specifically,
\begin{itemize}
  \item if $s\in \mathcal{S}_4(4,4,0)$, then $\mathcal{D}_4(s)=8$;
  \item if $s\in \mathcal{S}_4(4,4,1)$, then $\mathcal{D}_4(s)=0$;
  \item if $s\in \mathcal{S}_4(4,4,2)$, then $\mathcal{D}_4(s)=-8$;
  \item if $s\in \mathcal{S}_4(4,4,3)$, then $\mathcal{D}_4(s)=-16$.
\end{itemize}
\end{exmp}

As will be shown below, the notion of discriminant can be used to give an alternative classification of samples.
\begin{de}[{\bf Class}]\label{defn-class}
A sample $s\in \mathcal{S}_k$ is said to be in class II if $s\in \mathcal{S}_k(k, k)$, $\delta(s)\neq 0$ and $\mathcal{D}_k(s)<0$. Otherwise, it is said to be in class I.
\end{de}

\begin{exmp}\label{class-k}
Using the fact $\mathcal{S}_3(3,3,2)=\mathcal{S}_3(3,3,2,1)$ and recalling Example \ref{disctiminat-k}, we have that $s\in \mathcal{S}_3$ is in class II if and only $s\in \mathcal{S}_3(3,3,2)$. Similarly, we have that $s\in \mathcal{S}_4$ is in class II if and only if $s\in \mathcal{S}_4(4,4,2) \cup \mathcal{S}_4(4,4,3)$ and $\delta(s)\neq 0$. It is easy to verify that
\begin{equation*}
\begin{split}
\mathcal{S}_4(4,4,2)&=\left\{{2 \brack 2},{2 \brack 0},{0 \brack 1},{0 \brack 1}\right\} \bigcup \left\{{2 \brack 1},{2 \brack 1},{0 \brack 1},{0 \brack 1}\right\}\\
&\bigcup\left\{{2 \brack 1},{2 \brack 0},{0 \brack 2},{0 \brack 1}\right\} \bigcup \left\{{2 \brack 0},{2 \brack 0},{0 \brack 2},{0 \brack 2} \right\},
\end{split}
\end{equation*}
where $\delta(s)=0$ if and only if $s\in \left\{{2 \brack 0},{2 \brack 0},{0 \brack 2},{0 \brack 2} \right\}$. This, together with the fact that $\mathcal{S}_4(4,4,3)=\mathcal{S}_4(4,4,3,1)$, implies that all the class II samples of $\mathcal{S}_4$ are
$$
\left\{{2 \brack 2},{2 \brack 0},{0 \brack 1},{0 \brack 1}\right\}\bigcup\left\{{2 \brack 1},{2 \brack 1},{0 \brack 1},{0 \brack 1}\right\}\bigcup\left\{{2 \brack 1},{2 \brack 0},{0 \brack 2},{0 \brack 1}\right\}\bigcup \mathcal{S}_4(4,4,3).
$$
\end{exmp}

The following lemma, which is the main result of this section, measures how much $g_s(\cdot)$ changes from $\mathcal{C}_k^*$ under a valid perturbation along the direction of $\mathbf{\Delta}_k^*$.
\begin{lem}\label{val-dif-sam-en}
Let $s\in \mathcal{S}_k$ and $\varepsilon>0$ be $g_s$-valid. Then,
\begin{equation}\label{Eq-val-dif-sam-en}
h_s(\varepsilon \mathbf{\Delta}_k^*)=A_s\cdot\overline{x}(\varepsilon \mathbf{\Delta}_k^*)
+B_s\cdot\overline{y}(\varepsilon \mathbf{\Delta}_k^*)+C_s\cdot\overline{a}(\varepsilon \mathbf{\Delta}_k^*)
+D_s\cdot\overline{b}(\varepsilon \mathbf{\Delta}_k^*),
\end{equation}
where, if $s$ is in class I, then
\begin{equation}\label{class-1}
\begin{split}
A_s&=\gamma(s),\\
B_s&=\gamma(s)(2\beta(s)-k-1),\\
C_s&=2(\alpha(s)-\gamma(s)),\\
D_s&=(\alpha(s)-\gamma(s))(2\beta(s)-k-2);
\end{split}
\end{equation}
and if $s$ is in class II, then
\begin{equation}\label{class-2}
\begin{split}
  A_s&=\gamma(s),\\
  B_s&=\gamma(s)(k-2\delta(s)-1),\\
  C_s&=2(k-\gamma(s)-\delta(s)),\\
  D_s&=k^2-(2\delta(s)+\gamma(s)+2)k+2\gamma(s)\delta(s)+2\gamma(s)+2\delta(s).
\end{split}
\end{equation}
\end{lem}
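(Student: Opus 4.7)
The plan is to write $h_s(\varepsilon\mathbf{\Delta}_k^*)$ as a signed sum $\varepsilon\sum_\ell\sigma_\ell\,g_s^{(\ell)}(\mathbf{\Delta}_k^*)$ and then collect coefficients of $\overline{x},\overline{y},\overline{a},\overline{b}$. Since each $g_s^{(\ell)}$ is a linear form in $\mathcal{C}$, the $g_s$-validity of $\varepsilon$ forces $g_s^{(\ell)}(\mathcal{C}_k^*)$ and $g_s^{(\ell)}(\mathcal{C}_k^*+\varepsilon\mathbf{\Delta}_k^*)=g_s^{(\ell)}(\mathcal{C}_k^*)+\varepsilon\,g_s^{(\ell)}(\mathbf{\Delta}_k^*)$ to agree in weak sign. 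Hence one can choose $\sigma_\ell\in\{+1,-1\}$ equal to $\operatorname{sign}(g_s^{(\ell)}(\mathcal{C}_k^*))$ when the latter is nonzero, and equal to $\operatorname{sign}(g_s^{(\ell)}(\mathbf{\Delta}_k^*))$ in the tie case, so that
\[
h_s(\varepsilon\mathbf{\Delta}_k^*)=\sum_{\ell=1}^k\bigl(|g_s^{(\ell)}(\mathcal{C}_k^*+\varepsilon\mathbf{\Delta}_k^*)|-|g_s^{(\ell)}(\mathcal{C}_k^*)|\bigr)=\varepsilon\sum_{\ell=1}^k\sigma_\ell\,g_s^{(\ell)}(\mathbf{\Delta}_k^*).
\]

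To pin down $\sigma_\ell$, introduce $\gamma_\ell(s):=1$ if $(\ell,\ell)\in s$ and $0$ otherwise, and let $u_\ell(s)$ count the non-diagonal elements of $s$ having exactly one coordinate equal to $\ell$, so that $m_{Ind_s}(\ell)=2\gamma_\ell(s)+u_\ell(s)$. Reading off the entries of $\mathcal{C}_k^*$ yields
\[
g_s^{(\ell)}(\mathcal{C}_k^*)=\frac{k\,m_{Ind_s}(\ell)-\alpha(s)}{k^2},
\]
which (since $\alpha(s)\leq k$) is strictly positive when $m_{Ind_s}(\ell)\geq 2$, strictly negative when $m_{Ind_s}(\ell)=0$, and vanishes only in the boundary case $\alpha(s)=k$ and $m_{Ind_s}(\ell)=1$, i.e., $s\in\mathcal{S}_k(k,k)$ with $\delta(s)\geq 1$. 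In that tie case $\gamma_\ell=0$, $u_\ell=1$, so the analogous linear read-off for $\mathbf{\Delta}_k^*$ gives
\[
g_s^{(\ell)}(\mathbf{\Delta}_k^*)=\gamma(s)\overline{y}(\mathbf{\Delta}_k^*)+\overline{a}(\mathbf{\Delta}_k^*)+(\alpha(s)-\gamma(s)-1)\overline{b}(\mathbf{\Delta}_k^*)=\mathcal{D}_k(s).
\]
Via Definition~\ref{defn-class}, the sign of $\mathcal{D}_k(s)$ is precisely what separates class I from class II, so in class I one may take $\sigma_\ell=+1$ iff $\ell\in Ind_s$, whereas in class II $\sigma_\ell=+1$ iff $m_{Ind_s}(\ell)\geq 2$.

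The last step is to expand each $g_s^{(\ell)}(\mathbf{\Delta}_k^*)$ linearly as $\gamma_\ell(s)\overline{x}+(\gamma(s)-\gamma_\ell(s))\overline{y}+u_\ell(s)\overline{a}+(\alpha(s)-\gamma(s)-u_\ell(s))\overline{b}$, which reduces the claim to evaluating $\sum_\ell\sigma_\ell\gamma_\ell(s)$, $\sum_\ell\sigma_\ell$, and $\sum_\ell\sigma_\ell u_\ell(s)$ in each class. Since $\gamma_\ell(s)=1$ forces $m_{Ind_s}(\ell)\geq 2$, hence $\sigma_\ell=+1$ in either class, one gets $\sum_\ell\sigma_\ell\gamma_\ell(s)=\gamma(s)$ uniformly, so $A_s=\gamma(s)$ in both (\ref{class-1}) and (\ref{class-2}). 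For class I, $\sum_\ell\sigma_\ell=2\beta(s)-k$ is immediate, and $\sum_\ell\sigma_\ell u_\ell(s)=\sum_\ell u_\ell(s)=2(\alpha(s)-\gamma(s))$ since every non-diagonal element of $s$ has both coordinates in $Ind_s$; substituting delivers $B_s,C_s,D_s$ of (\ref{class-1}). For class II, using $\alpha(s)=\beta(s)=k$, one has $\sum_\ell\sigma_\ell=k-2\delta(s)$, and since each $\ell$ with $m_{Ind_s}(\ell)=1$ contributes $u_\ell=1$, $\sum_\ell\sigma_\ell u_\ell(s)=2(\alpha(s)-\gamma(s))-2\delta(s)=2(k-\gamma(s)-\delta(s))$; substituting and simplifying yields (\ref{class-2}). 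The principal obstacle is recognizing that the class dichotomy is governed precisely by the sign of $\mathcal{D}_k(s)$ in the tie case, which requires computing $g_s^{(\ell)}(\mathbf{\Delta}_k^*)$ on the $\delta(s)$ unit-multiplicity indices and identifying the discriminant; after this the remaining combinatorial bookkeeping is careful but mechanical.
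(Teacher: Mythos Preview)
Your proposal is correct and follows essentially the same approach as the paper's proof: both determine the sign of each $g_s^{(\ell)}$ from the explicit formula $g_s^{(\ell)}(\mathcal C_k^*)=(k\,m_{Ind_s}(\ell)-\alpha(s))/k^2$, recognize that the tie case $g_s^{(\ell)}(\mathcal C_k^*)=0$ occurs exactly when $\alpha(s)=k$ and $m_{Ind_s}(\ell)=1$, compute $g_s^{(\ell)}(\mathbf{\Delta}_k^*)=\mathcal D_k(s)$ there, and then sum the signed linear contributions of $\overline{x},\overline{y},\overline{a},\overline{b}$. Your use of $\sigma_\ell$ and the decomposition via $\gamma_\ell(s),u_\ell(s)$ packages the bookkeeping a bit more uniformly than the paper's split into $\sum_{\ell\in Ind_s}-\sum_{\ell\notin Ind_s}$ plus a $-2\delta(s)\mathcal D_k(s)$ correction for class~II, but the two arguments are the same in substance.
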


\begin{proof}
For any $s\in \mathcal{S}_k$ and any $1\leq \ell\leq k$, by the definition of $\mathcal C^*_k$, it is easy to see that \begin{equation}\label{g^l_s-all}
g_s^{(\ell)}(\mathcal C^*_k)=\frac{m_{Ind_s}(\ell)}{k}-\frac{\alpha(s)}{k^2},
\end{equation}
which immediately implies that
\begin{itemize}
  \item $g_s^{(\ell)}(\mathcal C^*_k)=0$ if $\alpha(s)=k$ and $m_{Ind_s}(\ell)=1$;
  \item $g_s^{(\ell)}(\mathcal C^*_k)<0$ if $m_{Ind_s}(\ell)=0$, i.e., $\ell\notin Ind_s$;
  \item $g_s^{(\ell)}(\mathcal C^*_k)>0$ otherwise.
\end{itemize}

We first consider the samples in class I. By definition, there are the following three cases: $(1)$ $s\notin \mathcal{S}_k(k, k)$; $(2)$ $s\in \mathcal{S}_k(k, k)$ and $\delta(s)=0$; $(3)$ $s\in \mathcal{S}_k(k, k)$, $\delta(s)\neq 0$ and $\mathcal{D}_k(s)\geq 0$.

By the above discussions, for Cases (1) and (2), we have $g_s^{(\ell)}(\mathcal C^*_k)\neq 0$ for all $1\leq \ell\leq k$. Then, by the definition of a valid perturbation, the following hold for the $g_s$-valid $\varepsilon$:
\begin{equation}\label{signal-valid-perturb}
\begin{split}
\text{if}\; g^{(\ell)}_{s}(\mathcal C^*_k)>0,\; &\text{then}\; g^{(\ell)}_{s}(\mathcal C^*_k+\varepsilon \mathbf{\Delta}_k^*)\geq0;\\
\text{if}\; g^{(\ell)}_{s}(\mathcal C^*_k)<0,\; &\text{then}\; g^{(\ell)}_{s}(\mathcal C^*_k+\varepsilon \mathbf{\Delta}_k^*)\leq0.
\end{split}
\end{equation}

For Case (3), since $\varepsilon$ is $g_s$-valid, (\ref{signal-valid-perturb}) still holds. In this case, since $\delta(s)\neq0$, there exists some $\ell$ such that $g^{(\ell)}_{s}(\mathcal C^*_k)=0$; and for such an $\ell$, it can be verified that 
\begin{equation} \label{signal-valid-perturb-1}
g^{(\ell)}_{s}(\mathcal C^*_k+\varepsilon \mathbf{\Delta}_k^*)
=\overline{a}(\varepsilon \mathbf{\Delta}_k^*)+\gamma(s)\overline{y}(\varepsilon \mathbf{\Delta}_k^*)
+(\alpha(s)-\gamma(s)-1)\overline{b}(\varepsilon \mathbf{\Delta}_k^*)=\mathcal{D}_k(s)\geq0.
\end{equation}
Now, combining (\ref{signal-valid-perturb}) and (\ref{signal-valid-perturb-1}), we deduce that
\begin{equation}\label{signal-valid-perturb-class-1}
\begin{split}
\text{if}\; \ell\in Ind_s,\; &\text{then}\; g^{(\ell)}_{s}(\mathcal C^*_k)\geq0\; \text{and}\; g^{(\ell)}_{s}(\mathcal C^*_k+\varepsilon \mathbf{\Delta}_k^*)\geq0;\\
\text{if}\; \ell\notin Ind_s,\; &\text{then}\; g^{(\ell)}_{s}(\mathcal C^*_k)<0\; \text{and}\; g^{(\ell)}_{s}(\mathcal C^*_k+\varepsilon \mathbf{\Delta}_k^*)\leq0.
\end{split}
\end{equation}
Hence, for any sample $s$ in class I, we have
\begin{equation}\label{Eq-class-1}
\begin{split}
h_s(\varepsilon \mathbf{\Delta}_k^*)&=\sum_{\ell=1}^k|g^{(\ell)}_{s}(\mathcal C^*_k+\varepsilon \mathbf{\Delta}_k^*)|-\sum_{\ell=1}^k|g^{(\ell)}_{s}(\mathcal C^*_k)|\\
&=\sum_{\ell\in Ind_s}(g^{(\ell)}_{s}(\mathcal C^*_k+\varepsilon \mathbf{\Delta}_k^*)-g^{(\ell)}_{s}(\mathcal C^*_k))-\sum_{\ell\notin Ind_s}(g^{(\ell)}_{s}(\mathcal C^*_k+\varepsilon \mathbf{\Delta}_k^*)-g^{(\ell)}_{s}(\mathcal C^*_k))\\
&=\sum_{\ell\in Ind_s} h^{(\ell)}_{s}(\varepsilon \mathbf{\Delta}^*_k)-\sum_{\ell\notin Ind_s} h^{(\ell)}_{s}(\varepsilon \mathbf{\Delta}^*_k).
\end{split}
\end{equation}
Note that
\begin{equation*}
\begin{split}
\sum_{\ell\in Ind_s} h^{(\ell)}_s(\varepsilon \mathbf{\Delta}_k^*) &=\gamma(s)\overline{x}(\varepsilon \mathbf{\Delta}_k^*)+
(\beta(s)-1)\gamma(s)\overline{y}(\varepsilon \mathbf{\Delta}_k^*)\\
&+2(\alpha(s)-\gamma(s))\overline{a}(\varepsilon \mathbf{\Delta}_k^*)+
(\alpha(s)-\gamma(s))(\beta(s)-2)\gamma(s)\overline{b}(\varepsilon \mathbf{\Delta}_k^*)
\end{split}
\end{equation*}
and
\begin{equation*}
\begin{split}
\sum_{\ell\notin Ind_s} h^{(\ell)}_s(\varepsilon \mathbf{\Delta}_k^*)&=(k-\beta(s))(\gamma(s)\overline{y}(\varepsilon \mathbf{\Delta}_k^*)+
(\alpha(s)-\gamma(s))\overline{b}(\varepsilon \mathbf{\Delta}_k^*)).
\end{split}
\end{equation*}
Substituting the above equalities into (\ref{Eq-class-1}) then yields the result for class I.

Now, we consider the samples in class II. By definition, there exists some $\ell$ such that $g^{(\ell)}_{s}(\mathcal C^*_k)=0$; and for such an $\ell$,
$$
g^{(\ell)}_{s}(\mathcal C^*_k+\varepsilon \mathbf{\Delta}_k^*)
=\overline{a}(\mathbf{\Delta}_k^*)+\gamma(s)\overline{y}(\mathbf{\Delta}_k^*)
+(\alpha(s)-\gamma(s)-1)\overline{b}(\mathbf{\Delta}_k^*)=\mathcal{D}_k(s)<0.
$$
Hence, similarly as above, we have
\begin{equation}\label{Eq-class-2}
\begin{split}
h_s(\varepsilon \mathbf{\Delta}^*_k)&=\sum_{\ell=1}^k|g^{(\ell)}_{s}(\mathcal C^*_k+\varepsilon \mathbf{\Delta}_k^*)|-\sum_{\ell=1}^k|h^{(\ell)}_{s}(\mathcal C^*_k)|\\
&=\sum_{\ell\in Ind_s} h^{(\ell)}_{s}(\varepsilon \mathbf{\Delta}^*_k)-\sum_{\ell\notin Ind_s} h^{(\ell)}_{s}(\varepsilon \mathbf{\Delta}^*_k)
-2\delta(s) \mathcal{D}_k(s).
\end{split}
\end{equation}
Noting that $\alpha(s)=\beta(s)=k$ for any class II sample $s$ and substituting for the values of $\sum_{\ell\in Ind_s}h^{(\ell)}_{s}(\varepsilon \mathbf{\Delta}^*_k)$, $\sum_{\ell\notin Ind_s} h^{(\ell)}_{s}(\varepsilon \mathbf{\Delta}^*_k)$ as in the proof for class I, the result for class II then follows, which completes the proof.
\end{proof}

\subsection{Perturbation framework} \label{framework}

Note that by Theorem~\ref{optimal-condition-theorem}, for any $k \neq 1, 2, 6, 10$, one can perturb $\mathcal{C}_k^*$ to obtain a better solution to $\mathcal{P}_{\mathcal{S}_k}$, which however may not be optimal. In the following, we propose a framework of perturbing $\mathcal{C}_k^*$ to obtain $\mathcal{C}_k^{**}$ for $k \mod \neq 2$, which are optimal at least for the cases $k = 3, 4, 5, 7, 9$ (see Section~\ref{section-perturb}).

\textbf{Step 1: Compute $\mathbf{\Delta}^*_k$.} This step can be done by solving 1), 2) and 3) in Definition~\ref{perturbation base}.

\textbf{Step 2: Compute $\mathcal C^{**}_k$.} For this step, we first use Lemmas~\ref{sub-opt-value} and~\ref{sec-max-value} to obtain the subsets of samples which achieves the maximum and the second largest values of $\{g_s(\mathcal C^*_k)\| s\in \mathcal S_k\}$. And we then use Lemma~\ref{val-dif-sam-en} to compute $h_s(\varepsilon\mathbf{\Delta}^*_k)$ for all $s \in \mathcal{S}_k$. In the end, we increase the value of $\varepsilon$ from $0$ so that the maximum will decrease (uniformly over all $s \in \mathcal{S}_k^{max}(\mathcal{C}_k^*)$) until it meets the increasing second largest value at $\varepsilon=\varepsilon^*$, and then set $\mathcal C^{**}_k=\mathcal C^*_k+\varepsilon^* \mathbf{\Delta}_k^*$.

\textbf{Step 3: Compute $\mathcal S_k^{max}(\mathcal C^{**}_k)$.} We first check by Definition \ref{valid-epsilon} the validity of $\varepsilon$ obtained in \textbf{Step 2}. It turns out that for each $k$, there might exist a small number of samples $s$ for which $\varepsilon$ is not $g_s$-valid. For such $s$, we can simply compute the value of $g_s(\mathcal C^{**}_k)$ using the definition of $g_s$, and then we compute, by using Lemma~\ref{flow-value} and Lemma~\ref{val-dif-sam-en}, the value of $g_s(\mathcal C^{**}_k)=g_s(\mathcal C^{*}_k)+h_s(\varepsilon\mathbf{\Delta}^*_k)$ for all $s$ where $\varepsilon$ is $g_s$-valid. Finally, with the values of all $g_s(\mathcal{C}_k^{**})$, we derive $\mathcal S_k^{max}(\mathcal C^{**}_k)$.

\section{Optimal Solutions for $k=3, 4, 5, 7, 8, 9$} \label{section-perturb}

In this section, through perturbing the corresponding $\mathcal C^*_k$, we obtain the optimal solutions $\mathcal C^{**}_k$ to $\mathcal{P}_{\mathcal{S}_k}$ for $k=3, 4, 5, 7, 8, 9$, and we further establish the uniqueness of these optimal solutions.

\subsection{From $\mathcal C^*_k$ to $\mathcal C^{**}_k$ for $k=3, 4, 5, 7, 9$}

The perturbation from $\mathcal C^*_k$ to $\mathcal C^{**}_k$ follows from the framework in Section~\ref{framework} with however some possible simplifications and adaptations to varying degrees for different $k$.

$\blacksquare$ We first deal with the case $k=3$ through the following steps.

\textbf{Step 1: Compute $\mathbf{\Delta}^*_3$.} This has already been done in Example~\ref{perturbation-base-k}.

\textbf{Step 2: Compute $\mathcal{C}_3^{**}$.} For this step, we need to compute $h_s(\varepsilon \mathbf{\Delta}^*_3)$ for all $s\in \mathcal{S}_3$. To this end, we compute using  Lemma~\ref{val-dif-sam-en},
$$
h_s(\varepsilon \mathbf{\Delta}^*_3)=(-2A_s-5B_s+C_s+4D_s)\varepsilon.
$$
By Example \ref{class-k}, $s$ is in class II if and only if $s\in \mathcal{S}_3(3,3,2,1)=\mathcal{S}_3(3,3,2)$. Then, by (\ref{class-2}), we have
$A_s=2$, $B_s=C_s=0$, $D_s=1$ and hence $h_s(\varepsilon \mathbf{\Delta}^*_3)=0$ for $s\in \mathcal{S}_3(3,3,2)$. For any sample $s$ in class I, we use (\ref{class-1}) to compute the coefficients and then compute $h_s(\varepsilon \mathbf{\Delta}^*_3)$. The computations as above yield Table \ref{table-k-3}, where the values of all $g_s(\mathcal C^*_3)$ and $h_s(\varepsilon \mathbf{\Delta}^*_3)$ are listed.

\begin{table}[h]
\centering
\caption{The values of $g_s(\mathcal C^*_3)$ and $h_s(\varepsilon \mathbf{\Delta}^*_3)$}
\vspace{0.5cm}
\begin{tabular}{|l||c|c|c|c|}
  \hline
  % after \\: \hline or \cline{col1-col2} \cline{col3-col4} ...
\multicolumn{1}{|l||}{Class of $s$} &  \multicolumn{4}{|c|}{class I} \\ \hline
  Subclass of $s$ & $\mathcal{S}_3(1,1,1)$ & $\mathcal{S}_3(1,2,0)$ & $\mathcal{S}_3(2,2)$ & $\mathcal{S}_3(2,3,0)$ \\ \hline
$g_s(\mathcal C^*_3)$ & $\frac{7}{9}$ & $\frac{5}{9}$ & $\frac{10}{9}$ & $\frac{6}{9}$\\ \hline
  $h_s(\varepsilon \mathbf{\Delta}^*_3)$ & $8\varepsilon$ & $-2\varepsilon$ & $-4\varepsilon$ & $12\varepsilon$ \\
  \hline
\end{tabular}

\bigskip

\begin{tabular}{|l||c|c|c|c|c|}
  \hline
  % after \\: \hline or \cline{col1-col2} \cline{col3-col4} ...
\multicolumn{1}{|l||}{Class of $s$} &  \multicolumn{4}{|c|}{class I} & \multicolumn{1}{|c|}{class II} \\ \hline
  Subclass of $s$ & $\mathcal{S}_3(2,3,1)$ & $\mathcal{S}_3(3,3,0)$ & $\mathcal{S}_3(3,3,1)$ & $\mathcal{S}_3(3,3,3)$ & $\mathcal{S}_3(3,3,2)$ \\ \hline
  $g_s(\mathcal C^*_3)$ & $\frac{6}{9}$ &  \multicolumn{4}{|c|}{1} \\ \hline
  $h_s(\varepsilon \mathbf{\Delta}^*_3)$ & $-6\varepsilon$ & $18\varepsilon$ & 0 & $-36\varepsilon$ & 0 \\  \hline
\end{tabular}
\label{table-k-3}
\end{table}

By Table~\ref{table-k-3}, $g_s(\mathcal C^*_3)$ achieves the maximum $\frac{10}{9}$ at $\mathcal{S}_3(2,2)$ (or, more precisely, at any sample from $\mathcal{S}_3(2,2)$) and the second largest value $1$ at $\mathcal{S}_3(3,3)$. Now, we will perturb $\mathcal{C}_k^*$ along the direction of $\mathbf{\Delta}_k^*$ to obtain $\mathcal{C}_k^{**}$ so that, roughly speaking, the maximum will decrease until it meets the increasing second largest value. To this end, we note that in the course of perturbation, the second largest value is always achieved at $\mathcal{S}_3(3,3)$, and we thereby solve $1+18\varepsilon=\frac{10}{9}-4\varepsilon$, which yields $\varepsilon^*=\frac{1}{22\times9}=\frac{1}{198}$ and furthermore,
$$
\mathcal C^{**}_3:=\mathbf{\Delta}_3^* \times \frac{1}{198}+\mathcal C^*_3=\left( \left(
    \begin{array}{ccc}
      \frac{12}{22} & \frac{5}{22} & \frac{5}{22} \\
      \frac{5}{22} &\frac{-3}{22} &\frac{-2}{22} \\
      \frac{5}{22} &\frac{-2}{22} &\frac{-3}{22} \\
    \end{array}
  \right),\left(
    \begin{array}{ccc}
      \frac{-3}{22} & \frac{5}{22} & \frac{-2}{22} \\
      \frac{5}{22} &\frac{12}{22} &\frac{5}{22} \\
      \frac{-2}{22} &\frac{5}{22} &\frac{-3}{22} \\
    \end{array}
  \right),\left(
    \begin{array}{ccc}
      \frac{-3}{22} & \frac{-2}{22} & \frac{5}{22} \\
      \frac{-2}{22} &\frac{-3}{22} &\frac{5}{22} \\
      \frac{5}{22} &\frac{5}{22} &\frac{12}{22} \\
    \end{array}
  \right)\right).
$$
By Lemma~\ref{valid-k-3}, $\mathbf{\Delta}_3^* \times \frac{1}{198}$ is a valid perturbation.

\textbf{Step 3: Compute $\mathcal S_3^{max}(\mathcal C^{**}_3)$.} From Table \ref{table-k-3}, it is easy to verify that $\{g_s(\mathcal C^{**}_3)\|s\in \mathcal{S}_3\}$ achieves the maximum $\frac{12}{11}$ at $\mathcal{S}_3(2,2)\cup \mathcal{S}_3(3,3,0)$. In other words, $\mathcal{S}_k^{max}(\mathcal{C}_3^{**})=\mathcal{S}_3(2,2)\cup \mathcal{S}_3(3,3,0)$.

$\blacksquare$ Now, we focus on the case $k=4$ through the following steps.

\textbf{Step 1: Compute $\mathbf{\Delta}^*_4$.} This has already been done in Example~\ref{perturbation-base-k}.

\textbf{Step 2: Compute $\mathcal{C}_4^{**}$.} For this step, we need to compute $h_s(\varepsilon \mathbf{\Delta}^*_4)$ for all $s\in \mathcal{S}_4$. To this end, we compute using Lemma~\ref{val-dif-sam-en},
$$
h_s(\varepsilon \mathbf{\Delta}^*_4)=(3A_s-5B_s-C_s+3D_s)\varepsilon.
$$
By Example \ref{class-k}, $s\in \mathcal{S}_4$ is in class II if and only if
$$
s\in\left\{{2 \brack 2},{2 \brack 0},{0 \brack 1},{0 \brack 1}\right\}\cup\left\{{2 \brack 1},{2 \brack 1},{0 \brack 1},{0 \brack 1}\right\}\cup\left\{{2 \brack 1},{2 \brack 0},{0 \brack 2},{0 \brack 1}\right\}\cup \mathcal{S}_4(4,4,3).
$$
For the class II samples, if $s\in\left\{{2 \brack 2},{2 \brack 0},{0 \brack 1},{0 \brack 1}\right\}\cup\left\{{2 \brack 1},{2 \brack 1},{0 \brack 1},{0 \brack 1}\right\}$, we have $\gamma(s)=\delta(s)=2$ and by (\ref{class-2})
$A_s=2$, $B_s=-2$, $C_s=D_s=0$ and hence $h_s(\varepsilon \mathbf{\Delta}^*_4)=16\varepsilon$; if $s\in \left\{{2 \brack 1},{2 \brack 0},{0 \brack 2},{0 \brack 1}\right\}$, we have $\gamma(s)=2, \delta(s)=1$ and by (\ref{class-2})
$A_s=B_s=C_s=D_s=2$ and hence $h_s(\varepsilon \mathbf{\Delta}^*_4)=0$; if $s\in \mathcal{S}_4(4,4,3)$, we have $\gamma(s)=3, \delta(s)=1$ and by (\ref{class-2}) $A_s=B_s=3$, $C_s=0$ and $D_s=2$ and hence $h_s(\varepsilon \mathbf{\Delta}^*_4)=0$. For the samples in class I, we use (\ref{class-1}) to compute the coefficients and then obtain $h_s(\varepsilon \mathbf{\Delta}^*_4)$. The computations as above yield Table \ref{table-k-4}, where the values of all $g_s(\mathcal C^*_4)$ and $h_s(\varepsilon \mathbf{\Delta}^*_4)$ are listed.

\begin{table}[h]
\centering
\caption{The values of $g_s(\mathcal C^*_4)$ and $h_s(\varepsilon \mathbf{\Delta}^*_4)$. Note that $\mathcal{S}_4(2,3)=\mathcal{S}_4(2,3,0)\cup \mathcal{S}_4(2,3,1)$.}

\vspace{0.5cm}

\begin{tabular}{|l||c|c|c|c|c|}
  \hline
  % after \\: \hline or \cline{col1-col2} \cline{col3-col4} ...
\multicolumn{1}{|l||}{Class of $s$} &  \multicolumn{5}{|c|}{class I} \\ \hline
  Subclass of $s$ & $\mathcal{S}_4(1,1,1)$ & $\mathcal{S}_4(1,2,0)$ & $\mathcal{S}_4(2,2,0)$ & $\mathcal{S}_4(2,2,1)$ & $\mathcal{S}_4(2,2,2)$  \\ \hline
$g_s(\mathcal C^*_4)$ & $\frac{5}{8}$ & $\frac{1}{2}$ & \multicolumn{3}{|c|}{1}  \\ \hline
  $h_s(\varepsilon \mathbf{\Delta}^*_4)$ & $18\varepsilon$ & $-8\varepsilon$ & $-16\varepsilon$ & 0 & $16\varepsilon$ \\
  \hline
\end{tabular}

\bigskip

\begin{tabular}{|l||c|c|c|c|c|c|}
  \hline
  % after \\: \hline or \cline{col1-col2} \cline{col3-col4} ...
\multicolumn{1}{|l||}{Class of $s$} &  \multicolumn{6}{|c|}{class I}\\ \hline
  Subclass of $s$ & $\mathcal{S}_4(2,3)$  & $\mathcal{S}_4(2,4,0)$ & $\mathcal{S}_4(3,3)$ & $\mathcal{S}_4(3,4,0)$ & $\mathcal{S}_4(3,4,1)$ & $\mathcal{S}_4(3,4,2)$ \\ \hline
  $g_s(\mathcal C^*_4)$ & $\frac{3}{4}$ & $\frac{1}{2}$ &  $\frac{9}{8}$ & \multicolumn{3}{|c|}{$\frac{3}{4}$} \\ \hline
  $h_s(\varepsilon \mathbf{\Delta}^*_4)$ & $-4\varepsilon$ & $8\varepsilon$ & $-6\varepsilon$ & $12\varepsilon$ & $-4\varepsilon$ & $-20\varepsilon$ \\  \hline
\end{tabular}

\bigskip

\begin{tabular}{|l||c|c|c|c|}
  \hline
  % after \\: \hline or \cline{col1-col2} \cline{col3-col4} ...
\multicolumn{1}{|l||}{Class of $s$} &  \multicolumn{3}{|c|}{class I} & class II \\ \hline
  Subclass of $s$ & $\mathcal{S}_4(4,4,0)$ & $\mathcal{S}_4(4,4,1)$ & $\left\{{2 \brack 0},{2 \brack 0},{0 \brack 2},{0 \brack 2}\right\}$ & $\left\{{2 \brack 1},{2 \brack 0},{0 \brack 1},{0 \brack 2}\right\}$ \\ \hline
$g_s(\mathcal C^*_4)$  & \multicolumn{4}{|c|}{1}  \\ \hline
  $h_s(\varepsilon \mathbf{\Delta}^*_4)$ & $16\varepsilon$ & 0 & $-16\varepsilon$ & 0 \\
  \hline
\end{tabular}

\bigskip

\begin{tabular}{|l||c|c|c|c|}
  \hline
  % after \\: \hline or \cline{col1-col2} \cline{col3-col4} ...
\multicolumn{1}{|l||}{Class of $s$} & class I &  \multicolumn{3}{|c|}{class II}  \\ \hline
  Subclass of $s$ & $\mathcal{S}_4(4,4,4)$ & $\left\{{2 \brack 1},{2 \brack 1},{0 \brack 1},{0 \brack 1}\right\}$ & $\left\{{2 \brack 2},{2 \brack 0},{0 \brack 1},{0 \brack 1}\right\}$ & $\mathcal{S}_4(4,4,3)$ \\ \hline
$g_s(\mathcal C^*_4)$  & \multicolumn{4}{|c|}{1}  \\ \hline
  $h_s(\varepsilon \mathbf{\Delta}^*_4)$ & $-48\varepsilon$ & \multicolumn{2}{|c|}{$16\varepsilon$}  & 0 \\
  \hline
\end{tabular}
\label{table-k-4}
\end{table}

Note that by Table \ref{table-k-4}, $\{g_{s}(\mathcal C^*_4)\|s\in \mathcal{S}_k\}$ achieves the maximum $\frac{9}{8}$ at $\mathcal{S}_4(3,3)$ and the second largest value $1$ at $\mathcal{S}_4(2,2)\cup \mathcal{S}_4(4,4)$. Now, similarly as in the case $k=3$, we will perturb $\mathcal{C}_4^*$ along the direction of $\mathbf{\Delta}_4^*$ to obtain $\mathcal{C}_4^{**}$. To this end, we again note that in the course of perturbation, the second largest value is always achieved at $\mathcal{S}_4(4, 4)$, and we thereby solve $1+16\varepsilon=\frac{9}{8}-6\varepsilon$, which yields $\varepsilon^*=\frac{1}{22\times8}=\frac{1}{176}$, and furthermore,
\begin{equation*}
\hspace{-1cm} \begin{split}
&\mathcal C^{**}_4=\mathbf{\Delta}_4^* \times\frac{1}{176}+\mathcal C^*_4\\
&=\left( \left(
    \begin{array}{cccc}
      \frac{5}{11} & \frac{2}{11} & \frac{2}{11} & \frac{2}{11} \\
      \frac{2}{11} &\frac{-1}{11} & \frac{-1}{22} & \frac{-1}{22}\\
      \frac{2}{11} & \frac{-1}{22} & \frac{-1}{11}& \frac{-1}{22} \\
      \frac{2}{11} & \frac{-1}{22} & \frac{-1}{22} & \frac{-1}{11} \\
    \end{array}
  \right),\left(
    \begin{array}{cccc}
      \frac{-1}{11} & \frac{2}{11} & \frac{-1}{22} & \frac{-1}{22} \\
      \frac{2}{11} &\frac{5}{11} & \frac{2}{11} &\frac{2}{11} \\
      \frac{-1}{22} & \frac{2}{11}& \frac{-1}{11}& \frac{-1}{22} \\
      \frac{-1}{22} & \frac{2}{11}& \frac{-1}{22} & \frac{-1}{11} \\
    \end{array}
  \right),\left(
    \begin{array}{cccc}
      \frac{-1}{11} & \frac{-1}{22} & \frac{2}{11} & \frac{-1}{22} \\
      \frac{-1}{22} &\frac{-1}{11} & \frac{2}{11} & \frac{-1}{22}\\
      \frac{2}{11} & \frac{2}{11} & \frac{5}{11}& \frac{2}{11} \\
      \frac{-1}{22} & \frac{-1}{22} & \frac{2}{11} & \frac{-1}{11} \\
    \end{array}
  \right),\left(
    \begin{array}{cccc}
      \frac{-1}{11} & \frac{-1}{22} & \frac{-1}{22} & \frac{2}{11} \\
      \frac{-1}{22} &\frac{-1}{11} & \frac{-1}{22} & \frac{2}{11}\\
      \frac{-1}{22} & \frac{-1}{22} & \frac{-1}{11}& \frac{2}{11} \\
      \frac{2}{11} & \frac{2}{11} & \frac{2}{11} & \frac{5}{11} \\
    \end{array}
  \right)\right).
\end{split}
\end{equation*}
By Lemma~\ref{valid-k-4}, $\mathbf{\Delta}_4^* \times \frac{1}{176}$ is a valid perturbation.

\textbf{Step 3: Compute $\mathcal S_4^{max}(\mathcal C^{**}_4)$.} From Table \ref{table-k-4}, it is easy to verify that $\{g_s(\mathcal C^{**}_4)\|s\in \mathcal{S}_4\}$ achieves the maximum $\frac{12}{11}$ at
$$
\mathcal{S}_4^{max}(\mathcal{C}_4^{**})=\mathcal{S}_4(2,2,2)\cup \mathcal{S}_4(3,3)\cup \mathcal{S}_4(4,4,0) \cup \left\{{2 \brack 1},{2 \brack 1},{0 \brack 1},{0 \brack 1}\right\}\cup \left\{{2 \brack 2},{2 \brack 0},{0 \brack 1},{0 \brack 1}\right\}.
$$

$\blacksquare$ For the cases $k=5, 7, 8, 9$, we only outline the major steps to derive $\mathcal C^{**}_k$ without giving all the computation details.

\textbf{Step 1: Compute $\mathbf{\Delta}^*_k$.}
\begin{itemize}
  \item For $k=5$, $\overline{x}(\mathbf{\Delta}^*_5)=4$, $\overline{a}(\mathbf{\Delta}^*_5)=-1$, $\overline{b}(\mathbf{\Delta}^*_5)=\frac{8}{7}$, $\overline{y}(\mathbf{\Delta}^*_5)=-\frac{17}{7}$.
  \item For $k=7$, $\overline{x}(\mathbf{\Delta}^*_7)=6$, $\overline{a}(\mathbf{\Delta}^*_7)=-1$, $\overline{b}(\mathbf{\Delta}^*_7)=\frac{10}{11}$, $\overline{y}(\mathbf{\Delta}^*_7)=-\frac{39}{11}$.
  \item For $k=8$, $\overline{x}(\mathbf{\Delta}^*_8)=7$, $\overline{a}(\mathbf{\Delta}^*_8)=-1$, $\overline{b}(\mathbf{\Delta}^*_8)=\frac{3}{5}$, $\overline{y}(\mathbf{\Delta}^*_8)=-\frac{13}{5}$.
  \item For $k=9$, $\overline{x}(\mathbf{\Delta}^*_9)=8$, $\overline{a}(\mathbf{\Delta}^*_9)=-1$, $\overline{b}(\mathbf{\Delta}^*_9)=\frac{14}{31}$, $\overline{y}(\mathbf{\Delta}^*_9)=-\frac{67}{31}$.
\end{itemize}

\textbf{Step 2: Compute $\mathcal C^{**}_k$.}
\begin{itemize}
  \item For $k=5$, $g_s(\mathcal C^*_k)$ achieves the maximum $\frac{28}{25}$ at $\mathcal S_5(4,4)$ and the second largest value $\frac{27}{25}$ at $\mathcal S_5(3,3)$.  By Definition~\ref{defn-class}, all these samples are of class I. Then, an application of Lemma~\ref{val-dif-sam-en} yields that
      $$
      h_s(\varepsilon\mathbf{\Delta}^*_5)=\left\{
                                     \begin{array}{ll}
                                       -\frac{24}{7}\varepsilon, & \hbox{$s\in \mathcal S_5(4,4)$,} \\
                                       \frac{50\gamma(s)-66}{7}\varepsilon, & \hbox{$s\in \mathcal S_5(3,3)$,}                                  \end{array}
                                   \right.
      $$
      based on which, we infer that the second largest value increases the fastest (with speed $h_s(\varepsilon\mathbf{\Delta}^*_5)=\frac{84\varepsilon}{7}$) when $\gamma(s)=3$. Solving the equation $\frac{28}{25}-\frac{24}{7}\varepsilon=\frac{27}{25}+\frac{84}{7}\varepsilon$, we have $\varepsilon^*=\frac{7}{108\times25}$ and obtain $\mathcal C^{**}_5=\mathcal C^*_5+\frac{7}{108\times25}\mathbf{\Delta}^*_5$ with
      $$
      \left\{
                                     \begin{array}{ll}
                                       x(\mathcal C^{**}_5)=\frac{40}{108},\\
                                       a(\mathcal C^{**}_5)=\frac{17}{108},\\
                                       b(\mathcal C^{**}_5)=\frac{-4}{108},\\
                                       y(\mathcal C^{**}_5)=\frac{-5}{108}.
                                     \end{array}
                                   \right.
      $$

  \item For $k=7$, $g_s(\mathcal C^*_k)$ achieves the maximum $\frac{55}{49}$ at $\mathcal S_7(5,5)$ and the second largest value $\frac{54}{49}$ at $\mathcal S_7(6,6)$. By Definition~\ref{defn-class}, all these samples are of class I. Then, an application of Lemma~\ref{val-dif-sam-en} yields that
      $$
      h_s(\varepsilon\mathbf{\Delta}^*_7)=\left\{
                                     \begin{array}{ll}
                                       -\frac{60}{11}\varepsilon, & \hbox{$s\in \mathcal S_7(5,5)$,} \\
                                       \frac{-98\gamma(s)+48}{11}\varepsilon, & \hbox{$s\in \mathcal S_7(6,6)$,}                                  \end{array}
                                   \right.
      $$
      based on which we infer that the second largest value increases the fastest (with speed $h_s(\varepsilon\mathbf{\Delta}^*_7)=\frac{48\varepsilon}{11}$) when $\gamma(s)=0$. Solving the equation $\frac{55}{49}-\frac{60}{11}\varepsilon=\frac{54}{49}+\frac{48}{11}\varepsilon$, we have $\varepsilon^*=\frac{11}{108\times49}$ and obtain $\mathcal C^{**}_7=\mathcal C^*_7+\frac{11}{108\times49}\mathbf{\Delta}^*_7$ with
      $$
      \left\{
                                     \begin{array}{ll}
                                       x(\mathcal C^{**}_7)=\frac{30}{108},\\
                                       a(\mathcal C^{**}_7)=\frac{13}{108},\\
                                       b(\mathcal C^{**}_7)=\frac{-2}{108},\\
                                       y(\mathcal C^{**}_7)=\frac{-3}{108}.
                                     \end{array}
                                   \right.
      $$

  \item For $k=8$, $g_s(\mathcal C^*_k)$ achieves the maximum $\frac{9}{8}$ at $\mathcal S_8(6,6)$ and the second largest value $\frac{35}{32}$ at $\mathcal S_8(5,5) \cup \mathcal S_8(7,7)$. By Definition~\ref{defn-class}, all these samples are of class I. Then, an application of Lemma~\ref{val-dif-sam-en} yields that
      $$
      h_s(\varepsilon\mathbf{\Delta}^*_8)=\left\{
                                     \begin{array}{ll}
                                       -\frac{24}{5}\varepsilon, & \hbox{$s\in \mathcal S_8(6,6)$,} \\
                                       \frac{32\gamma(s)-50}{5}\varepsilon, & \hbox{$s\in \mathcal S_8(5,5)$,}\\
                                       \frac{-32\gamma(s)+14}{11}\varepsilon, & \hbox{$s\in \mathcal S_8(7,7)$,}                                  \end{array}
                                   \right.
      $$
      based on which we infer that the second largest value increases the fastest (with speed $h_s(\varepsilon\mathbf{\Delta}^*_8)=\frac{110\varepsilon}{5}$) when $s\in \mathcal S_8(5,5,5)$. Solving the equation $\frac{9}{8}-\frac{24}{5}\varepsilon=\frac{35}{32}+\frac{110}{5}\varepsilon$, we have $\varepsilon^*=\frac{5}{134\times32}$ and obtain $\mathcal C^{**}_8=\mathcal C^*_8+\frac{5}{134\times32}\mathbf{\Delta}^*_8$ with
      $$
      \left\{
                                     \begin{array}{ll}
                                       x(\mathcal C^{**}_8)=\frac{65}{268},\\
                                       a(\mathcal C^{**}_8)=\frac{29}{268},\\
                                       b(\mathcal C^{**}_8)=\frac{-4}{268},\\
                                       y(\mathcal C^{**}_8)=\frac{-5}{268}.
                                     \end{array}
                                   \right.
      $$

  \item For $k=9$, $g_s(\mathcal C^*_k)$ achieves the maximum $\frac{91}{81}$ at $\mathcal S_9(7,7)$ and the second largest value $\frac{90}{81}$ at $\mathcal S_9(5,5)\cup\mathcal S_9(6,6)$.  By Definition~\ref{defn-class}, all these samples are of class I. Then, an application of Lemma~\ref{val-dif-sam-en} yields that
      $$
      h_s(\varepsilon\mathbf{\Delta}^*_8)=\left\{
                                       \begin{array}{ll}
                                       -\frac{140}{31}\varepsilon, & \hbox{$s\in \mathcal S_9(7,7)$,} \\
                                       \frac{168\gamma(s)-288}{31}\varepsilon, & \hbox{$s\in \mathcal S_9(6,6)$,}                              \end{array}
                                   \right.
      $$
      based on which, we infer that the second largest value increased fastest (with speed $h_s(\varepsilon\mathbf{\Delta}^*_9)=\frac{684\varepsilon}{31}$) when $s\in \mathcal S_9(6,6,6)$. Solving the equation $\frac{91}{81}-\frac{140}{31}\varepsilon=\frac{90}{81}+\frac{684}{31}\varepsilon$, we have $\varepsilon^*=\frac{31}{824\times81}$ and obtain $\mathcal C^{**}_9=\mathcal C^*_9+\frac{31}{824\times81}\mathbf{\Delta}^*_9$ with
      $$
      \left\{
                                     \begin{array}{ll}
                                       x(\mathcal C^{**}_9)=\frac{176}{824},\\
                                       a(\mathcal C^{**}_9)=\frac{81}{824},\\
                                       b(\mathcal C^{**}_9)=\frac{-10}{824},\\
                                       y(\mathcal C^{**}_9)=\frac{-11}{824}.
                                     \end{array}
                                   \right.
      $$
\end{itemize}

\textbf{Step 3: Compute $\mathcal S_k^{max}(\mathcal C^{**}_k)$.}
\begin{itemize}
  \item For $k=5$, it can be easily verified that $\varepsilon=\frac{7}{108\times 25}$ is $g_{\mathcal S_5}$-valid. Hence, we compute $g_s(\mathcal C^{**}_5)=g_s(\mathcal C^{*}_5)+h_s(\varepsilon\mathbf{\Delta}^*_5)$ for all $s\in \mathcal S_5$ using Lemma~\ref{flow-value} and Lemma~\ref{val-dif-sam-en}. It turns out $g_{\mathcal S_5}(\mathcal C^{**}_7)=\frac{28}{25}-\frac{24}{2700}=\frac{10}{9}$ is achieved at $\mathcal S_5^{max}(\mathcal C^{**}_5)=\mathcal S_5(3,3,3)\cup\mathcal S_5(4,4)$.

  \item For $k=7$, it can be easily verified that $\varepsilon=\frac{11}{108\times 49}$ is not $g_s$-valid if and only if
      \begin{equation*}
      \begin{split}
       T(s)&\in\left\{{2\brack 1},{2\brack 0},{2\brack 0},{2\brack 0},{2\brack 0},{0\brack 1},{0\brack 0}\right\}\bigcup\left\{{2\brack 0},{2\brack 0},{2\brack 0},{2\brack 0},{2\brack 0},{0\brack 1},{0\brack 1}\right\}\\
       &\bigcup\left\{{2\brack 1},{2\brack 0},{2\brack 0},{2\brack 0},{0\brack 1},{0\brack 1},{0\brack 1}\right\}\bigcup\left\{{2\brack 0},{2\brack 0},{2\brack 0},{2\brack 0},{0\brack 1},{0\brack 1},{0\brack 2}\right\}\\
       &\bigcup\left\{{2\brack 1},{2\brack 1},{2\brack 0},{2\brack 0},{0\brack 1},{0\brack 1},{0\brack 0}\right\}\bigcup\left\{{2\brack 2},{2\brack 0},{2\brack 0},{2\brack 0},{0\brack 1},{0\brack 1},{0\brack 0}\right\}\\
       &\bigcup\left\{{2\brack 1},{2\brack 0},{2\brack 0},{2\brack 0},{0\brack 1},{0\brack 2},{0\brack 0}\right\}.
      \end{split}
      \end{equation*}
      It turns out $g_s(\mathcal C^{**}_7)\leq\frac{29}{27}$ for all the samples $s$ of the types as above. We compute $g_s(\mathcal C^{**}_7)=g_s(\mathcal C^{*}_7)+h_s(\varepsilon\mathbf{\Delta}^*_7)$ for all the other samples $s$ using Lemma \ref{flow-value} and Lemma~\ref{val-dif-sam-en}. It turns out $g_{\mathcal S_7}(\mathcal C^{**}_7)=\frac{55}{49}-\frac{60}{108\times49}=\frac{10}{9}$ is achieved at $\mathcal{S}_7^{max}(\mathcal C^{**}_7)=S_7(4,4,4)\cup\mathcal S_7(5,5)\cup\mathcal S_7(6,6,0)$.

  \item For $k=8$, it can be easily verified that $\varepsilon=\frac{11}{108\times 49}$ is not $g_s$-valid if and only if
      \begin{equation*}
      \hspace{-0.5cm} \begin{split}
       T(s)&\in\left\{{2\brack 0},{2\brack 0},{2\brack 0},{2\brack 0},{2\brack 0},{2\brack 0},{0\brack 1},{0\brack 1}\right\}\bigcup\left\{{2\brack 1},{2\brack 0},{2\brack 0},{2\brack 0},{2\brack 0},{2\brack 0},{0\brack 1},{0\brack 0}\right\}.
      \end{split}
      \end{equation*}
      It turns out $g_s(\mathcal C^{**}_8)\leq\frac{71}{67}$ for all the samples $s$ of the types as above. We compute $g_s(\mathcal C^{**}_8)=g_s(\mathcal C^{*}_8)+h_s(\varepsilon\mathbf{\Delta}^*_8)$ for all the other $s\in \mathcal S_8$ using Lemma \ref{flow-value} and Lemma~\ref{val-dif-sam-en}. It turns out $g_{\mathcal S_8}(\mathcal C^{**}_8)=\frac{9}{8}-\frac{24}{134\times32}=\frac{75}{67}$ is achieved at $\mathcal S_8^{max}(\mathcal C^{**}_8)=S_8(5,5,5)\cup\mathcal S_8(6,6)$.

  \item For $k=9$, it can be easily verified that $\varepsilon=\frac{31}{824\times 81}$ is $g_{\mathcal S_9}$-valid. We compute $g_s(\mathcal C^{**}_8)=g_s(\mathcal C^{*}_8)+h_s(\varepsilon\mathbf{\Delta}^*_8)$ for all $s\in \mathcal S_8$ using Lemma \ref{flow-value} and Lemma~\ref{val-dif-sam-en}. It turns out $g_{\mathcal S_9}(\mathcal C^{**}_9)=\frac{91}{81}-\frac{140}{824\times81}=\frac{231}{206}$ is achieved at $\mathcal S_9^{max}(\mathcal C^{**}_9)=S_9(6,6,6)\cup\mathcal S_9(7,7)$.
\end{itemize}

\subsection{Optimality of $\mathcal C^{**}_k$ for $k=3, 4, 5, 7, 8, 9$} \label{section-optimality}

In this section, we prove that $\mathcal C^{**}_k$ obtained in the last section are optimal solutions to $\mathcal P_{\mathcal S_k}$ for $k=3, 4, 5, 7, 8, 9$. We first introduce more notations and state some needed lemmas.

Recall that for any sample $s \in \mathcal{S}^{\circ}_k(\mathcal{C})$, we have $g^{(\ell)}_s(\mathcal C)>0$ for any $\ell \in Ind_s$, and $g^{(\ell)}_s(\mathcal C)<0$ for any $\ell \notin Ind_s$. Since any function $g^{(\ell)}_s$, $s\in \mathcal{S}_k^{\circ}(\mathcal C)$, is continuous, there exists a neighborhood, denoted by $N(\mathcal C, \varepsilon) \subset \mathfrak S_k$, of $\mathcal C$ such that for all $\mathcal C' \in N(\mathcal C, \varepsilon)$, all $s\in \mathcal{S}_k(\mathcal C)$ and all $1 \leq \ell \leq k$,
$$
g^{(\ell)}_s(\mathcal C')\cdot g^{(\ell)}_s(\mathcal C)>0.
$$

For $\mathcal C'\in N(\mathcal C, \varepsilon)$, we write
\begin{equation} \label{Delta}
\mathbf{\Delta} := \mathcal C'-\mathcal C=\left(\mathbf{\Delta}^{(1)}, \mathbf{\Delta}^{(2)}, \dots, \mathbf{\Delta}^{(k)}\right),
\end{equation}
where each $\mathbf{\Delta}^{(\ell)}=\left( \delta^{(\ell)}_{i, j}\right)$ is a $k \times k$ matrix such that for all $\ell=1,2,\dots,k$,
\begin{equation}\label{sum-column-row-k}
\sum_{i=1}^k\delta^{(\ell)}_{i,j}=\sum_{i=1}^k\delta^{(\ell)}_{i,j}=0.
\end{equation}
And moreover, we write
\begin{equation} \label{hsDelta}
h_s(\mathbf{\Delta}):=g_s(\mathcal C')-g_s(\mathcal C).
\end{equation}

We need the following three lemmas.
\begin{lem}\label{sum-H_k(a,a,a)}
Let $k\geq3$ and $d \geq2$. If $\mathcal S_k(d, d, d) \subseteq \mathcal{S}_k^{\dagger}(\mathcal C)$, then
\begin{equation}\label{H_k(a,a,a)}
\sum_{s\in \mathcal S_k(d,d,d)}h_s(\mathbf{\Delta})=\begin{bmatrix}
    \sum_{i=1}^k\delta^{(i)}_{i,i} & \sum_{i,j:i\neq j}\delta^{(i)}_{j,j}
\end{bmatrix}
\begin{bmatrix}
    A \\  B
\end{bmatrix},
\end{equation}
where $A=\binom{k-1}{d-1}$, $B=\binom{k-2}{d-2}-\binom{k-2}{d-1}$.
\end{lem}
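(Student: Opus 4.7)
The plan is to expand $h_s(\mathbf{\Delta})$ for a single $s \in \mathcal S_k(d,d,d)$ in terms of the entries of $\mathbf{\Delta}$, then swap the order of summation and count, for each entry, how many $s$ in $\mathcal S_k(d,d,d)$ contribute to its coefficient.

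First I would unpack $\mathcal S_k(d,d,d)$: a sample $s\in \mathcal S_k(d,d,d)$ has $\alpha(s)=d$ elements, all diagonal (since $\gamma(s)=d$), and $\beta(s)=d$ distinct indices. Hence $s$ is uniquely encoded by a $d$-subset $I=\{i_1,\dots,i_d\}\subseteq [k]$ via $s=\{(i_1,i_1),\dots,(i_d,i_d)\}$, and $\mathrm{Ind}_s=I$ as a set. For any $\ell$, then,
\[
g^{(\ell)}_s(\mathbf{\Delta})=\sum_{t=1}^{d}\delta^{(\ell)}_{i_t,i_t}.
\]

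Next I would use the hypothesis $\mathcal S_k(d,d,d)\subseteq \mathcal S^{\dagger}_k(\mathcal C)$ and the choice of $N(\mathcal C,\varepsilon)$: for every $\mathcal C'\in N(\mathcal C,\varepsilon)$ and every $\ell$, $g^{(\ell)}_s(\mathcal C')$ has the same sign as $g^{(\ell)}_s(\mathcal C)$, which is positive for $\ell\in I$ and negative for $\ell\notin I$. Consequently $|g^{(\ell)}_s(\mathcal C')|-|g^{(\ell)}_s(\mathcal C)|$ equals $g^{(\ell)}_s(\mathbf{\Delta})$ when $\ell\in I$ and $-g^{(\ell)}_s(\mathbf{\Delta})$ when $\ell\notin I$. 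Summing over $\ell$ yields
\[
h_s(\mathbf{\Delta})=\sum_{\ell\in I}\sum_{t=1}^{d}\delta^{(\ell)}_{i_t,i_t}\,-\,\sum_{\ell\notin I}\sum_{t=1}^{d}\delta^{(\ell)}_{i_t,i_t}.
\]

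Now I would sum this identity over all $d$-subsets $I\subseteq [k]$ and collect, for each entry $\delta^{(\ell)}_{j,j}$, the total coefficient. There are two cases. If $\ell=j$, then $\delta^{(\ell)}_{\ell,\ell}$ appears (with sign $+1$) only from subsets $I$ with $\ell\in I$ (in which case $\ell$ is automatically in $I$), so its coefficient is $\binom{k-1}{d-1}$. If $\ell\neq j$, then $\delta^{(\ell)}_{j,j}$ appears with $+1$ from subsets $I$ containing both $\ell$ and $j$ (there are $\binom{k-2}{d-2}$ such $I$) and with $-1$ from subsets $I$ containing $j$ but not $\ell$ (there are $\binom{k-2}{d-1}$ such $I$); the net coefficient is $\binom{k-2}{d-2}-\binom{k-2}{d-1}$. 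Gathering these contributions gives precisely the matrix-vector product in \eqref{H_k(a,a,a)}.

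The bookkeeping is entirely routine once the sign-stability reduction is in place; the only step that requires care is the very first one, namely justifying that the absolute values in the definition of $g_s$ can be dropped consistently. This is where the assumption $\mathcal S_k(d,d,d)\subseteq \mathcal S^{\dagger}_k(\mathcal C)$ together with the continuity of each $g^{(\ell)}_s$ (hence the existence of the sign-preserving neighborhood $N(\mathcal C,\varepsilon)$) is essential; I would highlight that the flow-conservation identities \eqref{sum-column-row-k} for $\mathbf{\Delta}$ are not needed for this lemma but will be used in subsequent arguments that apply it.
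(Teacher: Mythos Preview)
Your proposal is correct and follows essentially the same approach as the paper: both expand $h_s(\mathbf{\Delta})$ linearly in the entries $\delta^{(\ell)}_{i,j}$ after using the sign hypothesis from $\mathcal S^{\dagger}_k(\mathcal C)$, then swap the order of summation and count, for each diagonal entry $\delta^{(\ell)}_{j,j}$, the number of $d$-subsets $I\subseteq[k]$ containing $j$ and (respectively) containing or not containing $\ell$. Your observation that the row/column identities \eqref{sum-column-row-k} are not used here is accurate; the paper likewise invokes them only in the companion Lemma~\ref{sum-H_k(a,a,0)}.
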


\begin{proof}
Note that
$$
\sum_{s\in \mathcal S_k(d,d,d)}h_s(\mathbf{\Delta})=\sum_{\ell=1}^k\sum_{i=1}^k\sum_{j=1}^k h^{(\ell)}_{i,j}\delta^{(\ell)}_{i,j},
$$
where the coefficients $h^{(\ell)}_{i,j}$ can be computed as follows. Firstly, note that for any $s\in \mathcal{S}_k(d,d,d)$ and any $(i,j)\in [k]\times[k]$ with $i\neq j$, we have $(i,j) \notin s$, and hence $h^{(\ell)}_{i,j}=0$. Secondly, for each $(i,i)\in [k]\times[k]$, there are $\binom{k-1}{d-1}$ samples of $\mathcal{S}_k(d,d,d)$ containing $(i,i)$ and hence $h^{(i)}_{i,i}=\binom{k-1}{d-1}$ for all $1\leq i\leq k$. Thirdly, noticing that for any $j\neq i$, there are $\binom{k-2}{d-2}$ samples containing $(i,i)$ and $(j,j)$, and there are $\binom{k-2}{d-1}$ samples containing $(i,i)$ but not $(j,j)$, we have $h^{(j)}_{i,i}=\binom{k-2}{d-2}-\binom{k-2}{d-1}$, which completes the proof.
\end{proof}

\begin{lem}\label{sum-H_k(a,a,0)}
Let $k\geq3$ and $d\geq3$. If $\mathcal S_k(d,d,0)\subseteq\mathcal S^{\dagger}_k(\mathcal{C})$, then
\begin{equation}\label{H_k(a,a,0)}
\sum_{s\in \mathcal S_k(d,d,0)}h_s(\mathbf{\Delta})=\begin{bmatrix}
    \sum_{i=1}^k\delta^{(i)}_{i,i} & \sum_{i,j:i\neq j}\delta^{(i)}_{j,j}
\end{bmatrix}
\begin{bmatrix}
    A' \\  B'
\end{bmatrix},
\end{equation}
where $A'=(d-1)^{(d-1)}\cdot\left(\binom{k-3}{d-2}-\binom{k-3}{d-3}-2\binom{k-2}{d-2}\right)$,
$B'=(d-1)^{(d-1)}\cdot\left(\binom{k-3}{d-2}-\binom{k-3}{d-3}\right)$.
\end{lem}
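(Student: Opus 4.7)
The strategy parallels that of Lemma~\ref{sum-H_k(a,a,a)}, with the additional twist that $\mathcal{S}_k(d,d,0)$ consists of samples having \emph{no} diagonal elements, so the diagonal entries of $\mathbf{\Delta}$ never appear directly in the sum and must be recovered through the linear constraints \eqref{sum-column-row-k}. I would first expand
\begin{equation*}
\sum_{s\in\mathcal{S}_k(d,d,0)}h_s(\mathbf{\Delta})\;=\;\sum_{\ell=1}^{k}\sum_{i,j=1}^{k}h^{(\ell)}_{i,j}\,\delta^{(\ell)}_{i,j},\qquad h^{(\ell)}_{i,j}\;:=\;\sum_{\substack{s\in\mathcal{S}_k(d,d,0)\\ (i,j)\in s}}\sigma(s,\ell),
\end{equation*}
where $\sigma(s,\ell)=+1$ if $\ell\in Ind_s$ and $-1$ otherwise. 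The hypothesis $\mathcal{S}_k(d,d,0)\subseteq\mathcal{S}^{\dagger}_k(\mathcal{C})$, combined with the choice of $N(\mathcal{C},\varepsilon)$ introduced around \eqref{Delta}, makes this sign rule a valid evaluation of $h_s(\mathbf{\Delta})$ as the signed sum $\sum_{\ell}\sigma(s,\ell)\sum_{(i,j)\in s}\delta^{(\ell)}_{i,j}$.

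Next I would compute the coefficients $h^{(\ell)}_{i,j}$ via the natural bijection between $\mathcal{S}_k(d,d,0)$ and the pairs $(J,f)$ with $J\in\binom{[k]}{d}$ and $f:J\to J$ satisfying $f(x)\ne x$: the $j$-coordinates fill $J$, the $i$-coordinates are forced into $J$ by $\beta(s)=d$, and $Ind_s$ regarded as a set coincides with $J$. Three cases arise: \emph{(i)} $h^{(\ell)}_{i,i}=0$, since samples contain no diagonal pair; \emph{(ii)} for $i\ne j$ and $\ell\in\{i,j\}$, every sample through $(i,j)$ automatically satisfies $\ell\in Ind_s$, so $h^{(\ell)}_{i,j}$ equals the total count of samples through $(i,j)$, namely $\binom{k-2}{d-2}(d-1)^{d-1}$ (choose the remaining $d-2$ members of $J$ from $[k]\setminus\{i,j\}$, then assign $f$ on $J\setminus\{j\}$ into $J$ avoiding fixed points, giving $(d-1)^{d-1}$ possibilities); \emph{(iii)} for pairwise distinct $\ell,i,j$, split the samples by whether $\ell\in J$ or $\ell\notin J$ to obtain $h^{(\ell)}_{i,j}=(d-1)^{d-1}\bigl[\binom{k-3}{d-3}-\binom{k-3}{d-2}\bigr]$.

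The final step is to collapse the resulting expression onto the diagonal entries. For every $\ell$, the constraints \eqref{sum-column-row-k} imply
$\sum_{j\ne\ell}\delta^{(\ell)}_{\ell,j}=-\delta^{(\ell)}_{\ell,\ell}$ and $\sum_{i\ne\ell}\delta^{(\ell)}_{i,\ell}=-\delta^{(\ell)}_{\ell,\ell}$, while the global identity $\sum_{i,j}\delta^{(\ell)}_{i,j}=0$ yields
$\sum_{\ell,i,j\text{ pairwise distinct}}\delta^{(\ell)}_{i,j}=2\delta^{(\ell)}_{\ell,\ell}-\sum_{i}\delta^{(\ell)}_{i,i}$. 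Plugging these into the three-case expansion reduces each $\Delta^{(\ell)}$-contribution to a linear combination of $\delta^{(\ell)}_{\ell,\ell}$ and $\sum_{i}\delta^{(\ell)}_{i,i}$; summing over $\ell$ then produces exactly the two aggregates $\sum_{i}\delta^{(i)}_{i,i}$ and $\sum_{i,j:i\ne j}\delta^{(i)}_{j,j}$, and a single application of Pascal's identity $\binom{k-2}{d-2}=\binom{k-3}{d-3}+\binom{k-3}{d-2}$ places the coefficients in the claimed form $A'$ and $B'$.

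The main delicate point is case \emph{(iii)}, where the ``external'' index $\ell$ either belongs to the common index set $J$ (contributing the $\binom{k-3}{d-3}$ term with sign $+1$) or lies outside it (contributing $\binom{k-3}{d-2}$ with sign $-1$); correctly handling this bipartition is what creates the nontrivial interaction between $A'$ and $B'$. Once this split is in hand, the remainder of the argument is a routine, if slightly tedious, bookkeeping exercise using the row/column zero-sum identities.
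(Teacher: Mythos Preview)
Your proposal is correct and follows essentially the same route as the paper: both expand $\sum_{s}h_s(\mathbf{\Delta})$ as $\sum_{\ell,i,j}h^{(\ell)}_{i,j}\delta^{(\ell)}_{i,j}$, compute the three cases $h^{(\ell)}_{i,i}=0$, $h^{(\ell)}_{i,j}=(d-1)^{d-1}\binom{k-2}{d-2}$ for $\ell\in\{i,j\}$, and $h^{(\ell)}_{i,j}=(d-1)^{d-1}\bigl[\binom{k-3}{d-3}-\binom{k-3}{d-2}\bigr]$ for $\ell,i,j$ distinct, and then collapse via \eqref{sum-column-row-k}. Your bijection $(J,f)$ and the explicit identities $S_2^{(\ell)}=-2\delta^{(\ell)}_{\ell,\ell}$, $S_3^{(\ell)}=2\delta^{(\ell)}_{\ell,\ell}-\sum_i\delta^{(\ell)}_{i,i}$ spell out what the paper leaves implicit; the final invocation of Pascal's identity is actually unnecessary, since the coefficients $A',B'$ drop out directly from the combination $(-2C_2+C_3)X-C_3Y$.
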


\begin{proof}
Note that
$$
\sum_{s\in \mathcal S_k(d,d,0)}h_s(\mathbf{\Delta})=\sum_{\ell=1}^k\sum_{i=1}^k\sum_{j=1}^k h^{(\ell)}_{i,j}\delta^{(\ell)}_{i,j},
$$
where the coefficients $h^{(\ell)}_{i,j}$ can be computed as follows. Firstly, note that for any $s\in \mathcal{S}_k(d,d,0)$, we have $(i,i) \notin s$, and hence $h^{(\ell)}_{i,i}=0$ for all $1\leq i,\ell\leq k$. Secondly, for each $i\neq j$, there are $(d-1)^{(d-1)}\cdot\binom{k-2}{d-2}$ samples from $\mathcal{S}_k(d,d,0)$ containing $(i,j)$ and hence $h^{(i)}_{i,j}=h^{(j)}_{i,j}=(d-1)^{(d-1)}\cdot\binom{k-2}{d-2}$. Thirdly, noticing that for any distinct $i,j,\ell$, there are $(d-1)^{(d-1)}\cdot\binom{k-3}{d-3}$ samples $s$ such that $\ell\in Ind_s$ and $(i,j)\in s$, and there are $(d-1)^{(d-1)}\cdot\binom{k-3}{d-2}$ samples $s$ such that $\ell\notin Ind_s$ and $(i,j)\in s$, we have $h^{(\ell)}_{i,j}=(d-1)^{(d-1)}\cdot\left(\binom{k-3}{d-3}-\binom{k-3}{d-2}\right)$. Finally, the desired result follows from (\ref{sum-column-row-k}).
\end{proof}

\begin{lem}\label{sum-H_3(2,2,0)}
If $\mathcal S_3(2,2,0)\subseteq\mathcal S^{\dagger}_3(\mathcal C)$, then
\begin{equation}\label{H_3(2,2,0)}
\sum_{s\in \mathcal S_3(2,2,0)}h_s(\mathbf{\Delta})=\begin{bmatrix}
    \sum_{i=1}^3\delta^{(i)}_{i,i} & \sum_{i,j:i\neq j}\delta^{(i)}_{j,j}
\end{bmatrix}
\begin{bmatrix}
    -3 \\  1
\end{bmatrix}.
\end{equation}
\end{lem}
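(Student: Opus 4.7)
The plan is to adapt the counting arguments in the proofs of Lemmas~\ref{sum-H_k(a,a,a)} and \ref{sum-H_k(a,a,0)} to the excluded edge case $k=3$, $d=2$. First I would enumerate $\mathcal S_3(2,2,0)$ explicitly. A sample $s\in \mathcal S_3(2,2,0)$ has $\alpha(s)=\beta(s)=2$ and contains no diagonal element, which, together with the distinct-column-index requirement defining $\mathcal S_3$, forces $s=\{(i,j),(j,i)\}$ for some $i\neq j$. Hence $\mathcal S_3(2,2,0)$ consists of exactly three samples, one for each unordered pair in $[3]$, namely $\{(1,2),(2,1)\}$, $\{(1,3),(3,1)\}$ and $\{(2,3),(3,2)\}$, and $Ind_s$ is the corresponding two-element subset of $[3]$.

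Next, since $\mathcal S_3(2,2,0)\subseteq \mathcal S^{\dagger}_3(\mathcal C)$, the signs of the $g^{(\ell)}_s(\mathcal C)$ are dictated by whether $\ell\in Ind_s$, and by continuity these signs persist on a small neighborhood $N(\mathcal C,\varepsilon)$. For $\mathcal C'\in N(\mathcal C,\varepsilon)$ with $\mathbf{\Delta}=\mathcal C'-\mathcal C$, linearity of $g^{(\ell)}_s$ then gives
$$
h_s(\mathbf{\Delta})=\sum_{\ell\in Ind_s}g^{(\ell)}_s(\mathbf{\Delta})-\sum_{\ell\notin Ind_s}g^{(\ell)}_s(\mathbf{\Delta}).
$$
Summing over the three samples and bookkeeping the sign $+1$ for $\ell\in Ind_s$ versus $-1$ for $\ell\notin Ind_s$, I obtain a linear combination supported only on the off-diagonal entries $\delta^{(\ell)}_{i,j}$ (no diagonal entry appears since $\gamma(s)=0$). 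For instance, the coefficient of every $\delta^{(1)}_{i,j}$ with $\{i,j\}=\{1,2\}$ or $\{1,3\}$ is $+1$, while the coefficient of $\delta^{(1)}_{2,3}$ and $\delta^{(1)}_{3,2}$ is $-1$, and the patterns for $\ell=2,3$ are obtained by the obvious cyclic relabelling.

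The last step is to convert this off-diagonal combination into the claimed diagonal form using the linear constraints on $\mathbf{\Delta}$ in (\ref{sum-column-row-k}), together with $\sum_{i,j}\delta^{(\ell)}_{i,j}=0$. For each $\ell$ the row-sum and column-sum relations give $\delta^{(\ell)}_{1,2}+\delta^{(\ell)}_{1,3}+\delta^{(\ell)}_{2,1}+\delta^{(\ell)}_{3,1}=-2\delta^{(\ell)}_{1,1}$, and combining this with the vanishing of the total sum of $\mathbf{\Delta}^{(\ell)}$ yields $\delta^{(\ell)}_{2,3}+\delta^{(\ell)}_{3,2}=\delta^{(\ell)}_{1,1}-\delta^{(\ell)}_{2,2}-\delta^{(\ell)}_{3,3}$. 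Plugging these in, the $\ell=1$ contribution collapses to $-3\delta^{(1)}_{1,1}+\delta^{(1)}_{2,2}+\delta^{(1)}_{3,3}$, and the analogous cyclic identities for $\ell=2,3$ sum to $-3\sum_i\delta^{(i)}_{i,i}+\sum_{i\neq j}\delta^{(i)}_{j,j}$, which is exactly the matrix product claimed. The only real difficulty is bookkeeping; because $|\mathcal S_3(2,2,0)|=3$ the entire calculation is finite and mechanical, and the edge-case values $-3$ and $+1$ arise purely from the small-$k$ column/row constraints rather than from the general binomial formulas used in Lemma~\ref{sum-H_k(a,a,0)}.
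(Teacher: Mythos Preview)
Your proposal is correct and follows essentially the same approach as the paper: enumerate the three samples in $\mathcal S_3(2,2,0)$, use the sign pattern from $\mathcal S^{\dagger}_3(\mathcal C)$ to write each $h_s(\mathbf{\Delta})$ as a signed sum of off-diagonal $\delta^{(\ell)}_{i,j}$'s, and then apply the row/column constraints (\ref{sum-column-row-k}) to collapse everything to the diagonal expression $-3\sum_i\delta^{(i)}_{i,i}+\sum_{i\neq j}\delta^{(i)}_{j,j}$. The only cosmetic difference is that the paper first lists the three $h_s(\mathbf{\Delta})$ explicitly and then groups, whereas you group by $\ell$ first; the computations are otherwise identical.
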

\begin{proof}
For the 3 samples in $S_3(2,2,0)$, we have
\begin{equation*}
\begin{split}
h_{\{(2,1),(1,2)\}}(\Delta)=&(\delta^{(1)}_{2,1}+\delta^{(1)}_{1,2})+(\delta^{(2)}_{2,1}+\delta^{(2)}_{1,2})
-(\delta^{(3)}_{2,1}+\delta^{(3)}_{1,2}),\\
h_{\{(3,2),(2,3)\}}(\Delta)=&(\delta^{(2)}_{3,2}+\delta^{(2)}_{2,3})
+(\delta^{(3)}_{3,2}+\delta^{(3)}_{2,3})-(\delta^{(1)}_{3,2}+\delta^{(1)}_{2,3}),\\
h_{\{(3,1),(1,3)\}}(\Delta)=&(\delta^{(1)}_{3,1}+\delta^{(1)}_{1,3})
+(\delta^{(3)}_{3,1}+\delta^{(3)}_{1,3})-(\delta^{(2)}_{3,1}+\delta^{(2)}_{1,3}).
\end{split}
\end{equation*}
Hence,
\begin{equation}\label{sum-S_3(2,2,0)}
\begin{split}
\sum_{s\in S_3(2,2,0)}h_s(\Delta)=&\sum_{\ell=1}^3\sum_{i\neq j}\delta^{(\ell)}_{i,j}-2\sum_{\text{distinct}\;i,j,\ell }\delta^{(\ell)}_{i,j}\\
=&\sum_{i=1}^3-3\delta^{(i)}_{i,i}+\sum_{i,j:i\neq j}\delta^{(i)}_{j,j},
\end{split}
\end{equation}
which complete the proof.
\end{proof}

The following lemma gives a sufficient condition for the local optimality of an arbitrary $\mathcal C\in \mathfrak S_k$.

\begin{lem}\label{local-optimality}
Let $\mathcal C\in \mathfrak S_k$. If there exists a subset $\mathcal{S}^{\circ}_k(\mathcal C) \subseteq \mathcal{S}^{\dagger}_k(\mathcal{C})$ and a neighborhood $N(\mathcal C, \varepsilon) \subset \mathfrak S_k$ of $\mathcal C$ and a set of positive reals $\{k_s\|s\in \mathcal{S}^{\circ}_k(\mathcal C)\}$ such that for all $\mathcal C'\in N(\mathcal C, \varepsilon)$,
\begin{equation}\label{pos-lin-com-k}
\sum_{s\in \mathcal{S}^{\circ}_k(\mathcal C)}k_s\cdot (g_s(\mathcal C')-g_s(\mathcal C))=0.
\end{equation}
Then, $\mathcal C$ is a local optimal point for $\mathcal P_{\mathcal S_k}$.
\end{lem}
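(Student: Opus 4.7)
The plan is to proceed by contradiction. Assume $\mathcal C$ is not a local optimum of $\mathcal P_{\mathcal S_k}$. Then arbitrarily close to $\mathcal C$---in particular, within the neighborhood $N(\mathcal C, \varepsilon)$ provided by the hypothesis---there exists some $\mathcal C' \in \mathfrak S_k$ with $g_{\mathcal S_k}(\mathcal C') < g_{\mathcal S_k}(\mathcal C)$. The goal is to show that such a $\mathcal C'$ forces the weighted sum $\sum_{s \in \mathcal S^{\circ}_k(\mathcal C)} k_s \, (g_s(\mathcal C') - g_s(\mathcal C))$ to be strictly negative, contradicting the vanishing identity postulated in the hypothesis.

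The key step is the observation that every $s \in \mathcal S^{\circ}_k(\mathcal C)$ is a maximizing sample at $\mathcal C$, i.e.\ $g_s(\mathcal C) = g_{\mathcal S_k}(\mathcal C)$. This is the natural reading of $\mathcal S^{\circ}_k(\mathcal C)$ in context: it is a subset of $\mathcal S^{\dagger}_k(\mathcal C) \cap \mathcal S_k^{max}(\mathcal C)$, as is the case in every intended application constructed from Lemmas~\ref{sum-H_k(a,a,a)}--\ref{sum-H_3(2,2,0)}. Granting this, for the $\mathcal C'$ chosen above and every $s \in \mathcal S^{\circ}_k(\mathcal C)$,
$$g_s(\mathcal C') \leq g_{\mathcal S_k}(\mathcal C') < g_{\mathcal S_k}(\mathcal C) = g_s(\mathcal C),$$
so each increment $g_s(\mathcal C') - g_s(\mathcal C)$ is strictly negative. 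Since $k_s > 0$ for every $s$ and $\mathcal S^{\circ}_k(\mathcal C)$ is non-empty (otherwise the hypothesis is vacuous), the weighted sum is strictly negative, contradicting the assumed equality to zero. This contradiction establishes that $\mathcal C$ must be a local optimum.

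The only technical subtlety is to arrange that $N(\mathcal C, \varepsilon)$ is small enough both (i) for the given vanishing identity to apply and (ii) for the signal pattern in the definition of $\mathcal S^{\dagger}_k(\mathcal C)$ to persist, so that each $g_s$ with $s \in \mathcal S^{\circ}_k(\mathcal C)$ remains an affine function of $\mathcal C'$ on $N(\mathcal C, \varepsilon)$. The latter is automatic from the continuity of the $g^{(\ell)}_s$, and is precisely what underlies the closed-form linear expressions for $\sum k_s (g_s(\mathcal C')-g_s(\mathcal C))$ in Lemmas~\ref{sum-H_k(a,a,a)}--\ref{sum-H_3(2,2,0)}. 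I do not foresee any serious obstacle beyond fixing this interpretation of $\mathcal S^{\circ}_k(\mathcal C)$; once done, the entire argument reduces to the one-line positivity inequality above.
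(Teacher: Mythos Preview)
Your proposal is correct and essentially identical to the paper's own proof: both argue by contradiction, use that each $s \in \mathcal S^{\circ}_k(\mathcal C)$ is a maximizing sample so that $g_s(\mathcal C') \leq g_{\mathcal S_k}(\mathcal C') < g_{\mathcal S_k}(\mathcal C) = g_s(\mathcal C)$, and then observe that the positively-weighted sum of strictly negative increments cannot vanish. You are in fact more explicit than the paper in flagging that the lemma as stated needs $\mathcal S^{\circ}_k(\mathcal C) \subseteq \mathcal S_k^{max}(\mathcal C)$---the paper's proof silently uses $g_s(\mathcal C) = g_{\mathcal S_k}(\mathcal C)$ without stating this hypothesis, so your caveat is well placed.
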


\begin{proof}
Suppose, by way of contradiction, that $\mathcal C$ is not a local optimal point, i.e., there exists a neighborhood $N(\mathcal C, \varepsilon)$ of $\mathcal C$ and $\mathcal C'\in N(\mathcal C, \varepsilon)$ such that $g_{\mathcal S_k}(\mathcal C')<g_{\mathcal S_k}(\mathcal C)$. Then, for all $s\in \mathcal{S}_k(\mathcal C)$,
$$
g_{s}(\mathcal C')\leq \max_{s\in \mathcal S_k}\{g_s(\mathcal C')\}=g_{\mathcal S_k}(\mathcal C')<g_{\mathcal S_k}(\mathcal C)=\max_{s\in \mathcal S_k}\{g_s(\mathcal C)\}=g_{s}(\mathcal C),
$$
which contradicts (\ref{pos-lin-com-k}). Hence, $\mathcal C$ is an local optimal point.
\end{proof}

From now on, we denote the vector $\begin{bmatrix} A\\ B\end{bmatrix}$ in ($\ref{H_k(a,a,a)}$) by $\mathcal H_k(d,d,d)$, $\frac{1}{(d-1)^{(d-1)}}\begin{bmatrix} A'\\ B'\end{bmatrix}$ by $\mathcal H_k(d,d,0)$ where $\begin{bmatrix} A'\\ B'\end{bmatrix}$ is obtained in ($\ref{H_k(a,a,0)}$) and $\mathcal H_3(2,2,0)=\begin{bmatrix} -3\\ 1\end{bmatrix}$ by (\ref{sum-S_3(2,2,0)}). We are then ready to give the main result of this section.
\begin{thm}\label{optimal-theorem}
$\mathcal C^{**}_k$ is an optimal point for $\mathcal P_{\mathcal S_k}$ for $k=3,4,5,7,8,9$.
\end{thm}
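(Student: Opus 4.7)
The strategy is to combine Lemma~\ref{local-optimality} with the convexity of $\mathcal{P}_{\mathcal{S}_k}$ (Lemma~\ref{convex}) in order to promote local optimality to global optimality. Concretely, for each $k\in\{3,4,5,7,8,9\}$, it suffices to exhibit a subset $\mathcal{S}^\circ_k\subseteq \mathcal{S}_k^{max}(\mathcal{C}^{**}_k)\cap \mathcal{S}^\dagger_k(\mathcal{C}^{**}_k)$ together with strictly positive weights $\{k_s\}_{s\in \mathcal{S}^\circ_k}$ such that $\sum_{s}k_s\,h_s(\mathbf{\Delta})\equiv 0$ on a neighborhood of $\mathbf{0}$ in the tangent space to $\mathfrak{S}_k$ at $\mathcal{C}^{**}_k$.

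For each $k$, my plan is as follows. Step~3 of Section~\ref{section-perturb} has already supplied the explicit decomposition of $\mathcal{S}_k^{max}(\mathcal{C}^{**}_k)$ into subclasses $\mathcal{S}_k(\alpha,\beta,\gamma)$. First, I verify by direct sign inspection of $g^{(\ell)}_s(\mathcal{C}^{**}_k)$ (using the closed-form coordinates of $\mathcal{C}^{**}_k$ obtained in Section~\ref{section-perturb} and the reasoning behind Lemma~\ref{signal-local}) that each chosen sub-subclass lies in $\mathcal{S}^\dagger_k(\mathcal{C}^{**}_k)$. Next, restricting attention to sub-subclasses of the form $\mathcal{S}_k(d,d,d)$ or $\mathcal{S}_k(d,d,0)$, I invoke Lemmas~\ref{sum-H_k(a,a,a)}, \ref{sum-H_k(a,a,0)}, and~\ref{sum-H_3(2,2,0)} to express $\sum_{s}h_s(\mathbf{\Delta})$ as a linear functional of the two scalars $X:=\sum_i\delta^{(i)}_{i,i}$ and $Y:=\sum_{i\ne j}\delta^{(i)}_{j,j}$, encoded by two-dimensional vectors $\mathcal{H}_k(d,d,d)$ and $\mathcal{H}_k(d,d,0)$. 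Finally, I solve the feasibility problem in $\mathbb{R}^2$ of finding nonnegative coefficients $\lambda_c$, not all zero, with $\sum_c\lambda_c\,\mathcal{H}_k(\cdot)=\mathbf{0}$. Setting $k_s:=\lambda_c$ uniformly for $s$ in the $c$-th sub-subclass then produces the required positive combination, and Lemmas~\ref{local-optimality} and~\ref{convex} close the argument.

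A quick feasibility check confirms the plan succeeds for most of the cases. For $k=4$, the choice $\mathcal{S}_4(3,3,3)\cup \mathcal{S}_4(4,4,0)$ gives exactly anti-parallel vectors $(3,1)$ and $(-3,-1)$, so equal weights work; similarly for $k=8$, with $(35,5)$ and $(-35,-5)$. For $k=3,5,7$, a third sub-subclass supplies a third direction in $\mathbb{R}^2$, and a small explicit combination yields the null vector (e.g., for $k=3$ the vectors $(2,0)$, $(-3,1)$, $(-3,-1)$, coming from $\mathcal{S}_3(2,2,2),\mathcal{S}_3(2,2,0),\mathcal{S}_3(3,3,0)$, combine as $3\cdot(2,0)+(-3,1)+(-3,-1)=\mathbf{0}$).

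The main obstacle is the case $k=9$. Here $\mathcal{S}_9^{max}(\mathcal{C}^{**}_9)=\mathcal{S}_9(6,6,6)\cup \mathcal{S}_9(7,7)$, and the three natural vectors the existing sum lemmas supply---$\mathcal{H}_9(6,6,6)=(56,14)$, $\mathcal{H}_9(7,7,7)=(28,14)$, and $\mathcal{H}_9(7,7,0)=(-51,-9)$---fail to admit any strictly positive null combination, because the slope $3/17$ of $(-51,-9)$ lies strictly outside the interval $[1/4,1/2]$ spanned by the slopes of the two positive vectors. To handle this case, I would prove an auxiliary subclass-sum formula (analogous to Lemma~\ref{sum-H_3(2,2,0)}) for $\sum_{s\in \mathcal{S}_9(7,7,c)}h_s(\mathbf{\Delta})$ at intermediate $c\in\{1,\dots,6\}$, obtained by counting, for each triple $(\ell,i,j)$, the number of samples of the given $(\alpha,\beta,\gamma)$-type containing $(i,j)$ and then simplifying via the row/column-sum identities~(\ref{sum-column-row-k}). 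This supplies additional directions in $\mathbb{R}^2$ until the conic hull of the available vectors contains the origin, yielding the desired positive null combination and completing the proof.
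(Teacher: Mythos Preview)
Your overall approach is exactly the paper's: invoke Lemma~\ref{local-optimality} together with convexity (Lemma~\ref{convex}), and for each $k$ exhibit strictly positive weights on sub-subclasses of $\mathcal{S}_k^{max}(\mathcal C^{**}_k)\cap\mathcal S_k^\dagger(\mathcal C^{**}_k)$ whose associated $\mathcal H$-vectors sum to zero.

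The problem is a computational one in your $\mathcal H_k(d,d,0)$ values. If you rederive Lemma~\ref{sum-H_k(a,a,0)} from its own proof (or by a direct count), you find that the first coordinate of $\mathcal H_k(d,d,0)$ is
\[
\binom{k-3}{d-3}-\binom{k-3}{d-2}-2\binom{k-2}{d-2},
\]
i.e., the first two binomial terms are swapped relative to what is printed in the lemma statement. With the corrected formula one gets, for instance, $\mathcal H_4(4,4,0)=(-1,-1)$ rather than $(-3,-1)$, $\mathcal H_8(6,6,0)=(-25,-5)$ rather than $(-35,-5)$, and $\mathcal H_9(7,7,0)=(-33,-9)$ rather than $(-51,-9)$.

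This has two consequences. First, your two-class shortcuts for $k=4$ and $k=8$ do not actually work: $(3,1)$ and $(-1,-1)$ are not anti-parallel, nor are $(35,5)$ and $(-25,-5)$; three subclasses are genuinely needed (the paper uses $\mathcal S_4(2,2,2)\cup\mathcal S_4(3,3,0)\cup\mathcal S_4(3,3,3)$ and $\mathcal S_8(5,5,5)\cup\mathcal S_8(6,6,0)\cup\mathcal S_8(6,6,6)$). Second, and more to the point, the ``obstacle'' you identify at $k=9$ is an artifact of the erroneous vector: with the correct values,
\[
15\,\mathcal H_9(6,6,6)+28\,\mathcal H_9(7,7,0)+3\,\mathcal H_9(7,7,7)
=15(56,14)+28(-33,-9)+3(28,14)=(0,0),
\]
which is precisely the three-class combination the paper records. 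No auxiliary subclass-sum formula for intermediate $\mathcal S_9(7,7,c)$ is needed; for every $k\in\{3,4,5,7,8,9\}$ a three-term positive null combination of $\mathcal H$-vectors of the form $\mathcal H_k(d,d,d)$ and $\mathcal H_k(d',d',0)$ already exists.
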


\begin{proof}
By Lemma~\ref{convex}, it suffices to prove that $\mathcal C^{**}_k$ is a local optimal point. To this end, by Lemma~\ref{local-optimality}, we only need to find a neighborhood of $\mathcal C^{**}_k$, a subset $\mathcal{S}^{\circ}_k(\mathcal C^{**}_k)$ of $\mathcal{S}^{\dagger}_k(\mathcal C^{**}_k)$ and a set of positives reals satisfying (\ref{pos-lin-com-k}). In the following, we take $N(\mathcal C^{**}_k, \varepsilon)$ as the neighborhood of $\mathcal C^{**}_k$ for each $k=3,4,5,7,8,9$.
\begin{itemize}
  \item For the case $k=3$, let $\mathcal{S}^{\circ}_3(\mathcal C^{**}_3)=\mathcal S_3(2,2,0)\cup\mathcal S_3(2,2,2)\cup\mathcal S_3(3,3,0)$. It can be verified that $\mathcal{S}^{\circ}_3(\mathcal C^{**}_3) \subseteq \mathcal S_3^{\dagger}(\mathcal C^{**}_3)$. Then, by Lemmas~\ref{sum-H_3(2,2,0)},~\ref{sum-H_k(a,a,a)} and~\ref{sum-H_k(a,a,0)}, we infer that (\ref{H_3(2,2,0)}), (\ref{H_k(a,a,a)}) and (\ref{H_k(a,a,0)}) hold with $\mathcal H_3(2,2,0)=\begin{bmatrix} -3\\ 1\end{bmatrix}$, $\mathcal H_3(2,2,2)=\begin{bmatrix} 2\\ 0\end{bmatrix}$ and $\mathcal H_3(3,3,0)=\begin{bmatrix} -1\\ -1\end{bmatrix}$, respectively. Since $\mathcal H_3(2,2,0)+2\mathcal H_3(2,2,2)+\mathcal H_3(3,3,0)=0$, an application of Lemma~\ref{local-optimality} yields that $\mathcal C^{**}_3$ is an optimal point for $\mathcal P_{\mathcal S_3}$.

  \item For the case $k=4$, let $\mathcal{S}^{\circ}_4(\mathcal C^{**}_4)=\mathcal S_4(2,2,2)\cup\mathcal S_4(3,3,0)\cup\mathcal S_4(3,3,3)$, which can be verified to be a subset of $\mathcal{S}_4^{\dagger}(\mathcal C^{**}_4)$. Then, as in previous case, the desired optimality of $\mathcal C^{**}_4$ then follows from Lemmas~\ref{sum-H_k(a,a,a)}, ~\ref{sum-H_k(a,a,0)} and~\ref{local-optimality} and the easily verifiable fact that $2\mathcal H_4(2,2,2)+3\mathcal H_4(3,3,0)+2\mathcal H_4(3,3,3)=0$, where $\mathcal H_4(2,2,2)=\begin{bmatrix} 3\\ -1\end{bmatrix}$, $\mathcal H_4(3,3,0)=\begin{bmatrix} -4\\ 0\end{bmatrix}$ and $\mathcal H_4(3,3,3)=\begin{bmatrix} 3\\ 1\end{bmatrix}$.

  \item For the case $k=5$, let $\mathcal{S}^{\circ}_5(\mathcal C^{**}_5)=\mathcal S_5(3,3,3)\cup\mathcal S_5(4,4,0)\cup\mathcal S_5(4,4,4)\subseteq\mathcal S_5^{\dagger}(\mathcal C^{**}_5)$. Then, the desired optimality of $\mathcal C^{**}_4$ then follows from Lemmas~\ref{sum-H_k(a,a,a)}, ~\ref{sum-H_k(a,a,0)} and~\ref{local-optimality} and the easily verifiable fact that $\mathcal H_5(3,3,3)+2\mathcal H_5(4,4,0)+\mathcal H_5(4,4,4)=0$.
    
  \item For the case $k=7$, let $\mathcal{S}^{\circ}_7(\mathcal C^{**}_7)=\mathcal S_7(4,4,4)\cup\mathcal S_7(5,5,0)\cup\mathcal S_7(5,5,5)\subseteq\mathcal S_7^{\dagger}(\mathcal C^{**}_7)$. Then, the desired optimality of $\mathcal C^{**}_4$ then follows from the fact that $3\mathcal H_7(4,4,4)+5\mathcal H_7(5,5,0)+2\mathcal H_7(5,5,5)=0$.
                 
  \item For the case $k=8$, let $\mathcal{S}^{\circ}_8(\mathcal C^{**}_8)=\mathcal S_8(5,5,5)\cup\mathcal S_8(6,6,0)\cup\mathcal S_8(6,6,6)\subseteq\mathcal S_8^{\dagger}(\mathcal C^{**}_8)$. Then, the desired optimality then follows from the fact that $12\mathcal H_8(5,5,5)+21\mathcal H_8(6,6,0)+5\mathcal H_8(6,6,6)=0$.
      
  \item For the case $k=9$, let $\mathcal{S}^{\circ}_9(\mathcal C^{**}_9)=\mathcal S_9(6,6,6)\cup\mathcal S_9(7,7,0)\cup\mathcal S_9(7,7,7)\subseteq\mathcal S_9^{\dagger}(\mathcal C^{**}_9)$. Then, the desired optimality then follows the fact that $15\mathcal H_9(6,6,6)+28\mathcal H_9(7,7,0)+3\mathcal H_9(7,7,7)=0$.
\end{itemize}
\end{proof}

\subsection{The Uniqueness of Optimal Solutions for $k=3, 4, 5, 7, 8, 9$}\label{section-unique}

We are concerned with the uniqueness of the optimal solutions to $\mathcal{P}_{\mathcal{S}_k}$. Note that the case of $k=1$ is trivial, and it is known from the proof of Theorem $3$ of \cite{CH17} that $\mathcal C^*_2$ is the unique optimal point for $\mathcal P_{\mathcal S_2}$. In this section, we will show that the optimal solutions to $\mathcal P_{\mathcal S_k}$ are unique for $k=3, 4, \dots, 9$, which however ceases to hold true for $k=10$.

We first need the following lemma, which strengthens Lemma~\ref{local-optimality}.
\begin{lem}\label{unique-local-optimal}
Let $\mathcal C\in \mathfrak S_k$. If there exists a subset $\mathcal{S}^{\star}_k(\mathcal C)\subseteq \mathcal{S}_k^{\dagger}(\mathcal C)$ and a neighborhood $N(\mathcal C, \varepsilon) \subset \mathfrak S_k$ of $\mathcal C$ and a set of positive reals $\{k_s\|s\in \mathcal{S}^{\star}_k(\mathcal C)\}$ such that (1) For all $\mathcal C'\in N(\mathcal C, \varepsilon)$, $\sum_{s\in \mathcal{S}^{\star}_k(\mathcal C)}k_s\cdot h_s(\mathbf{\Delta})=0$; (2) If for all $s\in \mathcal S^{\star}_k(\mathcal C)$, $h_s(\mathbf{\Delta})=0$ then $\mathbf{\Delta}=0$, where, as before, $\mathbf{\Delta}=\mathcal{C}'-\mathcal{C}$ and $h_s(\mathbf{\Delta})=g_s(\mathcal C')-g_s(\mathcal C)$. Then, $\mathcal C$ is the unique local optimal point for $\mathcal P_{\mathcal S_k}$.
\end{lem}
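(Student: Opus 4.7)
The plan is to bootstrap from the existence half (Lemma~\ref{local-optimality}) to uniqueness by combining convexity with the extra rigidity hypothesis~(2). First I would note that hypothesis~(1), with $\mathcal{S}^{\star}_k(\mathcal C)$ playing the role of $\mathcal{S}^{\circ}_k(\mathcal C)$, already puts us in the situation of Lemma~\ref{local-optimality}, so $\mathcal C$ is a local optimum; because $g_{\mathcal S_k}$ is convex on $\mathfrak S_k$ by Lemma~\ref{convex}, every local minimum of a convex function is global, and thus $\mathcal C$ is actually a global optimum of $\mathcal P_{\mathcal S_k}$.

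Next I would argue by contradiction: suppose another optimal point $\widetilde{\mathcal C} \neq \mathcal C$ exists. Convexity forces every point on the segment from $\mathcal C$ to $\widetilde{\mathcal C}$ to be optimal, so I may choose $t > 0$ small enough that $\mathcal C' := (1-t)\mathcal C + t\widetilde{\mathcal C}$ lies in $N(\mathcal C, \varepsilon)$; write $\mathbf{\Delta} := \mathcal C' - \mathcal C$, which is nonzero. For each $s \in \mathcal{S}^{\star}_k(\mathcal C)$, the sample $s$ is maximizing at $\mathcal C$ (the tacit prerequisite already used by the Lemma~\ref{local-optimality} argument), so $g_s(\mathcal C) = g_{\mathcal S_k}(\mathcal C) = g_{\mathcal S_k}(\mathcal C') \geq g_s(\mathcal C')$, giving $h_s(\mathbf{\Delta}) \leq 0$ for every $s \in \mathcal{S}^{\star}_k(\mathcal C)$. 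Paired with hypothesis~(1), namely $\sum_s k_s\, h_s(\mathbf{\Delta}) = 0$ with positive weights and non-positive summands, this forces $h_s(\mathbf{\Delta}) = 0$ for each such $s$. Hypothesis~(2) then yields $\mathbf{\Delta} = 0$, contradicting $\mathcal C' \neq \mathcal C$, and uniqueness follows.

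The argument is conceptually short; the main subtlety to flag is that, exactly as in Lemma~\ref{local-optimality}, the deduction silently relies on $\mathcal{S}^{\star}_k(\mathcal C) \subseteq \mathcal{S}_k^{max}(\mathcal C)$, since only maximizing samples license the inequality $g_s(\mathcal C') \leq g_s(\mathcal C)$ from the global optimality of $\mathcal C'$. The real work will therefore lie not in the lemma itself but in its applications: one must exhibit, for each $k \in \{3,4,5,7,8,9\}$, a set $\mathcal{S}^{\star}_k(\mathcal C^{**}_k)$ of maximizing samples for which the linear map $\mathbf{\Delta} \mapsto (h_s(\mathbf{\Delta}))_{s \in \mathcal{S}^{\star}_k(\mathcal C^{**}_k)}$ is injective on the tangent space of $\mathfrak S_k$, thereby verifying hypothesis~(2). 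That injectivity check is the main technical obstacle left by this lemma.
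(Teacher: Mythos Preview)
Your proposal is correct and follows essentially the same route as the paper: assume another optimal point exists, use the positive-weight identity~(1) together with the maximizing-sample inequality to force all $h_s(\mathbf{\Delta})=0$, then invoke~(2). Your write-up is in fact slightly more careful than the paper's, which simply posits ``another optimal point $\mathcal C'\in N(\mathcal C,\varepsilon)$'' without spelling out the convexity step that lets one pull a distant optimum onto the segment and into the neighborhood; you also make explicit the tacit hypothesis $\mathcal S^{\star}_k(\mathcal C)\subseteq \mathcal S_k^{max}(\mathcal C)$ that both proofs rely on.
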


\begin{proof}
Suppose, by way of contradiction, that there exists another optimal point $\mathcal C'\in N(\mathcal C, \varepsilon)$ such that $\mathcal C'-\mathcal C=\mathbf{\Delta}\neq 0$. By Condition $(1)$, we know that for all $s\in \mathcal{S}^{\star}_k(\mathcal C)$, $g_s(\mathcal C')=g_s(\mathcal C)$, i.e., $h_s(\mathbf{\Delta})=0$ (Since otherwise there exist $s_0, s_1\in \mathcal S^{\star}_k(\mathcal C)$ such that $g_{s_0}(\mathcal C')-g_{s_0}(\mathcal C)>0$ and $g_{s_1}(\mathcal C')-g_{s_1}(\mathcal C)<0$, which contradicts the optimality of $\mathcal C$). Hence, by Condition $(2)$, we have $\mathbf{\Delta}=0$, which contradicts the assumption that $\mathbf{\Delta}\neq0$ and thereby the result follows.
\end{proof}

In the following, we set
$$
\mathcal{S}_3^{\star}(\mathcal C^{**}_3):=\mathcal{S}_3(2,2)\cup \mathcal{S}_3(3,3,0).
$$
And it can be easily verified that $\mathcal{S}_3^{\star}(\mathcal C^{**}_3) \subseteq \mathcal S^{\dagger}_3(\mathcal C^{**}_3)$. The following lemma can be used to establish the uniqueness of $\mathcal C^{**}_3$ for $\mathcal P_{\mathcal S_3}$. 
\begin{lem}\label{balance-k-3}
There exist a neighborhood $N(\mathcal C^{**}_3, \varepsilon) \subset \mathfrak{S}_3$ of $\mathcal C^{**}_3$ and a set of positive reals $\{k_s\|s\in \mathcal{S}_3^{\star}(\mathcal C^{**}_3)\}$ such that for all $\mathcal{C}' \in N(\mathcal C^{**}_3, \varepsilon)$
\begin{equation}\label{pos-lin-com-3}
\sum_{s\in \mathcal{S}_3^{\star}(\mathcal C^{**}_3)}k_s\cdot (g_s(\mathcal{C}')-g_s(\mathcal C^{**}_3))=0.
\end{equation}
\end{lem}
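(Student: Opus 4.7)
The plan is to observe that Lemma \ref{balance-k-3} is essentially a repackaging of the combinatorial identity already exploited in the proof of Theorem \ref{optimal-theorem} for $k=3$. Writing $\mathbf{\Delta} = \mathcal{C}' - \mathcal{C}^{**}_3$ and $h_s(\mathbf{\Delta}) = g_s(\mathcal{C}') - g_s(\mathcal{C}^{**}_3)$, I want to exhibit positive weights $k_s$ on the three orbits $\mathcal S_3(2,2,0)$, $\mathcal S_3(2,2,2)$, $\mathcal S_3(3,3,0)$ making the weighted total identically zero on a small neighborhood of $\mathcal{C}^{**}_3$.

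First I would verify the inclusion $\mathcal{S}_3^{\star}(\mathcal C^{**}_3)\subseteq \mathcal{S}_3^{\dagger}(\mathcal C^{**}_3)$ by a direct sign check, which is needed to invoke the orbit-sum lemmas. Using the fixed-point entries of $\mathcal{C}^{**}_3$, namely $x=12/22,\; a=5/22,\; b=-2/22,\; y=-3/22$, a short case analysis shows that for any $s\in\mathcal S_3(2,2,0)\cup\mathcal S_3(2,2,2)$, $g^{(\ell)}_s(\mathcal C^{**}_3)$ equals $2a=10/22$ or $x+y=9/22$ whenever $\ell\in Ind_s$ and equals $2b=-4/22$ or $2y=-6/22$ whenever $\ell\notin Ind_s$; and for either of the two derangement samples in $\mathcal S_3(3,3,0)$ the same check yields $g^{(\ell)}_s(\mathcal C^{**}_3)=2a+b=8/22>0$ on $Ind_s$ and $3b=-6/22<0$ off $Ind_s$. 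By continuity of the finitely many functions $g^{(\ell)}_s$, $s\in\mathcal S_3^{\star}(\mathcal C^{**}_3)$, this sign pattern persists on some $N(\mathcal C^{**}_3,\varepsilon)\subset\mathfrak S_3$.

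Next I would apply Lemma \ref{sum-H_3(2,2,0)} to $\mathcal S_3(2,2,0)$, Lemma \ref{sum-H_k(a,a,a)} with $k=3,d=2$ to $\mathcal S_3(2,2,2)$, and Lemma \ref{sum-H_k(a,a,0)} with $k=d=3$ to $\mathcal S_3(3,3,0)$. Each lemma expresses the corresponding orbit sum as a linear functional of $\mathbf{\Delta}$ depending only on the two scalars $\sum_i\delta^{(i)}_{i,i}$ and $\sum_{i\neq j}\delta^{(i)}_{j,j}$, with coefficient vectors $\mathcal H_3(2,2,0)$, $\mathcal H_3(2,2,2)$, $\mathcal H_3(3,3,0)$ respectively.

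Finally, I would choose the positive weights $k_s=1$ uniformly on $\mathcal S_3(2,2,0)\cup\mathcal S_3(3,3,0)$ and $k_s=2$ uniformly on $\mathcal S_3(2,2,2)$, and reduce (\ref{pos-lin-com-3}) to the vector identity
$$
\mathcal H_3(2,2,0)+2\,\mathcal H_3(2,2,2)+\mathcal H_3(3,3,0)=\begin{bmatrix}-3\\1\end{bmatrix}+2\begin{bmatrix}2\\0\end{bmatrix}+\begin{bmatrix}-1\\-1\end{bmatrix}=\begin{bmatrix}0\\0\end{bmatrix},
$$
which is exactly the identity underpinning the optimality argument for $k=3$ in the proof of Theorem \ref{optimal-theorem}. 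No step is computationally heavy; the only mildly delicate point is the preliminary sign verification for $\mathcal S_3(3,3,0)$, which must be done for both derangements to confirm that the fixed-point symmetry of $\mathcal C^{**}_3$ forces consistent signs across the orbit.
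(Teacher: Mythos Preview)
Your proposal has a genuine gap: you have misidentified the set $\mathcal{S}_3^{\star}(\mathcal C^{**}_3)$. By definition it is $\mathcal{S}_3(2,2)\cup \mathcal{S}_3(3,3,0)$, and $\mathcal{S}_3(2,2)=\mathcal{S}_3(2,2,0)\cup\mathcal{S}_3(2,2,1)\cup\mathcal{S}_3(2,2,2)$, so the orbit $\mathcal{S}_3(2,2,1)$ is part of the set on which you must produce \emph{strictly positive} weights. Your scheme assigns weight only to $\mathcal S_3(2,2,0)$, $\mathcal S_3(2,2,2)$, $\mathcal S_3(3,3,0)$, i.e., you are reproducing the identity for the smaller set $\mathcal{S}^{\circ}_3(\mathcal C^{**}_3)$ used in Theorem~\ref{optimal-theorem}, not the stronger statement required here. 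This matters downstream: the uniqueness argument that follows explicitly uses $h_s(\mathbf{\Delta})=0$ for samples such as $\{(1,1),(1,2)\}\in\mathcal S_3(2,2,1)$, which your weighting does not control.

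The paper closes this gap with an extra combinatorial identity on the neighborhood: pairing the six samples in $\mathcal S_3(2,2,1)$ two by two shows
\[
\sum_{s\in\mathcal S_3(2,2,1)}h_s(\mathbf{\Delta})=\sum_{s\in\mathcal S_3(2,2,0)}h_s(\mathbf{\Delta})+\sum_{s\in\mathcal S_3(2,2,2)}h_s(\mathbf{\Delta}),
\]
which lets one rewrite $\sum_{(2,2,0)}+2\sum_{(2,2,2)}+\sum_{(3,3,0)}=0$ as
\[
\tfrac12\sum_{(2,2,0)}+\tfrac12\sum_{(2,2,1)}+\tfrac32\sum_{(2,2,2)}+\sum_{(3,3,0)}=0,
\]
now with positive weight on every $s\in\mathcal S_3^{\star}(\mathcal C^{**}_3)$. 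You also need to extend your sign check to $\mathcal S_3(2,2,1)$ (e.g., for $s=\{(1,1),(1,2)\}$ one gets $x+a=17/22$, $y+a=2/22$, $y+b=-5/22$, matching the required pattern); and note that for $s\in\mathcal S_3(3,3,0)$ every $\ell$ lies in $Ind_s$, so your ``$3b<0$ off $Ind_s$'' remark is vacuous rather than a check.
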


\begin{proof}
From the proof of Theorem \ref{optimal-theorem}, we have that for all $\mathcal{C}' \in N(\mathcal C^{**}_3, \varepsilon)$,
$$
\sum_{s\in \mathcal{S}_3(2,2,0)} h_s(\mathbf{\Delta})+2\sum_{s\in \mathcal{S}_3(2,2,2)}h_s(\mathbf{\Delta})+\sum_{s\in \mathcal{S}_3(3,3,0)}h_s(\mathbf{\Delta})=0,
$$
where, as before, $\mathbf{\Delta}=\mathcal{C}'-\mathcal{C}_3^{**}$. Since
\begin{equation*}
\begin{split}
h_{\{(1,1),(1,2)\}}(\mathbf{\Delta})+h_{\{(2,1),(2,2)\}}(\mathbf{\Delta})=&h_{\{(1,1),(2,2)\}}(\mathbf{\Delta})+h_{\{(2,1),(1,2)\}}(\mathbf{\Delta}),\\
h_{\{(1,1),(1,3)\}}(\mathbf{\Delta})+h_{\{(3,1),(3,3)\}}(\mathbf{\Delta})=&h_{\{(1,1),(3,3)\}}(\mathbf{\Delta})+h_{\{(3,1),(1,3)\}}(\mathbf{\Delta}),\\
h_{\{(2,2),(2,3)\}}(\mathbf{\Delta})+h_{\{(3,2),(3,3)\}}(\mathbf{\Delta})=&h_{\{(2,2),(3,3)\}}(\mathbf{\Delta})+h_{\{(3,2),(2,3)\}}(\mathbf{\Delta}),
\end{split}
\end{equation*}
we have
\begin{equation} \label{sum-S_3(2,2,1)}
\sum_{s\in \mathcal{S}_3(2,2,1)} h_s(\mathbf{\Delta})=\sum_{s \in \mathcal{S}_3(2,2,0)} h_s(\mathbf{\Delta})+\sum_{s\in \mathcal{S}_3(2,2,2)} h_s(\mathbf{\Delta}).
\end{equation}
Hence, we have
$$
\frac{1}{2}\sum_{s\in \mathcal{S}_3(2,2,0)} h_s(\mathbf{\Delta})+\frac{1}{2}\sum_{s\in \mathcal{S}_3(2,2,1)}h_s(\mathbf{\Delta})+\frac{3}{2}\sum_{s\in \mathcal{S}_3(2,2,2)}h_s(\mathbf{\Delta})+\sum_{s\in \mathcal{S}_3(3,3,0)}h_s(\mathbf{\Delta})=0.
$$
The proof is then complete.
\end{proof}

In the following, we set
$$
\mathcal{S}^{\star}_4(\mathcal C^{**}_4):=\mathcal{S}_4(3,3,0)\cup \mathcal{S}_4(3,3,1)\cup \mathcal{S}_4(4,4,0).
$$
It can be easily verified that $\mathcal{S}^{\star}_4(\mathcal C^{**}_4)\subseteq \mathcal{S}^{\dagger}_4(\mathcal C^{**}_4)$. The following lemma, whose proof has been postponed to Appendix~\ref{proof-balance-k-4}, can be used to establish the uniqueness of $\mathcal C^{**}_4$ for $\mathcal P_{\mathcal S_4}$.
\begin{lem}\label{balance-k-4}
There exists a neighborhood $N(\mathcal C^{**}_4, \varepsilon) \subset \mathfrak S_4$ of $\mathcal C^{**}_4$ and a set of positive reals $\{k_s\|s\in \mathcal{S}^{\star}_4(\mathcal C^{**}_4)\}$ such that for all $\mathcal{C}' \in N(\mathcal C^{**}_4, \varepsilon)$,
\begin{equation}\label{pos-lin-com-4}
\sum_{s\in \mathcal{S}^{\star}_4(\mathcal C^{**}_4)}k_s\cdot (g_s(\mathcal{C}')-g_s(\mathcal C^{**}_4))=0.
\end{equation}
\end{lem}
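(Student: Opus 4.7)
I would follow the structure of Lemma~\ref{balance-k-3}. The proof combines three ingredients: (i) the optimality identity from the proof of Theorem~\ref{optimal-theorem}, namely
$$2\sum_{s\in\mathcal{S}_4(2,2,2)}h_s(\mathbf{\Delta}) + 3\sum_{s\in\mathcal{S}_4(3,3,0)}h_s(\mathbf{\Delta}) + 2\sum_{s\in\mathcal{S}_4(3,3,3)}h_s(\mathbf{\Delta}) = 0,$$
valid for all $\mathcal{C}' = \mathcal{C}_4^{**} + \mathbf{\Delta}$ in some small neighborhood $N(\mathcal{C}_4^{**}, \varepsilon)$; (ii) parallelogram-type swap identities $h_{s_1}(\mathbf{\Delta}) + h_{s_2}(\mathbf{\Delta}) = h_{s_3}(\mathbf{\Delta}) + h_{s_4}(\mathbf{\Delta})$, valid whenever the four samples lie in $\mathcal{S}_4^{\dagger}(\mathcal{C}_4^{**})$, share the same index set $\mathrm{Ind}_{s_i} \subseteq [4]$ (so that all four carry an identical sign pattern), and satisfy $s_1 \cup s_2 = s_3 \cup s_4$ as multisets of entries (cf.~(\ref{sum-S_3(2,2,1)})); and (iii) the row/column-sum constraints (\ref{sum-column-row-k}) on $\mathbf{\Delta}$.

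Using these ingredients, I would express $\sum_{\mathcal{S}_4(2,2,2)}h_s(\mathbf{\Delta})$ and $\sum_{\mathcal{S}_4(3,3,3)}h_s(\mathbf{\Delta})$ as linear combinations of sums over the three classes $\mathcal{S}_4(3,3,0)$, $\mathcal{S}_4(3,3,1)$, $\mathcal{S}_4(4,4,0)$ that constitute $\mathcal{S}_4^{\star}(\mathcal{C}_4^{**})$. Concretely, summing the elementary swap identities over orbits of $Sym(4)$ acting on $\mathcal{S}_4$ and invoking (\ref{sum-column-row-k}) to absorb diagonal contributions produces such set-level relations with explicit rational coefficients. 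Substituting them into the starting identity yields an equation $\sum_{s \in \mathcal{S}_4^{\star}(\mathcal{C}_4^{**})} k_s \cdot h_s(\mathbf{\Delta}) = 0$ with rational $k_s$ constant on each $Sym(4)$-orbit in $\mathcal{S}_4^{\star}(\mathcal{C}_4^{**})$.

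The main obstacle is guaranteeing that every $k_s$ is \emph{strictly positive}. If a straightforward substitution produces a non-positive weight on some orbit, I would mirror the trick used in Lemma~\ref{balance-k-3}, where the coefficient of $\sum_{\mathcal{S}_3(2,2,0)}h_s$ in $\sum_{\mathcal{S}_3(2,2,0)} h_s + 2 \sum_{\mathcal{S}_3(2,2,2)} h_s + \sum_{\mathcal{S}_3(3,3,0)} h_s = 0$ is split as $\tfrac12 + \tfrac12$ and one half is absorbed into $\sum_{\mathcal{S}_3(2,2,1)}h_s$ through the swap identity (\ref{sum-S_3(2,2,1)}). Because $\mathcal{S}_4^{\star}(\mathcal{C}_4^{**})$ already includes both the derangement class $\mathcal{S}_4(4,4,0)$ and the mixed class $\mathcal{S}_4(3,3,1)$, there is enough flexibility among the available swap identities to redistribute any negative contribution among the positive ones. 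A finite arithmetic check that each resulting $k_s$ is strictly positive then completes the proof.
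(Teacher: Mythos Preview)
Your plan differs substantially from the paper's argument and has a real obstruction at its core step. You propose to start from the optimality identity $2\sum_{\mathcal{S}_4(2,2,2)}h_s + 3\sum_{\mathcal{S}_4(3,3,0)}h_s + 2\sum_{\mathcal{S}_4(3,3,3)}h_s = 0$ and then use ``swap'' identities $h_{s_1}+h_{s_2}=h_{s_3}+h_{s_4}$ to replace $\sum_{\mathcal{S}_4(2,2,2)}h_s$ and $\sum_{\mathcal{S}_4(3,3,3)}h_s$ by sums over $\mathcal{S}_4(3,3,0)\cup\mathcal{S}_4(3,3,1)\cup\mathcal{S}_4(4,4,0)$. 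But every swap identity you describe relates four samples with the \emph{same} cardinality $\alpha$ (they share the same multiset of entries), so no combination of swaps can convert the size-$2$ sum $\sum_{\mathcal{S}_4(2,2,2)}h_s$ into sums over size-$3$ or size-$4$ classes. Your appeal to (\ref{sum-column-row-k}) is too vague to close this gap: the row/column constraints act on the entries $\delta^{(\ell)}_{i,j}$, not on the functions $h_s$, and they do not by themselves produce identities among $h_s$'s of different sizes.

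The paper bypasses this entirely by abandoning the Theorem~\ref{optimal-theorem} identity as a starting point. Instead it computes, for each needed class, the sum $\sum_{s\in\text{class}}h_s(\mathbf{\Delta})$ directly by counting, for every entry position $(i,j)$ and every $\ell$, how many samples in the class contain $(i,j)$ and with which sign. Using (\ref{sum-column-row-k}) to eliminate off-diagonal contributions, each such class sum reduces to a linear form in the two scalars $X=\sum_i\delta^{(i)}_{i,i}$ and $Y=\sum_{i\neq j}\delta^{(j)}_{i,i}$ (Lemmas~\ref{sum-H_k(a,a,a)} and~\ref{sum-H_k(a,a,0)} handle the $\gamma=d$ and $\gamma=0$ cases; the paper does $\mathcal{S}_4(3,3,1)$ and $\mathcal{S}_4(4,4,0)$ by hand). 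With four class sums living in a two-dimensional space, finding a positive combination that vanishes is then elementary linear algebra, and the paper arrives at
\[
\tfrac{1}{8}\!\sum_{\mathcal{S}_4(3,3,0)}\!h_s+\tfrac{1}{4}\!\sum_{\mathcal{S}_4(3,3,1)}\!h_s+2\!\sum_{\mathcal{S}_4(3,3,3)}\!h_s+\tfrac{1}{9}\!\sum_{\mathcal{S}_4(4,4,0)}\!h_s=0.
\]
(Note that the paper's displayed combination in fact retains $\mathcal{S}_4(3,3,3)$, so it does not literally match the stated $\mathcal{S}_4^{\star}(\mathcal{C}_4^{**})$; but the method is what you should take away.) If you adopt this direct $(X,Y)$-reduction instead of trying to transform the old identity via swaps, the positivity check becomes a two-line computation rather than a search among swap relations.
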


We are now ready to give the main result of this section.
\begin{thm}
$\mathcal C^{**}_k$ is the unique optimal point for $\mathcal P_{\mathcal{S}_k}$ for $k=3, 4, \dots, 9$.
\end{thm}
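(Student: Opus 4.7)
The plan is to invoke Lemma~\ref{unique-local-optimal} for each $k\in\{3,4,5,7,8,9\}$; since $g_{\mathcal{S}_k}$ is convex (Lemma~\ref{convex}), every local optimum is global, so this will yield uniqueness globally. For each such $k$, the work reduces to exhibiting a subset $\mathcal{S}^{\star}_k(\mathcal{C}^{**}_k)\subseteq\mathcal{S}^{\dagger}_k(\mathcal{C}^{**}_k)$ and positive weights $\{k_s\}$ such that (i) the positive combination $\sum_{s\in\mathcal{S}^{\star}_k}k_s\,h_s(\mathbf{\Delta})$ vanishes on a neighborhood of $\mathcal{C}^{**}_k$, and (ii) the linear system $\{h_s(\mathbf{\Delta})=0:s\in\mathcal{S}^{\star}_k\}$ admits only $\mathbf{\Delta}=0$. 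A crucial simplification is that, since $s\in\mathcal{S}^{\dagger}_k(\mathcal{C}^{**}_k)$ pins down the sign of $g^{(\ell)}_s(\mathcal{C}^{**}_k)$ for every $\ell$, in a sufficiently small neighborhood each $h_s$ is just the linear functional $\mathbf{\Delta}\mapsto\sum_{\ell}\mathrm{sgn}(g^{(\ell)}_s(\mathcal{C}^{**}_k))\,g^{(\ell)}_s(\mathbf{\Delta})$, turning both conditions into purely linear-algebraic statements.

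For $k=3$ and $k=4$, Condition (i) is already provided by Lemmas~\ref{balance-k-3} and~\ref{balance-k-4} with the choices $\mathcal{S}^{\star}_3=\mathcal{S}_3(2,2)\cup\mathcal{S}_3(3,3,0)$ and $\mathcal{S}^{\star}_4=\mathcal{S}_4(3,3,0)\cup\mathcal{S}_4(3,3,1)\cup\mathcal{S}_4(4,4,0)$. To establish Condition (ii), I would take pairwise differences of relations $h_{s_1}(\mathbf{\Delta})=h_{s_2}(\mathbf{\Delta})=0$ for samples $s_1,s_2$ in $\mathcal{S}^{\star}_k$ differing in a single coordinate, combined with the $2k-1$ defining constraints of $\mathfrak S_k$, to pin down each entry $\delta^{(\ell)}_{i,j}$ of $\mathbf{\Delta}$; this is a finite computation in tangent dimension $12$ and $36$, respectively. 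For $k=5,7,8,9$, I would first establish balance lemmas analogous to Lemmas~\ref{balance-k-3} and~\ref{balance-k-4}, starting from the triples $\mathcal{S}_k(d-1,d-1,d-1)\cup\mathcal{S}_k(d,d,0)\cup\mathcal{S}_k(d,d,d)$ used in the proof of Theorem~\ref{optimal-theorem} (with $d$ as determined by Lemma~\ref{sub-opt-value}) and adjoining whichever additional strata appear in $\mathcal{S}_k^{max}(\mathcal{C}^{**}_k)$ as computed in Step~3 of the framework of Section~\ref{framework}; identities in the spirit of (\ref{sum-S_3(2,2,1)}) then rewrite Condition (i) as a positive combination over the enlarged family, and Condition (ii) is verified by the same differencing strategy.

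The main obstacle is Condition (ii), since the tangent space of $\mathfrak S_k$ has dimension $k(k-1)^2$, reaching $576$ for $k=9$, so a direct rank check is unwieldy. I would exploit the $Sym(k)$-invariance of $\mathcal{S}^{\star}_k$ to split $\mathbf{\Delta}$ via the averaging operator used in the proof of Theorem~\ref{2-dim} into a fixed component $\mathbf{\Delta}^{fix}\in L(\mathfrak S^{fix}_k)$ and an orbital-mean-zero complement $\mathbf{\Delta}^{\perp}$. Since $\mathcal{C}^{**}_k\in\mathfrak S^{fix}_k$, any aggregation $\sum_{s\in\mathcal{S}_k(d,d,c)}h_s(\mathbf{\Delta})$ depends only on $\mathbf{\Delta}^{fix}$ (compare Lemmas~\ref{sum-H_k(a,a,a)},~\ref{sum-H_k(a,a,0)},~\ref{sum-H_3(2,2,0)}), reducing the vanishing of $\mathbf{\Delta}^{fix}$ to a two-dimensional check inside $L(\mathfrak S^{fix}_k)$, which is immediate once Condition (i) is in hand. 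The orbital-mean-zero piece $\mathbf{\Delta}^{\perp}$ is then killed by the individual relations $h_s(\mathbf{\Delta})=0$ within each orbit, provided $\mathcal{S}^{\star}_k$ is rich enough to separate all non-symmetric coordinates; this richness has to be verified by inspection for each of $k=5,7,8,9$ and is the source of most of the casework. A secondary check is that $\mathcal{S}^{\star}_k\subseteq\mathcal{S}^{\dagger}_k(\mathcal{C}^{**}_k)$, i.e., every sample in $\mathcal{S}^{\star}_k$ satisfies the sign conditions (\ref{canonical-signal-property}) at $\mathcal{C}^{**}_k$; this follows by direct inspection using the explicit entries $x_k^{**},y_k^{**},a_k^{**},b_k^{**}$ recorded in Section~\ref{section-perturb}.
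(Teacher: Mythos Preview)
Your proposal is correct and follows the paper's approach closely: both reduce to Lemma~\ref{unique-local-optimal}, invoke the balance lemmas (Lemmas~\ref{balance-k-3} and~\ref{balance-k-4}) for Condition~(i), and verify Condition~(ii) by taking pairwise differences of the relations $h_{s_1}(\mathbf{\Delta})=h_{s_2}(\mathbf{\Delta})=0$ for samples differing in a single coordinate together with the row/column-sum constraints~(\ref{sum-column-row-k}), exactly as the paper carries out explicitly for $k=3,4$ and asserts in parallel for $k=5,\dots,9$. Your $Sym(k)$-decomposition of $\mathbf{\Delta}$ into a fixed part and an orbital-mean-zero complement is a reasonable way to organize that casework, but it does not change the underlying argument, and indeed the paper's repeated ``similarly, we have\dots'' steps are an implicit use of the same symmetry.
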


\begin{proof}

$\blacksquare$ We first deal with the case $k=3$. By Lemma~\ref{unique-local-optimal} and then Lemma~\ref{balance-k-3}, we only need to prove that the equation
\begin{equation}\label{Eq-balance-k-3}
h_s(\mathbf{\Delta})=0 \quad \mbox{ for all } s \in \mathcal{S}_3(\mathcal C^{**}_3)
\end{equation}
has the unique solution $\mathbf{\Delta}=0$.

Suppose $\mathbf{\Delta}=\left((\delta_{i, j}^{(1)}), (\delta_{i, j}^{(2)}), (\delta_{i, j}^{(3)}) \right)$ is a solution of (\ref{Eq-balance-k-3}). We first prove that for all $1\leq i, j\leq 3$,
\begin{equation}\label{all-3}
\pi_{i,j}:=\sum_{\ell=1}^3\delta^{(\ell)}_{i,j}=0.
\end{equation}

By (\ref{sum-column-row-k}), we have $\pi_{1,1}+\pi_{1,2}+\pi_{1,3}=0$ and then by $h_{\{(2,1),(1,2),(1,3)\}}(\mathbf{\Delta})=\pi_{2,1}+\pi_{1,2}+\pi_{1,3}=0$, we have $\pi_{1,1}=\pi_{2,1}$. Similarly, we have $\pi_{1,1}=\pi_{2,1}=\pi_{3,1}$. By (\ref{sum-column-row-k}), we have $\pi_{1,1}+\pi_{2,1}+\pi_{3,1}=0$, and hence, $\pi_{1,1}=\pi_{2,1}=\pi_{3,1}=0$. Further, in the same way, we have $\pi_{1,2}=\pi_{2,2}=\pi_{3,2}=0$ and finally we obtain (\ref{all-3}).

By $h_{\{(1,1),(1,2)\}}(\mathbf{\Delta})=0$ and (\ref{sum-column-row-k}), we have $\delta^{(3)}_{1,3}=\delta^{(1)}_{1,3}+\delta^{(2)}_{1,3}$. Hence, $0=\pi_{1,3}=2\delta^{(3)}_{1,3}$. Similarly, we can have $\delta^{(3)}_{1,3}=\delta^{(3)}_{2,3}=0$. Hence, by (\ref{sum-column-row-k}), we further have $\delta^{(3)}_{3,3}=0$. In a similar fashion, we finally have
\begin{equation}\label{i-j-i-3}
\delta^{(j)}_{i,j}=0, \quad 1\leq i,j\leq 3.
\end{equation}

Letting $\delta^{(1)}_{1,2}=a$, $\delta^{(1)}_{2,2}=b$, $\delta^{(1)}_{3,2}=c$ and using Equations (\ref{sum-column-row-k}), (\ref{all-3}) and (\ref{i-j-i-3}), we have
$$\mathbf{\Delta}=\left( \left(
    \begin{array}{ccc}
      0 & a & -a \\
      0 & b & -b \\
      0 & c & -c \\
    \end{array}
  \right),\left(
    \begin{array}{ccc}
      -a & 0 & a \\
      -b & 0 & b \\
      -c & 0 & c \\
    \end{array}
  \right),\left(
    \begin{array}{ccc}
      a & -a & 0 \\
      b & -b & 0 \\
      c & -c & 0 \\
    \end{array}
  \right)\right).$$
By $h_{\{(1,1),(2,2)\}}(\mathbf{\Delta})=0$, we have $b-a=a-b$, i.e., $a=b$. By $h_{\{(1,1),(3,3)\}}(\mathbf{\Delta})=0$, we have $c-a=a-c$, i.e., $a=c$. Hence $a=b=c$ and then by Equation (\ref{sum-column-row-k}), we have $a=b=c=0$, which means $\mathbf{\Delta}=0$. The proof is then complete.

$\blacksquare$ We now deal with the case $k=4$. By Lemma~\ref{unique-local-optimal} and then Lemma~\ref{balance-k-4}, we only need to prove that the equation
\begin{equation}\label{Eq-balance-k-4}
h_s(\mathbf{\Delta})=0 \quad \mbox{ for all } s \in \mathcal S^{\star}_4(\mathcal C^{**}_4)
\end{equation}
has the unique solution $\mathbf{\Delta}=0$.

Suppose
$\mathbf{\Delta}=\left((\delta^{(1)}_{i,j}), (\delta^{(2)}_{i,j}), (\delta^{(3)}_{i,j}), (\delta^{(4)}_{i,j})\right)$ is a solution of (\ref{Eq-balance-k-4}). We first prove that for all $i, j=1, 2, 3, 4$,
\begin{equation}\label{all-4}
\pi_{i,j}:=\sum_{\ell=1}^4\delta^{(\ell)}_{i,j}=0.
\end{equation}
By (\ref{sum-column-row-k}), we have $\pi_{1,1}+\pi_{1,2}+\pi_{1,3}+\pi_{1,4}=0$ and then by $h_{\{(2,1),(1,2),(1,3),(1,4)\}}(\mathbf{\Delta})=\pi_{2,1}+\pi_{1,2}+\pi_{1,3}+\pi_{1,4}=0$, we have $\pi_{1,1}=\pi_{2,1}$. Similarly, we have $\pi_{1,1}=\pi_{2,1}=\pi_{3,1}=\pi_{4,1}$. By (\ref{sum-column-row-k}), we have $\pi_{1,1}+\pi_{2,1}+\pi_{3,1}+\pi_{4,1}=0$, and hence, $\pi_{1,1}=\pi_{2,1}=\pi_{3,1}=\pi_{4,1}=0$. Further, in the same way, we have $\pi_{1,2}=\pi_{2,2}=\pi_{3,2}=\pi_{4,2}=0$ and finally, we can obtain (\ref{all-4}).

By $h_{\{(1,1),(1,2),(1,3)\}}(\mathbf{\Delta})=0$ and (\ref{sum-column-row-k}), we have $\delta^{(4)}_{1,4}=\delta^{(1)}_{1,4}+\delta^{(2)}_{1,4}+\delta^{(3)}_{1,4}$. Hence, $0=\pi_{1,4}=2\delta^{(4)}_{1,4}$. Similarly, we can have $\delta^{(4)}_{2,4}=\delta^{(4)}_{3,4}=0$. Hence, by (\ref{sum-column-row-k}), we further have $\delta^{(4)}_{4,4}=0$. Similarly, we can have
\begin{equation}\label{i-j-i-4}
\delta^{(j)}_{i,j}=0, \quad 1\leq i,j\leq 4.
\end{equation}

Since $h_{\{(1,1),(1,2),(1,3)\}}(\mathbf{\Delta})=h_{\{(2,1),(1,2),(1,3)\}}(\mathbf{\Delta})=0$, we have $\delta^{(1)}_{1,1}+\delta^{(2)}_{1,1}+\delta^{(3)}_{1,1}-\delta^{(4)}_{1,1}
=\delta^{(1)}_{2,1}+\delta^{(2)}_{2,1}+\delta^{(3)}_{2,1}-\delta^{(4)}_{2,1}$, and furthermore, by (\ref{all-4}), $-2\delta^{(4)}_{1,1}=-2\delta^{(4)}_{2,1}$. Similarly, by $h_{\{(1,1),(1,2),(1,3)\}}(\mathbf{\Delta})=h_{\{(3,1),(1,2),(1,3)\}}(\mathbf{\Delta})=0$,
we have
$$
\delta^{(4)}_{1,1}=\delta^{(4)}_{2,1}=\delta^{(4)}_{3,1}.
$$
Since $h_{\{(1,1),(1,2),(1,4)\}}(\mathbf{\Delta})=h_{\{(2,1),(1,2),(1,4)\}}(\mathbf{\Delta})=0$, we have $\delta^{(1)}_{1,1}+\delta^{(2)}_{1,1}+\delta^{(4)}_{1,1}-\delta^{(3)}_{1,1}
=\delta^{(1)}_{2,1}+\delta^{(2)}_{2,1}+\delta^{(4)}_{2,1}-\delta^{(3)}_{2,1}$, and furthermore, by (\ref{all-4}), $-2\delta^{(3)}_{1,1}=-2\delta^{(3)}_{2,1}$. Similarly, by $h_{\{(1,1),(1,2),(1,4)\}}(\mathbf{\Delta})=h_{\{(4,1),(1,2),(1,4)\}}(\mathbf{\Delta})=0$,
we can have
$$
\delta^{(3)}_{1,1}=\delta^{(3)}_{2,1}=\delta^{(3)}_{4,1}.
$$
Since $h_{\{(1,1),(1,3),(1,4)\}}(\mathbf{\Delta})=h_{\{(3,1),(1,3),(1,4)\}}(\mathbf{\Delta})=0$, we have $\delta^{(1)}_{1,1}+\delta^{(3)}_{1,1}+\delta^{(4)}_{1,1}-\delta^{(2)}_{1,1}
=\delta^{(1)}_{3,1}+\delta^{(3)}_{3,1}+\delta^{(4)}_{3,1}-\delta^{(2)}_{3,1}$, i.e., $-2\delta^{(2)}_{1,1}=-2\delta^{(2)}_{3,1}$ by (\ref{all-4}). Similarly, by $h_{\{(1,1),(1,3),(1,4)\}}(\mathbf{\Delta})=h_{\{(4,1),(1,3),(1,4)\}}(\mathbf{\Delta})=0$,
we can have
$$
\delta^{(2)}_{1,1}=\delta^{(2)}_{3,1}=\delta^{(2)}_{4,1}.
$$

Now, note that $\pi_{1,1}=\delta^{(1)}_{1,1}+\delta^{(2)}_{1,1}+\delta^{(3)}_{1,1}+\delta^{(4)}_{1,1}
=\delta^{(1)}_{2,1}+\delta^{(2)}_{2,1}+\delta^{(3)}_{2,1}+\delta^{(4)}_{2,1}=\pi_{2,1}$, by (\ref{i-j-i-4}) and the above discussions, we have $\delta^{(2)}_{2,1}=\delta^{(2)}_{1,1}$. Similarly, by $\pi_{1,1}=\pi_{3,1},$  we have $\delta^{(3)}_{3,1}=\delta^{(3)}_{1,1}$ and by $\pi_{1,1}=\pi_{4,1},$  we have $\delta^{(4)}_{4,1}=\delta^{(4)}_{1,1}$.  Hence, by (\ref{sum-column-row-k}), we deduce that for $i=1, 2, 3, 4$,
$$
\delta^{(i)}_{1,1}=\delta^{(i)}_{2,1}=\delta^{(i)}_{3,1}=\delta^{(i)}_{4,1}=0.
$$
Similarly, one can have that for $i=1, 2, 3, 4$, $j=2,3,4$,
$$
\delta^{(i)}_{1,j}=\delta^{(i)}_{2,j}=\delta^{(i)}_{3,j}=\delta^{(i)}_{4,j}=0.
$$
Collecting all the results above, we conclude that $\mathbf{\Delta}=0$, as desired.

$\blacksquare$ The uniqueness of the optimal solutions for $k = 5, 6, \dots, 9$ follows from a more complex yet completely parallel argument as for $k=3, 4$, and therefore we omit the details.
\end{proof}

\begin{thm}
There are at least two optimal points for $\mathcal P_{\mathcal S_{10}}$.
\end{thm}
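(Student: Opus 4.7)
The plan is to exhibit a second optimal point by perturbing $\mathcal{C}^*_{10}$ along a direction that lies on the boundary of the incompatibility identified in Case $4$ of the proof of Theorem~\ref{optimal-condition-theorem}. There, optimality of $\mathcal{C}^*_{10}$ followed from the strict contradiction between $\overline{a}+2\overline{b}<0$ (forced by $s\in\mathcal{S}_{10}(8,8)$ with $\gamma(s)=0$) and $\overline{a}+2\overline{b}>0$ (forced by $s\in\mathcal{S}_{10}(7,7)$ with $\gamma(s)=7$). Relaxing both to equality by moving along the hyperplane $\overline{a}+2\overline{b}=0$ should permit a nontrivial perturbation under which the objective stays flat rather than strictly decreasing; all other maximizing-sample inequalities, being linear combinations that factor through $\overline{a}+2\overline{b}$ and $\overline{x}+3\overline{y}$ (respectively $\overline{x}+5\overline{y}$), should then also be satisfied weakly.

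Concretely, I would take $\mathbf{\Delta}\in L(\mathfrak{S}^{fix}_{10})$ determined by $(\overline{x}(\mathbf{\Delta}),\overline{y}(\mathbf{\Delta}),\overline{a}(\mathbf{\Delta}),\overline{b}(\mathbf{\Delta}))=(18,-6,-2,1)$. This tuple is consistent with the two linear constraints $\overline{x}+9\overline{a}=0$ and $\overline{y}+\overline{a}+8\overline{b}=0$ defining $L(\mathfrak{S}^{fix}_{10})$, and it enforces $\overline{a}+2\overline{b}=0$. Invoking Lemma~\ref{val-dif-sam} at each $s$ in the maximizing set $\mathcal{S}_{10}^{max}(\mathcal{C}^*_{10})=\mathcal{S}_{10}(7,7)\cup\mathcal{S}_{10}(8,8)$ (its sign-preservation hypothesis is met by Lemma~\ref{signal-local}, since $7,8<10$), a direct substitution yields $h_s(\varepsilon\mathbf{\Delta})=-2(7-\gamma(s))\varepsilon\le 0$ when $s\in\mathcal{S}_{10}(7,7)$ and $h_s(\varepsilon\mathbf{\Delta})=-12\gamma(s)\varepsilon\le 0$ when $s\in\mathcal{S}_{10}(8,8)$, for every sufficiently small $\varepsilon>0$. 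Hence no maximizing $g_s$ strictly increases under the perturbation.

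To finish, I would combine this with Lemma~\ref{sec-max-value}, which gives a strict gap between the maximum $\mathcal{O}_{\mathcal{S}_{10}}=\tfrac{28}{25}$ and the second-largest value of $g_s(\mathcal{C}^*_{10})$. Since $\mathcal{S}_{10}$ is finite and each $g_s$ is continuous, there exists $\varepsilon_0>0$ such that for every $0<\varepsilon\le\varepsilon_0$, all non-maximizing samples still satisfy $g_s(\mathcal{C}^*_{10}+\varepsilon\mathbf{\Delta})<\tfrac{28}{25}$ while the maximizing samples satisfy $g_s(\mathcal{C}^*_{10}+\varepsilon\mathbf{\Delta})\le\tfrac{28}{25}$ by the previous paragraph; hence $g_{\mathcal{S}_{10}}(\mathcal{C}^*_{10}+\varepsilon\mathbf{\Delta})\le\tfrac{28}{25}$. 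Since $\mathcal{C}^*_{10}$ is an optimal point by Theorem~\ref{optimal-condition-theorem}, equality must hold, so $\mathcal{C}^*_{10}+\varepsilon\mathbf{\Delta}\ne\mathcal{C}^*_{10}$ furnishes a second optimal point.

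The main obstacle I anticipate is purely bookkeeping: securing a single $\varepsilon_0>0$ that works simultaneously (a) for the sign-preservation hypothesis of Lemma~\ref{val-dif-sam} at each of the finitely many maximizing $s$, and (b) for the continuity bound that keeps every non-maximizing $g_s(\mathcal{C}^*_{10}+\varepsilon\mathbf{\Delta})$ strictly below $\tfrac{28}{25}$. Both are routine finiteness-plus-continuity arguments, but the proof should be stated with care since the neighborhoods produced by Lemma~\ref{val-dif-sam} a priori depend on $s$.
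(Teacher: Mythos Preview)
Your proposal is correct and follows essentially the same approach as the paper: both perturb $\mathcal{C}^*_{10}$ along the direction in $L(\mathfrak{S}^{fix}_{10})$ determined by $\overline{a}+2\overline{b}=0$ (the paper takes $\overline{a}=-2\delta$, $\overline{b}=\delta$, which is your $\mathbf{\Delta}$ scaled by $\delta$). Your write-up is in fact more complete than the paper's terse sketch, since you explicitly verify via Lemma~\ref{val-dif-sam} that $h_s\le 0$ on the maximizing set and invoke the gap from Lemma~\ref{sec-max-value} for the non-maximizing samples, whereas the paper simply asserts that the relaxed system has a solution and that a small enough $\delta$ yields a second optimal point.
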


\begin{proof}
It suffices to find an optimal point for $\mathcal P_{\mathcal S_{10}}$ that is different from $\mathcal{C}_{10}^{**}$. To this end, consider the system (\ref{Eq-2-system-simp-1}) and replace ``$<$'' and ``$>$'' by ``$\leq$'' and ``$\geq$'', respectively. Then, we have
\begin{equation}\label{Eq-non-unique}
\left\{
  \begin{array}{ll}
    \overline{a}+\overline{b}\leq 0, \\
    \overline{a}+2\overline{b}\leq 0, \\
    \overline{a}+2\overline{b}\geq 0,\\
    7\overline{a}+20\overline{b}\geq 0.\\
    \end{array}
\right.
\end{equation}
Note that the above system has solution
\begin{equation}\label{Eq-non-unique}
\left\{
  \begin{array}{ll}
    \overline{b}\geq0, \\
    \overline{a}+2\overline{b}= 0. \\
  \end{array}
\right.
\end{equation}
Now choosing $\delta > 0$ small enough and setting $b=\delta$, we obtain an optimal point $\mathcal C=\mathcal C^*_{10}+\mathbf{\Delta}$ different from $\mathcal{C}_{10}^{**}$ with $\overline{a}(\mathbf{\Delta})=-2\delta$, $\overline{b}(\mathbf{\Delta})=\delta$, $\overline{y}(\mathbf{\Delta})=-6\delta$, $\overline{x}(\mathbf{\Delta})=1+18\delta$.
\end{proof}

\subsection{Routing Rate} \label{routing-rate}

By Theorem~\ref{basic lemma}, the optimal solution $\mathcal{C}_{k}^{**}$, $k=3, 4, \dots, 10$, gives an explicit construction of multi-flows for the corresponding $k$-pair strongly reachable network. More precisely, translating the results in this section, we have constructed multi-flows of rate $(\frac{11}{12}, \dots, \frac{11}{12})$ for $k=3, 4$, rate $(\frac{9}{10}, \dots, \frac{9}{10})$ for $k=5, 6, 7$, rate $(\frac{67}{75}, \dots, \frac{67}{75})$ for $k=8$, rate $(\frac{206}{231}, \dots, \frac{206}{231})$ for $k=9$, rate $(\frac{25}{28}, \dots, \frac{25}{28})$ for $k=10$, each of which further gives a lower bound on the corresponding $\mathbf{R}_r(\overline{\mathcal{N}})$. To the best of our knowledge, the aforementioned rates are the largest to date.

\section{Concluding Remarks} \label{section-conclusion}

We attack the Langberg-M\'{e}dard multiple unicast conjecture via an optimization approach. For a closely related optimization problem $\mathcal P_{\mathcal S_k}$ with optimal value $\mathcal{O}_{\mathcal{S}_k}$, we analyze the asymptotics of $\{\mathcal{O}_{\mathcal{S}_k}\}$ and explicit solve $\mathcal P_{\mathcal S_k}$ for $k=1, 2, \dots, 10$. More precisely, we prove that $\lim_{k \to \infty } \mathcal{O}_{\mathcal{S}_k} = 9/8$, and establish the first $10$ terms of $\{\mathcal{O}_{\mathcal S_k}\}$ $1,1,\frac{12}{11}, \frac{12}{11}, \frac{10}{9}, \frac{10}{9}, \frac{10}{9}, \frac{75}{67}, \frac{231}{206}, \frac{28}{25}$, which give the largest feasible routing rate to date for the corresponding strongly reachable networks.

For any $k \neq 1, 2, 6, 10$, there exists a perturbation promising to give better solutions than $\mathcal{C}_k^*$, a sequence of asymptotically optimal solutions to $\mathcal{P}_{\mathcal{S}_k}$, and moreover, a delicate perturbation analysis in Sections~\ref{section-perturbation} and~\ref{section-perturb} gives the exact optimal solutions for $k \leq 10$. Nevertheless, it remains to be seen whether the perturbation approach can be used to solve $\mathcal{P}_{\mathcal{S}_k}$ for all $k$. The major hurdle for the case of larger $k$ is the drastically increasing complexity needed for the analysis, which is already prohibitive for $k=11$. Here we remark that the optimization problem appears to be ``trickier'' than previously thought. For a quick example, one would be tempted to think that the sequence $\{\mathcal{O}_{\mathcal{S}_k}\}$ should be monotonically increasingly. This, however, is not true, since our results actually indicate that $\mathcal{O}_{\mathcal S_9}>\mathcal{O}_{\mathcal S_{10}}$.

\printindex

\section*{Appendices} \appendix

\section{Proof of Lemma~\ref{valid-k-3}} \label{proof-valid-k-3}

By the definition of $\mathcal C^*_k$, it can be readily verified that for any $3$-sample $s$ and any $1\leq\ell\leq3$, \begin{equation}\label{g^l_s}
g^{(\ell)}_s(\mathcal C^*_3)=\frac{3}{9}m_{Ind_s}(\ell)-\frac{1}{9}\alpha(s).
\end{equation}
And note that for all $s\in \mathcal S_3$, $0\leq m_{Ind_s}(\ell)\leq4$ and
$$
h_{s}^{(\ell)}(\varepsilon \mathbf{\Delta}_k^*)=\varepsilon\cdot\sum_{(i,j)\in s}\delta^{(\ell)}_{i,j},
$$
where $\delta^{(\ell)}_{i,j}$ is defined as in $\mathbf{\Delta}_k^*=((\delta^{(1)}_{i,j}), (\delta^{(2)}_{i,j}), \dots, (\delta^{(k)}_{i,j}))$.

By Definition~\ref{valid-epsilon}, any $\varepsilon>0$ is $g^{(1)}_s$-valid since $g^{(1)}_s(\mathcal C^*_3)=0$. Recall from Example~\ref{perturbation-base-k} that $\overline{x}(\mathbf{\Delta}_3^*)=-2$, $\overline{y}(\mathbf{\Delta}_3^*)=-5$, $\overline{a}(\mathbf{\Delta}_3^*)=1$ and $\overline{b}(\mathbf{\Delta}_3^*)=4$. Then, by definition, it can be easily verified that
$$
\{g^{(\ell)}_s(\mathcal C^*_3)\|s\in \mathcal{S}_3, 1\leq\ell\leq3\} \subseteq \left\{\frac{-2}{9},\frac{-1}{9},0, \frac{1}{9},\dots, \frac{8}{9},1\right\}.
$$
We now deal with the following cases:
\begin{itemize}
  \item If $g^{(\ell)}_s(\mathcal C^*_3)=\frac{-2}{9}$, which implies $m_{Ind_s}(\ell)=0$ and $\alpha(s)=2$ (see Equation (\ref{g^l_s})), then $\varepsilon$ is $g^{(\ell)}_s$-valid if and only if
$$
\left\{
  \begin{array}{ll}
    2\overline{y}(\mathbf{\Delta}_3^*)\varepsilon\leq\frac{2}{9},\\
    2\overline{b}(\mathbf{\Delta}_3^*)\varepsilon\leq\frac{2}{9},\\
    (\overline{y}(\mathbf{\Delta}_3^*)+\overline{b}(\mathbf{\Delta}_3^*))\varepsilon\leq\frac{2}{9}.
  \end{array}
\right.
$$
Recalling from Example~\ref{perturbation-base-k} that $\overline{x}(\mathbf{\Delta}_3^*)=-2$, $\overline{y}(\mathbf{\Delta}_3^*)=-5$, $\overline{a}(\mathbf{\Delta}_3^*)=1$ and $\overline{b}(\mathbf{\Delta}_3^*)=4$, we deduce that for this case $\varepsilon>0$ is $g^{(\ell)}_s$-valid if and only if $\varepsilon\leq\frac{1}{36}$.

  \item If $g^{(\ell)}_s(\mathcal C^*_3)=\frac{-1}{9}$, which implies $m_{Ind_s}(\ell)=0$ and $\alpha(s)=1$, then $\varepsilon$ is $g^{(\ell)}_s$-valid if and only if
$$
\left\{
  \begin{array}{ll}
    \overline{y}(\mathbf{\Delta}_3^*)\varepsilon\leq\frac{1}{9},\\
    \overline{b}(\mathbf{\Delta}_3^*)\varepsilon\leq\frac{1}{9}.
   \end{array}
\right.
$$
Similarly, we deduce that for this case $\varepsilon>0$ is $g^{(\ell)}_s$-valid if and only if $\varepsilon\leq\frac{1}{36}$.

\item If $g^{(\ell)}_s(\mathcal C^*_3)=0$, then, by definition, any $\varepsilon>0$ is $g^{(\ell)}_s$-valid.

\item If $g^{(\ell)}_s(\mathcal C^*_3)=\frac{1}{9}$, which implies $m_{Ind_s}(\ell)=1$ and $\alpha(s)=2$, then $\varepsilon$ is $g^{(\ell)}_s$-valid if and only if
$$
\left\{
  \begin{array}{ll}
    -(\overline{a}(\mathbf{\Delta}_3^*)+\overline{y}(\mathbf{\Delta}_3^*))\varepsilon\leq\frac{1}{9},\\
    -(\overline{a}(\mathbf{\Delta}_3^*)+\overline{b}(\mathbf{\Delta}_3^*))\varepsilon\leq\frac{1}{9}.
   \end{array}
\right.
$$
Straightforward computations yield that that, for this case, $\varepsilon>0$ is $g^{(\ell)}_s$-valid if and only if $\varepsilon\leq\frac{1}{36}$.

\item If $g^{(\ell)}_s(\mathcal C^*_3)=\frac{2}{9}$, which implies $m_{Ind_s}(\ell)=1$ and $\alpha(s)=1$, then $\varepsilon$ is $g^{(\ell)}_s$-valid if and only if $-\overline{a}(\mathbf{\Delta}_3^*)\varepsilon\leq\frac{2}{9}$, i.e.,
any $\varepsilon>0$ is $g^{(\ell)}_s$-valid.

\item If $g^{(\ell)}_s(\mathcal C^*_3)=\frac{3}{9}$, which implies $m_{Ind_s}(\ell)=2$ and $\alpha(s)=3$, then $\varepsilon$ is $g^{(\ell)}_s$-valid if and only if
$$
\left\{
  \begin{array}{ll}
    -(2\overline{a}(\mathbf{\Delta}_3^*)+\overline{y}(\mathbf{\Delta}_3^*))\varepsilon\leq\frac{3}{9}, \\
    -(2\overline{a}(\mathbf{\Delta}_3^*)+\overline{b}(\mathbf{\Delta}_3^*))\varepsilon\leq\frac{3}{9},\\
 -(\overline{x}(\mathbf{\Delta}_3^*)+2\overline{y}(\mathbf{\Delta}_3^*))\varepsilon\leq\frac{3}{9}, \\
 -(\overline{x}(\mathbf{\Delta}_3^*)+2\overline{b}(\mathbf{\Delta}_3^*))\varepsilon\leq\frac{3}{9}, \\
 -(\overline{x}(\mathbf{\Delta}_3^*)+\overline{b}(\mathbf{\Delta}_3^*)
 +\overline{y}(\mathbf{\Delta}_3^*))\varepsilon\leq\frac{3}{9}. \\
   \end{array}
\right.
$$
It then follows that, for this case, $\varepsilon>0$ is $g^{(\ell)}_s$-valid if and only if $\varepsilon\leq\frac{3}{9\max\{3,12\}}=\frac{1}{36}$.

\item If $g^{(\ell)}_s(\mathcal C^*_3)=\frac{i}{9}$, where $i \geq 4$, then, similarly as above, $\varepsilon$ is $g^{(\ell)}_s$-valid if and only if $\varepsilon>0$ satisfies the following systems of inequalities:
$$
\left\{
  \begin{array}{ll}
    d_1\varepsilon\leq\frac{i}{9},\\
    d_2\varepsilon\leq\frac{i}{9},\\
    \cdots\\
    d_r\varepsilon\leq\frac{i}{9}, \\
 \end{array}
\right.
$$
for some integer $r$. It is easy to see that for all $1\leq j \leq r$, $d_j\leq-(\overline{x}(\mathbf{\Delta}_3^*)+2\overline{y}(\mathbf{\Delta}_3^*))=12$. So, for this case, we have $\varepsilon\leq\frac{1}{36}$ is $g^{(\ell)}_s$-valid.
\end{itemize}
Combining all the discussions as above, we conclude that $\varepsilon>0$ is $g_{\mathcal{S}_k}$-valid if and only if
$\varepsilon\leq\frac{1}{36}$, which completes the proof.

\section{Proof of Lemma~\ref{valid-k-4}} \label{proof-valid-k-4}

Firstly, note that for any $4$-sample $s$ and any $1\leq\ell\leq4$, \begin{equation}\label{g^l_s-4}
g^{(\ell)}_s(\mathcal C^*_4)=\frac{4}{16}m_{Ind_s}(\ell)-\frac{1}{16}\alpha(s),
\end{equation}
whence we have
$$
\{g^{(\ell)}_s(\mathcal C^*_4)\|s \in \mathcal S_4, 1\leq\ell\leq4\} \subseteq \left\{\frac{-3}{16},\frac{-2}{16}, \dots, \frac{15}{16},1 \right\}.
$$
We now consider the following cases:
\begin{itemize}
  \item If $g^{(\ell)}_s(\mathcal C^*_4)=\frac{-3}{16}$, which implies $m_{Ind_s}(\ell)=0$ and $\alpha(s)=3$ (see Equation (\ref{g^l_s-4})), then $\varepsilon$ is $g^{(\ell)}_s$-valid if and only if
$$
\left\{
  \begin{array}{ll}
    3\overline{y}(\mathbf{\Delta}_4^*)\varepsilon\leq\frac{3}{16},\\
    3\overline{b}(\mathbf{\Delta}_4^*)\varepsilon\leq\frac{3}{16},\\
    (2\overline{y}(\mathbf{\Delta}_4^*)+\overline{b}(\mathbf{\Delta}_4^*))\varepsilon\leq\frac{3}{16},\\
    (\overline{y}(\mathbf{\Delta}_4^*)+2\overline{b}(\mathbf{\Delta}_4^*))\varepsilon\leq\frac{3}{16}.
  \end{array}
\right.
$$
Noting from Example~\ref{perturbation-base-k} that $\overline{x}(\mathbf{\Delta}_3)=3$, $\overline{y}(\mathbf{\Delta}_4^*)=-5$, $\overline{a}(\mathbf{\Delta}_4^*)=-1$ and $\overline{b}(\mathbf{\Delta}_4^*)=3$, we have that, for this case, $\varepsilon>0$ is $g^{(\ell)}_s$-valid if and only if $3\overline{b}(\mathbf{\Delta}_4^*)\varepsilon\leq\frac{3}{16}$, i.e., $\varepsilon\leq\frac{1}{48}$.

  \item If $g^{(\ell)}_s(\mathcal C^*_4)=\frac{-2}{16}$, which implies $m_{Ind_s}(\ell)=0$ and $\alpha(s)=2$, then $\varepsilon$ is $g^{(\ell)}_s$-valid if and only if
$$
\left\{
  \begin{array}{ll}
    2\overline{y}(\mathbf{\Delta}_4^*)\varepsilon\leq\frac{2}{16},\\
    2\overline{b}(\mathbf{\Delta}_4^*)\varepsilon\leq\frac{2}{16},\\
    \overline{y}(\mathbf{\Delta}_4^*)\varepsilon+\overline{b}(\mathbf{\Delta}_4^*)\varepsilon\leq\frac{2}{16}.
   \end{array}
\right.
$$
Similarly as above, we deduce that, for this case, $\varepsilon>0$ is $g^{(\ell)}_s$-valid if and only $2\overline{b}(\mathbf{\Delta}_4^*)\varepsilon\leq\frac{2}{16}$, i.e., $\varepsilon\leq\frac{1}{48}$.

\item If $g^{(\ell)}_s(\mathcal C^*_4)=\frac{-1}{16}$, which implies $m_{Ind_s}(\ell)=0$ and $\alpha(s)=1$, then $\varepsilon$ is $g^{(\ell)}_s$-valid if and only if $\overline{b}(\mathbf{\Delta}_4^*)\varepsilon\leq\frac{1}{16}$, i.e., $\varepsilon\leq\frac{1}{48}$.

\item If $g^{(\ell)}_s(\mathcal C^*_4)=0$, then by definition, any $\varepsilon>0$ is $g^{(\ell)}_s$-valid.

\item If $g^{(\ell)}_s(\mathcal C^*_4)=\frac{1}{16}$, which implies $m_{Ind_s}(\ell)=1$ and $\alpha(s)=3$, then $\varepsilon$ is $g^{(\ell)}_s$-valid if and only if
$$
\left\{
  \begin{array}{ll}
    -(\overline{a}(\mathbf{\Delta}_4^*)+2\overline{y}(\mathbf{\Delta}_4^*))\varepsilon\leq\frac{1}{16}, \\
    -(\overline{a}(\mathbf{\Delta}_4^*)+2\overline{b}(\mathbf{\Delta}_4^*))\varepsilon\leq\frac{1}{16}, \\
    -(\overline{a}(\mathbf{\Delta}_4^*)+\overline{b}(\mathbf{\Delta}_4^*)
    +\overline{y}(\mathbf{\Delta}_4^*))\varepsilon\leq\frac{1}{16}.
   \end{array}
\right.
$$
We then deduce that, for this case, $\varepsilon>0$ is $g^{(\ell)}_s$-valid if and only if $-(\overline{a}(\mathbf{\Delta}_4^*)+2\overline{y}(\mathbf{\Delta}_4^*))\varepsilon\leq\frac{1}{16}$, i.e., $\varepsilon\leq\frac{1}{176}$.

\item If $g^{(\ell)}_s(\mathcal C^*_4)=\frac{2}{16}$, which implies $m_{Ind_s}(\ell)=1$ and $\alpha(s)=2$, then $\varepsilon$ is $g^{(\ell)}_s$-valid if and only if
$$
\left\{
  \begin{array}{ll}
    -(\overline{a}(\mathbf{\Delta}_4^*)+\overline{y}(\mathbf{\Delta}_4^*))\varepsilon\leq\frac{2}{16}, \\
    -(\overline{a}(\mathbf{\Delta}_4^*)+\overline{b}(\mathbf{\Delta}_4^*))\varepsilon\leq\frac{2}{16}.
  \end{array}
\right.
$$
We then infer than $\varepsilon>0$ is $g^{(\ell)}_s$-valid if and only if $-(\overline{a}(\mathbf{\Delta}_4^*)+\overline{y}(\mathbf{\Delta}_4^*))\varepsilon\leq\frac{2}{16}$, i.e., $\varepsilon\leq\frac{1}{48}$.

\item If $g^{(\ell)}_s(\mathcal C^*_4)=\frac{3}{16}$, which implies $m_{Ind_s}(\ell)=1$ and $\alpha(s)=1$, then $\varepsilon$ is $g^{(\ell)}_s$-valid if and only if $-\overline{a}(\mathbf{\Delta}_4^*)\varepsilon\leq\frac{3}{16}$, i.e., any $\varepsilon>0$ is valid.

\item If $g^{(\ell)}_s(\mathcal C^*_4)=\frac{i}{16}$, where $i=4, 5, \dots, 16$, then similarly as above, $\varepsilon$ is $g^{(\ell)}_s$-valid if and only if $\varepsilon>0$ satisfies the following system of inequalities,
$$
\left\{
  \begin{array}{ll}
    d_1\varepsilon\leq\frac{i}{16},\\
    d_2\varepsilon\leq\frac{i}{16},\\
    \cdots\\
    d_r\varepsilon\leq\frac{i}{16},\\
 \end{array}
\right.
$$
for some integer $r$. It is easy to see that for all $1\leq j \leq r$, $d_j \leq -(\overline{a}(\mathbf{\Delta}_4^*)+3\overline{y}(\mathbf{\Delta}_4^*))=16$. It then follows that, for this case, $\varepsilon\leq\frac{1}{64}$ is $g^{(\ell)}_s$-valid.
\end{itemize}
Combining all the discussions above, we conclude that $\varepsilon>0$ is $g_{\mathcal{S}_k}$-valid if and only if $\varepsilon \leq \frac{1}{176}$, which completes the proof.

\section{Proof of Lemma~\ref{balance-k-4}} \label{proof-balance-k-4}

For any $\mathcal C\in N(\mathcal C^{**}_4, \varepsilon)$, we write
$$
\mathbf{\Delta}=\mathcal C-\mathcal C^{**}_4 = \left(\left( \delta^{(1)}_{i, j}\right), \left( \delta^{(2)}_{i, j}\right), \left( \delta^{(3)}_{i, j}\right), \left( \delta^{(4)}_{i, j}\right)\right), \quad h_s(\mathbf{\Delta}) = g_s(\mathcal C)-g_s(\mathcal C^{**}_4).
$$
By Lemmas~\ref{sum-H_k(a,a,a)} and~\ref{sum-H_k(a,a,0)}, we have
$$
\sum_{s\in \mathcal{S}_4(3,3,0)} h_s(\mathbf{\Delta})=-16\sum_{i=1}^4\delta^{(i)}_{i,i}, \quad \sum_{s\in \mathcal{S}_4(3,3,3)} h_s(\mathbf{\Delta})=3\sum_{i=1}^4\delta^{(i)}_{i,i}+\sum_{i\neq j}\delta^{(j)}_{i,i}.
$$

We then have the following cases:

$\blacktriangleright$ For the samples in $\mathcal{S}_4(3,3,1)$, we write

\begin{equation}\label{sum-S_4(3,3,1)}
\sum_{s\in \mathcal{S}_4(3,3,1)} h_s(\mathbf{\Delta})=\sum_{\ell=1}^4\sum_{i=1}^4\sum_{j=1}^4 k^{(\ell)}_{i,j} \delta^{(\ell)}_{i,j},
\end{equation}
where the coefficients $k^{(\ell)}_{i,j}$ can be determined as follows. First, we consider $(i,i) \in [4]\times[4]$; and for simplicity only, assume $i=1$. There are 12 samples from $\mathcal{S}_4(3,3,1)$ containing $(1,1)$, more precisely, those samples from
$$
\{\{(1,1),(i,2), (j,3)\}\|i=1,3; j=1,2\} \cup \{\{(1,1),(i,2), (j,4)\}\|i=1,4; j=1,2\}
$$
$$
\cup \{\{(1,1),(i,3), (j,4)\}\|i=1,4; j=1,3\}.
$$
By (\ref{canonical-signal-property}), it is easy to see that $k^{(1)}_{1,1}=12$ and  $k^{(2)}_{1,1}=k^{(3)}_{1,1}=k^{(4)}_{1,1}=4$. The other coefficients of $\delta^{(\ell)}_{i,i}$ can be obtained similarly as
\begin{itemize}
  \item $k^{(i)}_{i,i}=12$, $1\leq i\leq 4$;
  \item $k^{(j)}_{i,i}=4$, $i\neq j$.
\end{itemize}
We now consider $(i,j)\in [4]\times[4]$ for $i\neq j$; and for simplicity only, assume $(i, j)= (2,1)$. There are 8 samples of $\mathcal{S}_4(3,3,1)$ containing $(2,1)$, more precisely, those samples from
$$
\{\{(2,1),(2,2),(i,3)\}\|i=1,2\} \cup \{\{(2,1),(i,2),(3,3)\}\|i=1,3\}
$$
$$
\cup \{\{(2,1),(2,2),(i,4)\}\|i=1,2\} \cup \{\{(2,1),(i,2),(4,4)\}\|i=1,4\}.
$$
By (\ref{canonical-signal-property}), it is easy to see that $k^{(1)}_{2,1}=k^{(2)}_{2,1}=8$ and  $k^{(3)}_{2,1}=k^{(4)}_{2,1}=0$. The other coefficients of $\delta^{(\ell)}_{i,j}$ can be obtained similarly as
\begin{itemize}
  \item $k^{(i)}_{i,j}=k^{(i)}_{j,i}=8$, $1\leq i,j\leq 4$;
  \item $k^{(\ell)}_{i,j}=0$, if $i,j,\ell$ are distinct.
\end{itemize}
Finally, by (\ref{sum-column-row-k}), we have
$$
\sum_{s\in \mathcal{S}_4(3,3,1)} h_s(\mathbf{\Delta})=-4\sum_{i=1}^4\delta^{(i)}_{i,i}+4\sum_{i\neq j}\delta^{(j)}_{i,i}.
$$

$\blacktriangleright$ For the $81$ samples in $\mathcal{S}_4(4,4,0)$, as in the previous case, we write
\begin{equation}\label{sum-S_4(4,4,0)}
\sum_{s\in \mathcal{S}_4(4,4,0)} h_s(\mathbf{\Delta})=\sum_{\ell=1}^4\sum_{i=1}^4\sum_{j=1}^4 k^{(\ell)}_{i,j} \delta^{(\ell)}_{i,j}.
\end{equation}
Note that for $(i,i) \in [4] \times [4]$, since there is no sample containing $(i,i)$, $k^{(\ell)}_{i,i}=0$. We then consider $(i,j)\in [4]\times[4]$ for $i\neq j$; and for simplicity only, assume $(i, j)= (2,1)$. There are 27 samples of $\mathcal{S}_4(4,4,0)$ containing $(2,1)$, more precisely, those samples from
\begin{equation*}
\begin{split}
\{\{(2,1),(i,2), (j,3),(\ell,4)\}\|i\neq 2,j\neq 3,\ell\neq 4; 1\leq i,j,\ell \leq 4\}.
\end{split}
\end{equation*}
It is easy to verify that  $k^{(1)}_{2,1}=k^{(2)}_{2,1}=k^{(3)}_{2,1}=k^{(4)}_{2,1}=27$, and the other coefficients of $\delta^{(\ell)}_{i,j}$ can be obtained similarly. All in all, we have $k^{(\ell)}_{i,j}=27$, for $i\neq j$. Hence, we have
$$
\sum_{s\in \mathcal{S}_4(4,4,0)} h_s(\mathbf{\Delta})=-27\sum_{i=1}^4\sum_{j=1}^4\delta^{(i)}_{j,j}.
$$

Combining the above results, we have
\begin{equation}\label{sum-S_4(C**)}
\frac{1}{8}\sum_{s\in \mathcal{S}_3(3,3,0)} h_s(\mathbf{\Delta})+\frac{1}{4}\sum_{s\in \mathcal{S}_4(3,3,1)} h_s(\mathbf{\Delta})+2\sum_{s\in \mathcal{S}_4(3,3,3)} h_s(\mathbf{\Delta})+\frac{1}{9}\sum_{s\in \mathcal{S}_4(4,4,0)} h_s(\mathbf{\Delta})=0,
\end{equation}
which completes the proof.


\begin{thebibliography}{}

\bibitem{Ahlswede00}
R.~Ahlswede, N.~Cai, S.~-Y.~R.~Li and R.~Yeung, ``Network information flow,'' \emph{IEEE Trans. Inf. Theory}, vol. 46, no. 4, pp. 1204-1216, Aug. 2000.

%\bibitem{li03}
%S.~-Y. R. ~Li, R.~Yeung and N.~Cai, ``Linear network coding,'' \emph{IEEE Trans. Inf. Theory}, vol. 49, no. 2, pp. 371-381, %Feb. 2003.

\bibitem{Ngai04}
C.~K.~Ngai and R.~W.~Yeung, ``Network coding gain of combination networks,'' in \emph{Proc. ITW}, 2004.

\bibitem{Jaggi05}
S.~Jaggi, P.~Sanders, P.~A.~Chou, M.~Effros, S.~Egner, K.~Jain and L.~M.~G.~Tolhuizen, ``Polynomial time algorithms for multicast network code construction,'' \emph{IEEE Trans. Inf. Theory}, vol. 51, no. 6, pp. 1973-1982, Jun. 2005.

\bibitem{Agarwal04}
A.~Agarwal and M.~Charikar, ``On the advantage of network coding for improving network throughput,'' in \emph{Proc. ITW}, 2004.

\bibitem{Li041}
Z.~Li and B.~Li, ``Network coding in undirected networks,'' in \emph{Proc. 38th CISS}, 2004.

\bibitem{Li09}
Z.~Li, B.~Li and L.~C.~Lau, ``A constant bound on throughput improvement of multicast network coding in undirected networks,'' \emph{IEEE Trans. Inf. Theory}, vol. 55, no. 3, pp. 997-1015, Mar. 2009.

\bibitem{Maheshwar12}
S.~Maheshwar, Z.~Li and B.~Li, ``Bounding the coding advantage of combination network coding in undirected networks,'' \emph{IEEE Trans. Inf. Theory}, vol. 58, no. 2, pp. 570-584, Feb. 2012.

\bibitem{Li042}
Z.~Li and B.~Li, ``Network coding: The case of multiple unicast sessions,'' in \emph{Proc. 42nd Annual Allerton}, 2004.

\bibitem{Harvey04}
N.~J.~A.~Harvey, R.~D.~Kleinberg and A.~R.~Lehman, ``Comparing network coding with multicommodity flow for the k-pairs communication problems,'' {\em Tech. Rep. Comput. Sci. and Artific. Intell. Lab., MIT}, 2004.

\bibitem{Harv06}
N.~Harvey, R.~Kleinberg and A.~Lehman, ``On the capacity of information networks,'' \emph{IEEE Trans. Inf. Theory}, vol. 52, no. 6, pp. 2345-2364, Jun. 2006.

\bibitem{Jain06}
K.~Jain, V.~V.~Vazirani and G.~Yuval, ``On the capacity of multiple unicast sessions in undirected graphs,'' \emph{IEEE/ACM Trans. Networking}, vol. 14, pp. 2805-2809, Jun. 2006.

\bibitem{Xiahou12}
T.~Xiahou, C.~Wu, J.~Huang and Z.~Li, ``A geometric framework for investigating the multiple unicast network coding conjecture,'' in \emph{Proc. NetCod}, 2012.

\bibitem{Zongpeng12}
Z.~Li and C.~Wu, ``Space information flow: multiple unicast,'' in \emph{Proc. ISIT 2012}, July 1-6, 2012.

\bibitem{Yang14}
Y.~Yang, X.~Yin, X.~Chen, Y.~Yang and Z.~Li, ``A note on the multiple-unicast network coding conjecture,'' \emph{IEEE Communications Letters}, vol. 18, no. 5, pp. 869-872, May 2014.

\bibitem{Langberg09}
M.~Langberg and M.~M$\acute{e}$dard, ``On the multiple unicast network coding conjecture,'' in \emph{Proc. 47th Annual Allerton}, 2009.

\bibitem{Schrijver03}
A.~Schrijver, ``Combinatorial Optimization,'' Springer-Verlag, 2003.

\bibitem{CH14}
K.~Cai and G.~Han, ``On the solvability of three-pair networks with common bottleneck links,'' in \emph{Proc. ITW}, 2014.

\bibitem{CH15}
K.~Cai and G.~Han, ``On network coding advantage for multiple unicast networks,'' in \emph{Proc. ISIT}, 2015.

\bibitem{CH16}
K.~Cai and G.~Han, ``Coding advantage in communications among peers,'' in \emph{Proc. ISIT}, 2016.

\bibitem{CH17}
K.~Cai and G.~Han, ``On the Langberg-M\'{e}dard multiple unicast conjecture,'' \emph{Journal of Combinatorial Optimization}, vol. 34, no. 4, pp. 1114-1132, 2017.

\bibitem{CH18}
K.~Cai and G.~Han, ``On the Langberg-M\'{e}dard $k$-unicast conjecture with $k=3,4$,'' in \emph{Proc. ISIT}, 2018.
\end{thebibliography}
\end{document}